\newtheorem*{rep@theorem}{\rep@title}
\newcommand{\newreptheorem}[2]{%
\newenvironment{rep#1}[1]{%
 \def\rep@title{#2 \ref{##1}}%
 \begin{rep@theorem}}%
 {\end{rep@theorem}}}
\renewenvironment{widetext@grid}{%
  \par\ignorespaces
  \setbox\widetext@top\vbox{%
   \vskip15\p@
   \hb@xt@\hsize{%
    \leaders\hrule\hfil
    \vrule\@height6\p@
   }%
   \vskip6\p@
  }%
  \setbox\widetext@bot\hb@xt@\hsize{%
    \vrule\@depth6\p@
    \leaders\hrule\hfil
  }%
  \onecolumngrid
%  \dimen@\ht\widetext@top\advance\dimen@\dp\widetext@top
%  \cleaders\box\widetext@top\vskip\dimen@
  \let\set@footnotewidth\set@footnotewidth@ii
}{%
  \par
%  \setbox\widetext@bot\vbox{%
%   \hb@xt@\hsize{\hfil\box\widetext@bot}%
%   \vskip14\p@
%  }%
%  \dimen@\ht\widetext@bot\advance\dimen@\dp\widetext@bot
%  \cleaders\box\widetext@bot\vskip\dimen@
  \twocolumngrid\global\@ignoretrue
  \@endpetrue
}%
\newtheorem{theo}{Theorem} 
\newtheorem{lemma}[theo]{Lemma}
\newtheorem{cor}[theo]{Corollary}
\newcommand{\be}{\begin{equation}}
\newcommand{\ee}{\end{equation}}
\def\multiset#1#2{\ensuremath{\left(\kern-.3em\left(\genfrac{}{}{0pt}{}{#1}{#2}\right)\kern-.3em\right)}}
\renewenvironment{widetext@grid}{%
  \par\ignorespaces
  \setbox\widetext@top\vbox{%
   \vskip15\p@
   \hb@xt@\hsize{%
    \leaders\hrule\hfil
    \vrule\@height6\p@
   }%
   \vskip6\p@
  }%
  \setbox\widetext@bot\hb@xt@\hsize{%
    \vrule\@depth6\p@
    \leaders\hrule\hfil
  }%
  \onecolumngrid
%  \dimen@\ht\widetext@top\advance\dimen@\dp\widetext@top
%  \cleaders\box\widetext@top\vskip\dimen@
  \let\set@footnotewidth\set@footnotewidth@ii
}{%
  \par
%  \setbox\widetext@bot\vbox{%
%   \hb@xt@\hsize{\hfil\box\widetext@bot}%
%   \vskip14\p@
%  }%
%  \dimen@\ht\widetext@bot\advance\dimen@\dp\widetext@bot
%  \cleaders\box\widetext@bot\vskip\dimen@
  \twocolumngrid\global\@ignoretrue
  \@endpetrue
}%
\begin{document}

\title{Composable security proof for continuous-variable quantum key distribution with coherent states}

\author{Anthony Leverrier}
\address{Inria, EPI SECRET, B.P. 105, 78153 Le Chesnay Cedex, France}
\email{anthony.leverrier@inria.fr}

\date{\today}

\begin{abstract}
We give the first \emph{composable} security proof for continuous-variable quantum key distribution with coherent states against collective attacks. Crucially, in the limit of large blocks the secret key rate converges to the usual value computed from the Holevo bound. 
Combining our proof with either the de Finetti theorem or the Postselection technique then shows the security of the protocol against general attacks, thereby confirming the long-standing conjecture that Gaussian attacks are optimal asymptotically in the composable security framework.

We expect that our parameter estimation procedure, which does not rely on any assumption about the quantum state being measured, will find applications elsewhere, for instance for the reliable quantification of continuous-variable entanglement in finite-size settings.
\end{abstract}

\maketitle

Quantum key distribution (QKD) is a cryptographic primitive that allows two distant parties, Alice and Bob, who have access to an insecure quantum channel and an authenticated classical channel, to distill a secret key. QKD has spurred a lot of interest in the past decades because it is arguably the first application of the field of quantum information to reach commercial maturity \cite{SBC09}.
Despite a lot of effort invested in the theoretical analysis of QKD protocols, \emph{composable security} \cite{can01,MR09} has only been established for a handful of protocols, for instance BB84 \cite{BB84}. This major achievement is the latest step in a series of more and more refined security proofs and improved bounds for the secret key rates. More precisely, composable security proofs have successively used an exponential version of the de Finetti theorem \cite{ren07}, the Postselection technique \cite{CKR09} and an entropic uncertainty principle \cite{TLG12}. 

The situation for continuous-variable (CV) protocols is much less advanced \cite{WPG12}. These protocols \cite{GG02, WLB04}, which do not require single-photon detectors, are particularly appealing in terms of implementation \cite{JKL13} but their security is still far from being completely understood. 
Recently, a composable security proof for a CV protocol was obtained \cite{FBB12, fur14,GHD14} from an entropic uncertainty principle \cite{BCF13} but the protocol requires the generation of squeezed states and is only moderately tolerant to losses. 
Other approaches to establish the security of a protocol typically consist of two independent steps: first a composable security proof valid against collective attacks, a restricted type of attacks where the quantum state shared by Alice and Bob protocol displays a tensor product structure, followed by an additional argument to obtain security against general attacks. 
These two steps have been partially completed in the case of CV protocols: a reduction from general to collective attacks is obtained via two possible techniques, namely a de Finetti theorem \cite{RC09}, and the Postselection technique \cite{LGR13}, the latter technique being more efficient but at the price of adding an unpractical symmetrization step to the protocol (The Postselection technique should not be mistaken with the postselection of data in certain protocols \cite{SRL02}.). Unfortunately, security against collective attacks has only been proved (via a Gaussian optimality argument \cite{WGC06}) in the asymptotic limit, which does not say anything about composable security \cite{GC06, NGA06,PBL08}. Note also that finite-size effects for CV QKD were partly explored in Ref.~\cite{LGG10}, but under a Gaussian attack assumption.

In this paper, we give the first composable security proof valid against collective attacks for CV QKD with coherent states (We note that coherent states can also be used in BB84 implementations, for instance in decoy-state protocols, and that composable security has been proved in Ref.~\cite{HN14}.)  and either direct or reverse reconciliation \cite{GVW03}. The Postselection technique then implies composable security against general attacks. Remarkably, the secret key rate is asymptotically equal to the one assuming a Gaussian attack, which is not the case for the proof based on the uncertainty principle. This is crucial for the distribution of keys over long distances \cite{JKL13}.

To prove this result, we develop a number of techniques including a tool for reliable tomography of the covariance matrix without making any assumption about the quantum state.
By performing the Parameter Estimation (PE) step after Error Correction (EC), we improve the estimation and are able to use almost all the raw data to distill the secret key. A similar strategy was also considered for BB84 in Ref.~\cite{FMC10}. Our only assumptions are that Alice and Bob have access to a classical authenticated channel and that their equipment is trusted: they can prepare coherent states and detect light with heterodyne detection. Our framework can easily incorporate imperfections either in the preparation or in the detection, as long as they are properly modeled. To keep the notations simple, we will however assume that the equipment of the legitimate parties is perfect.

{\bf Composable security}.--- An Entanglement-Based (EB) QKD protocol $\mathcal{E}$ is a Completely-Positive Trace-Preserving (CPTP) map $\mathcal{E}\colon \mathcal{H}_A \otimes \mathcal{H}_B  \rightarrow  \mathcal{S}_A \otimes \mathcal{S}_B \otimes \mathcal{C}$ that takes an arbitrary input state $\rho_{AB}$ shared by Alice and Bob and outputs for each party a classical string $S_A$ or $S_B$, and some public transcript $C$. For a CV protocol, both $\mathcal{H}_A$ and $\mathcal{H}_B$ correspond to infinite-dimensional Fock spaces, while $\mathcal{S}_A, \mathcal{S}_B$ and $\mathcal{C}$ describe classical registers.

A QKD protocol should be \emph{secure}, meaning that the output keys should be identical and secret \cite{TLG12}. It should also be \emph{robust}, i.e. output nontrivial keys if there is no active attack on the quantum channel. These are actually properties of the output state of the protocol, more precisely of $\rho_{S_A S_B E}$, which should hold for any input state. In this paper, we denote by $\rho_{\mathcal{H}}$ the marginal of the state $\rho$ restricted to subspace $\mathcal{H}$.
The subscript $E$ refers to the quantum register $\mathcal{H}_E$ of the adversary, and the final state is obtained by applying the map $\mathcal{E} \otimes \mathrm{id}_{\mathcal{H}_E}$ to an arbitrary purification $\Psi_{ABE}$ of $\rho_{AB}$.
A QKD protocol is called \emph{correct} if $S_A = S_B$ for any strategy of the adversary, that is, any initial state of the protocol $\Psi_{ABE}$. A protocol is $\epsilon_{\mathrm{cor}}$-correct if $\mathrm{Pr}[S_A \ne S_B] \leq \epsilon_{\mathrm{cor}}$.
Denote by $\mathcal{H}_{E'} = \mathcal{H}_E \otimes  \mathcal{C}$ the space accessible to the adversary (her quantum system $E$ and the public transcript $\mathcal{C}$). 
A key is called $\delta$-\emph{secret} if it is $\delta$-close to a uniformly distributed key that is uncorrelated with the eavesdropper:
\begin{align}
\frac{1}{2} \left\| \rho_{S_A^l  E'} - \omega_{l} \otimes \rho_{E'} \right\|_1 \leq \delta,
\end{align}
where $\rho_{S_AE'}^l$ is the state conditioned on the key length $l$ and  $\omega_{l}$ is the fully mixed state on classical strings of length $l$.
If the protocol aborts, it outputs a dummy key of size 0, which is automatically secret. A QKD protocol is called $\epsilon_{\mathrm{sec}}$-secret if for any attack strategy, it outputs $\delta$-secret keys with $(1-p_{\mathrm{abort}}) \delta \leq \epsilon_{\mathrm{sec}}$, where $p_{\mathrm{abort}}$ is the abort probability.
A QKD protocol is $\epsilon$-\emph{secure} if it is $\epsilon_{\mathrm{sec}}$-secret and $\epsilon_{\mathrm{cor}}$-correct with $\epsilon_{\mathrm{sec}} + \epsilon_{\mathrm{cor}} \leq \epsilon$.
Since a protocol that would always abort is perfectly secure according to this definition, it is important to take into account its \emph{robustness} $\epsilon_{\mathrm{rob}}$, which is the probability that the protocol aborts if the eavesdropper is inactive. In the case of a CV QKD protocol, this corresponds to a thermal bosonic channel, which is a good model for the transmission of light in an optical fiber.

{\bf Description of the CV QKD protocol $\mathcal{E}_0$}.--- We focus here on the EB version of the protocol, but the security of its Prepare and Measure (PM) version where Alice sends coherent states and Bob uses heterodyne detection follows immediately. Moreover, we present the reverse reconciliation version, which is the most useful in practice. The direct reconciliation version is easily obtained by interchanging the roles of Alice and Bob in the classical post-processing part of the protocol. 
Recall that in order to obtain security against general attacks, one would need to add another step to the protocol, involving an energy test as well as a potential symmetrization procedure.

The protocol $\mathcal{E}_0$ is sketched in Fig.~\ref{protocol-E0} (and detailed in the appendix) and depends on a number of parameters: most notably, the number $2n$ of coherent states sent by Alice, the length $l$ of the final key if the protocol did not abort, the discretization parameter $d$, the size of Bob's communication to Alice, $\mathrm{leak}_{\mathrm{EC}}$, during the error correction procedure, the maximum failure probabilities $\epsilon_{\mathrm{cor}}$ and $\epsilon_{\mathrm{PE}}$ for the EC and PE steps, respectively, some bounds on covariance matrix elements, $\Sigma_a^{\max}, \Sigma_b^{\max}, \Sigma_c^{\min}$ for the PE test to pass and a robustness parameter $\epsilon_{\mathrm{rob}}$.

Our main result quantifies the security of the protocol $\mathcal{E}_0$ in the composable security framework.
\begin{theo} 
\label{key-rate-theorem}
The protocol $\mathcal{E}_0$ is $\epsilon$-secure against collective attacks if $\epsilon =  \sqrt{\epsilon_{\mathrm{PE}} + \epsilon_{\mathrm{cor}} + \epsilon_{\mathrm{ent}}} + 2\epsilon_{\mathrm{sm}} + \bar{\epsilon} $ and if the key length $l$ is chosen such that
\begin{align}
l \leq & 2n \left[ 2\hat{H}_{\mathrm{MLE}}(U) - f(\Sigma_a^{\max}, \Sigma_b^{\max}, \Sigma_c^{\min}) \right]- \mathrm{leak}_{\mathrm{EC}}  \nonumber\\
& -  \Delta_{\mathrm{AEP}} - \Delta_{\mathrm{ent} } - 2\log\frac{1}{2\bar{\epsilon}},
\label{key-rate}
\end{align}
where $\hat{H}_{\mathrm{MLE}}(U)$ is the empirical entropy of $U$, $\Delta_{\mathrm{AEP}} := \sqrt{2n} \left[(d+1)^2 + 4(d+1) \log_2 \frac{2}{\epsilon_{\mathrm{sm}}^2} + 2\log_2 \frac{2}{\epsilon^2\epsilon_{\mathrm{sm}}}\right] +4 \epsilon_{\mathrm{sm}} d/\epsilon$, $\Delta_{\mathrm{ent}} :=   \log_2 \frac{1}{\epsilon}+ \sqrt{8n \log_2^2 (4 n) \log(2/\epsilon_{\mathrm{sm}})}$ and $f$ is the Holevo information between Eve and Bob's  measurement result for a Gaussian state with covariance matrix parametrized by $\Sigma_a^{\max}, \Sigma_b^{\max}, \Sigma_c^{\min}$.
\end{theo}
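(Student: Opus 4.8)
The plan is to run the standard smooth--entropy argument for composable security, fed by two nonstandard ingredients: a distribution--free confidence region for the covariance matrix, and the fact that parameter estimation is performed \emph{after} error correction on the whole block. \emph{Security reduction.} Let $\Omega$ be the no--abort event --- both the EC verification hash and the PE test accept --- and $\rho_{\wedge\Omega}$ the sub--normalised output state restricted to $\Omega$. Correctness is immediate: the final hash comparison forces $\Pr[S_A\neq S_B]\le\epsilon_{\mathrm{cor}}$. For secrecy, apply the Leftover Hashing Lemma against quantum side information to $U^{2n}$: hashing to $l$ bits gives a key that is $\big(2\epsilon_{\mathrm{sm}}+\tfrac12 2^{-(H_{\min}^{\epsilon_{\mathrm{sm}}}(U^{2n}\mid E')_{\rho_{\wedge\Omega}}-l)/2}\big)$--close to ideal, where $E'$ collects $E^{2n}$, the EC and PE transcripts and the hash seed; hence $l\le H_{\min}^{\epsilon_{\mathrm{sm}}}(U^{2n}\mid E')_{\rho_{\wedge\Omega}}-2\log\tfrac{1}{2\bar\epsilon}$ yields secrecy error $\le 2\epsilon_{\mathrm{sm}}+\bar\epsilon$ (the origin of the $-2\log\frac{1}{2\bar\epsilon}$ term). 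A classical chain rule removes the transcripts at a cost of $\mathrm{leak}_{\mathrm{EC}}$ plus the negligible PE--transcript length, so it remains to lower bound $H_{\min}^{\epsilon_{\mathrm{sm}}}(U^{2n}\mid E^{2n})_{\rho_{\wedge\Omega}}$.

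\emph{Parameter estimation $\Rightarrow$ Holevo bound.} Against a collective attack the state is $\sigma_{AB}^{\otimes 2n}$, purified by Eve; let $\Sigma$ be the covariance matrix of $\sigma_{AB}$. If $\Sigma$ is not in the region $\{\Sigma_a\le\Sigma_a^{\max},\ \Sigma_b\le\Sigma_b^{\max},\ \Sigma_c\ge\Sigma_c^{\min}\}$, the new tomography result shows the PE test accepts with probability $\le\epsilon_{\mathrm{PE}}$, so $\mathrm{tr}\,\rho_{\wedge\Omega}\le\epsilon_{\mathrm{PE}}$ and $(1-p_{\mathrm{abort}})\delta\le\epsilon_{\mathrm{PE}}$, well within the budget. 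Otherwise $\Sigma$ lies in that region and the extremality of Gaussian states for the conditional entropy \cite{WGC06} gives $\chi(U:E)_\sigma\le f(\Sigma_a^{\max},\Sigma_b^{\max},\Sigma_c^{\min})$, hence the per--pulse inequality $H(U\mid E)_\sigma\ge H(U)_\sigma-f$, where $U$ is Bob's discretised heterodyne data and the empirical estimate of $H(U)_\sigma$ is the $2\hat{H}_{\mathrm{MLE}}(U)$ of \eqref{key-rate}.

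\emph{AEP and assembly.} Restrict further to the sub--event $\Omega_{\mathrm{good}}\subseteq\Omega$ on which EC genuinely produced $\hat U=U$ and the plug--in estimator did not overshoot, $2\hat{H}_{\mathrm{MLE}}(U)\le H(U)_\sigma+\Delta_{\mathrm{ent}}/(2n)$; as $\hat{H}_{\mathrm{MLE}}$ is under--biased, the latter follows from a bounded--difference concentration inequality on the discretised data and fails with probability $\le\epsilon_{\mathrm{ent}}$, while $\hat U\neq U$ occurs with probability $\le\epsilon_{\mathrm{cor}}$. Since restriction to a sub--event cannot decrease the min--entropy, and applying the finite--size AEP to the $2n$ i.i.d.\ channel uses (effective alphabet finite, set by $d$, which is the source of the $d$--dependence of $\Delta_{\mathrm{AEP}}$),
\begin{align}
H_{\min}^{\epsilon_{\mathrm{sm}}}(U^{2n}\mid E^{2n})_{\rho_{\wedge\Omega_{\mathrm{good}}}} &\ge 2n\,H(U\mid E)_\sigma-\Delta_{\mathrm{AEP}}\nonumber\\
&\ge 2n\!\left[2\hat{H}_{\mathrm{MLE}}(U)-f\right]-\Delta_{\mathrm{AEP}}-\Delta_{\mathrm{ent}}.\nonumber
\end{align}
Transporting this estimate from $\Omega_{\mathrm{good}}$ back to $\Omega$ costs an increase of the smoothing parameter equal to the purified distance between the two sub--normalised states, of order $\sqrt{\epsilon_{\mathrm{PE}}+\epsilon_{\mathrm{cor}}+\epsilon_{\mathrm{ent}}}$; combining with the reduction above and collecting the smoothing parameters reproduces \eqref{key-rate} together with $\epsilon=\sqrt{\epsilon_{\mathrm{PE}}+\epsilon_{\mathrm{cor}}+\epsilon_{\mathrm{ent}}}+2\epsilon_{\mathrm{sm}}+\bar\epsilon$.

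\emph{Main obstacle.} Everything above except one step is essentially off the shelf (chain rules, smoothing, Leftover Hashing, the finite--size AEP, Gaussian extremality). The real difficulty is the tomography result used in the second step: a rigorous confidence region for the covariance matrix of an \emph{arbitrary} bipartite state from heterodyne data, with no Gaussianity or energy assumption. Standard concentration (Bernstein, Chebyshev) is unavailable because an adversarial state can make the quadrature moments arbitrarily heavy--tailed, so one must exploit the operator structure of the covariance--matrix entries and the fact that only specific one--sided bounds actually enter $f$, at the price of confidence intervals shrinking like a moderate power of $n$ rather than $\log n/\sqrt n$; this is precisely why PE is deferred past EC, so that the full block feeds both the estimator and the key. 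A secondary delicate point is to control the discretisation and the truncation of the infinite--dimensional Fock spaces well enough that the finite--dimensional AEP applies.
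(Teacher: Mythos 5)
Your overall skeleton (leftover hashing on the conditioned state, a chain rule paying $\mathrm{leak}_{\mathrm{EC}}$, Gaussian extremality, an AEP step, and assembling $\epsilon=\sqrt{\epsilon_{\mathrm{PE}}+\epsilon_{\mathrm{cor}}+\epsilon_{\mathrm{ent}}}+2\epsilon_{\mathrm{sm}}+\bar\epsilon$) matches the paper, but there is a genuine gap at the heart of your parameter-estimation step. You argue by dichotomy on the true single-copy covariance matrix $\Sigma$ of $\sigma_{AB}$: either $\Sigma$ lies in the region and Gaussian extremality applies to the i.i.d.\ state, or $\Sigma$ lies outside and ``the new tomography result shows the PE test accepts with probability $\le\epsilon_{\mathrm{PE}}$''. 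No tomography result of that form exists, and the paper's Theorem \ref{PE-test} does not say this. Take $\sigma_B=(1-\delta)|0\rangle\langle 0|+\delta|N\rangle\langle N|$ with $\delta\ll 1/n$ and $N\delta\gg 1$: the true $\Sigma_b$ exceeds any fixed $\Sigma_b^{\max}$, yet with probability $(1-\delta)^{2n}\approx 1$ all measured modes look like vacuum and the test accepts. This heavy-tail obstruction is precisely why the paper never estimates the covariance matrix of the underlying i.i.d.\ state: its PE theorem bounds the probability of the \emph{joint} event that the test passes and that the Holevo information of the state \emph{conditioned on passing} (defined via the symmetrized GedankenExperiment with players $A_1,A_2,B_1,B_2$, whose symmetrization Alice only simulates classically) is underestimated by $f(\Sigma_a^{\max},\Sigma_b^{\max},\Sigma_c^{\min})$. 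So your branch ``$\Sigma$ in region, hence per-pulse $H(U|E)_\sigma\ge H(U)_\sigma-f$ for the unconditioned state'' is unavailable in exactly the adversarial cases that matter; the difficulty you flag at the end is not a missing lemma for your claimed statement, the statement itself is false as you phrased it.

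A second, related gap is the AEP step. You invoke the standard finite-size i.i.d.\ AEP together with ``restriction to a sub-event cannot decrease the min-entropy''. The state fed to privacy amplification is conditioned on passing EC and PE, which destroys the product structure, and relating its smooth min-entropy to a von Neumann entropy is not off the shelf: it is a separate theorem the paper proves (an AEP for states of the form $\frac{1}{p}\Pi\rho^{\otimes n}\Pi$), via $\alpha$-R\'enyi entropies, duality of conditional entropies and the Alicki--Fannes inequality, using that the passing probability may be assumed at least $\epsilon$. That is the origin of the terms $2\log_2\frac{2}{\epsilon^2\epsilon_{\mathrm{sm}}}$ and $4\epsilon_{\mathrm{sm}}d/\epsilon$ in $\Delta_{\mathrm{AEP}}$ and of the $\log_2\frac{1}{\epsilon}$ in $\Delta_{\mathrm{ent}}$; your sketch cannot reproduce these constants. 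Note also that the paper evaluates the entropy estimate and the Holevo bound directly on the conditional state, keeping failure probabilities of the form $\epsilon_{\mathrm{PE}}/p$ with $p\ge\epsilon$ (the main-text $\sqrt{\cdot}$ expression is then a solution of the resulting self-consistency condition), so no transport ``from $\Omega_{\mathrm{good}}$ to $\Omega$ by purified distance'' is needed there; your transport idea could be made to work, but only after the two gaps above are filled.
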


\begin{figure}[htbp]
\begin{framed}
  \centering
\begin{enumerate}
\item \textbf{State Preparation:} Alice prepares $2n$ two-mode squeezed vacuum states, keeps the first half of each state and transmits the second half to Bob through an insecure quantum channel. Alice and Bob then share a global quantum state $\rho_{AB}^{\otimes (2n)}$.
\item \textbf{Measurement:} Alice and Bob measure their respective modes with heterodyne detection and obtain two strings $X, Y \in \mathbbm{R}^{4n}$. Bob discretizes his $4n$-vector $Y$ to obtain the $m$-bit string $U$, where $m = 4dn$, i.e. each symbol is encoded with $d$ bits of precision.
\item \textbf{Error Correction:} Bob sends some side information of size $\mathrm{leak}_{\mathrm{EC}}$ to Alice (syndrome of $U$ for a linear error correcting code $C$ agreed on in advance) and Alice outputs a guess $\hat{U}$ for the string of Bob. Bob computes a hash of $U$ of length $\lceil \log_2(1/\epsilon_{\mathrm{cor}}) \rceil$ and sends it to Alice who compares it with her own hash. If both hashes differ, the protocol aborts. 
\item \textbf{Parameter Estimation:} Bob sends $n_{\mathrm{PE}}=O(\log(1/\epsilon_{\mathrm{PE}}))$ bits of information to Alice that allow her to compute $\|X\|^2, \|Y\|^2$ and $\langle X,Y\rangle$, as well as $\gamma_a, \gamma_b$ and $\gamma_c$ defined in Eq.~\ref{gamma_a_def}, \ref{gamma_b_def}, \ref{gamma_c_def}. The PE test passes if $[\gamma_a \leq \Sigma_a^{\mathrm{max}}] \wedge[\gamma_b \leq \Sigma_b^{\mathrm{max}}] \wedge [\gamma_c\geq \Sigma_c^{\mathrm{min}}]$; otherwise the protocol aborts.  
\item \textbf{Privacy Amplification:} Alice and Bob apply a random universal$_2$ hash function to their respective strings, obtaining two strings $S_A$ and $S_B$ of size $l$. 
\end{enumerate} 
\end{framed}
\caption{Protocol $\mathcal{E}_0$, with reverse reconciliation and parameters $n, l, \mathrm{leak}_{\mathrm{EC}}, \epsilon_{\mathrm{cor}}, n_{\mathrm{PE}}, \epsilon_{\mathrm{PE}}, \Sigma_a^{\max}, \Sigma_b^{\max}, \Sigma_c^{\min},  d$ }
\label{protocol-E0}
\end{figure}

This secret key size should be compared to the asymptotic secret key rate assuming collective, Gaussian attacks. This can be done by assuming a passive quantum channel corresponding to a Gaussian channel with transmittance $T$ and excess noise $\xi$. One needs to factor in the robustness of the protocol, that is the probability that the PE test will not pass in the case of a passive channel. We plot the secret key rate as a function of $n$ for $\epsilon = 10^{-20}$ on Fig.~\ref{fig:key-rate-small}. The asymptotic key rate is typically reached for $n$ between $10^8$ and $10^{11}$ for distances up to 50 km. 
\begin{figure}
\centering
  \includegraphics[width=1.\linewidth]{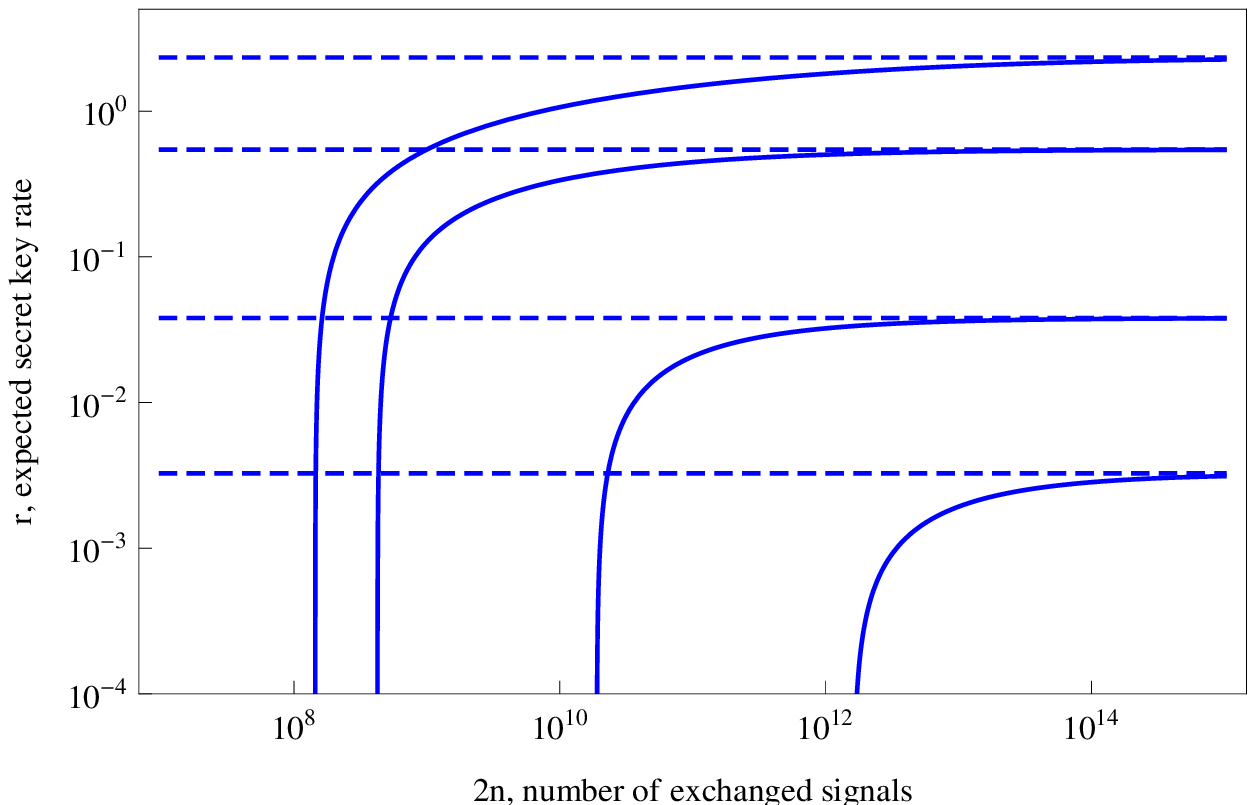}
  \label{fig:scheme1}
\caption{Expected secret key rate $r = (1-\epsilon_{\mathrm{rob}})l/2n$ secure against collective attacks, as a function of $2n$, the number of exchanged signals. From top to bottom, the transmittance of the quantum channel corresponds to distances of 1 km, 10 km, 50 km and 100 km for assumed losses of $0.2$ dB per km. For each distance, the expected secret key rate reaches the asymptotic value for large enough $n$. The modulation variance is optimized, the reconciliation efficiency is set to $\beta = 0.95$, the discretization parameter to $d=5$ (the value of $d$ should be optimized depending on the error correcting codes used in the reconciliation, see e.g. \cite{JEK14}), the excess noise to $\xi  =0.01$, the robustness parameter to $\epsilon_{\mathrm{rob}} \leq 10^{-2}$ and the security parameter to $\epsilon = 10^{-20}$. Dashed lines correspond to the respective asymptotic expected secret key rates. Refer to the Supplementary Material for a detailed derivation of the value of the expected secret key rate.}
\label{fig:key-rate-small}
\end{figure}

{\bf Parameter Estimation}.--- 
We defer the detailed description of the protocol $\mathcal{E}_0$ and its full security proof to the appendix and focus more specifically on the PE step here. A novelty of the protocol is that PE is performed \emph{after} EC. This can be done quite efficiently since a rough estimate of the signal-to-noise ratio (SNR) of the data is in general sufficient to choose an appropriate error correcting code and proceed with the reconciliation. At the end of the EC step, Alice therefore knows the strings $X$ and $U$ and it is not hard to show that if Bob sends her a few additional bits, she can learn the values of $\|X\|^2, \|Y\|^2$ and $\langle X, Y\rangle$ arbitrarily well.

The goal of the PE step is to obtain a confidence region for the covariance matrix of the state $\rho_{AB}^{\otimes (2n)}$. Here, one needs to be careful because by the time PE is performed, the state has already been measured and it does not make real sense to talk about its covariance matrix anymore. We will follow the paradigm for tomography introduced in Ref.~\cite{CR12} and define a quantum tomography process as a CPTP map that takes an input state $\rho^{n+k} \in \mathcal{H}^{\otimes (n+k)}$, symmetrizes it and outputs a state $\rho^{n} \in \mathcal{H}^{\otimes n}$ as well as a confidence region $R$ of $\mathcal{P}_=(\mathcal{H}^{\otimes n})$, the set of normalized density operators on $\mathcal{H}^{\otimes n}$. The superscript $n$ for $\rho^n$ should not be interpreted as saying that the state has an i.i.d.~structure, nor that $\rho^n$ corresponds to a marginal state of $\rho^{n+k}$; it is merely a remainder of the size of Hilbert space it lives in. In words, it consists in measuring a subsystem of the initial state and making a prediction for the remaining state.  The quality of the quantum tomography is assessed by two parameters: the probability  that the prediction is false and the size of the region. A larger region means a smaller error probability, but also a more pessimistic secret key rate. 

An important issue concerning the tomography of a CV system is that the covariance matrix is \textit{a priori} unbounded. 
Consider for instance the state $\sigma^{\otimes (n+k)}$ with $\sigma = (1-\epsilon) |0\rangle \langle 0| + \epsilon |N\rangle \langle N|$. The covariance matrix of $\sigma$ is $\mathrm{diag}(1+ N\epsilon/2, 1+N\epsilon/2)$ but any tomographic procedure that only examines $k \ll 1/\epsilon$ modes will conclude that the covariance matrix is close to that of the vacuum, which is clearly incorrect if $N\epsilon \gg1$.
The solution to this problem consists in first appropriately symmetrizing the state $\rho^{n+k}$ before measuring $k$ subsystems and inferring properties for the remaining $n$ modes.

Ideally, the tomography of the input state $\rho_{AB}^{2n}$ of the QKD protocol $\mathcal{E}_0$ should consist of the following steps, which involve additional parties $A_1$ and $A_2$ on Alice's side and $B_1$ and $B_2$ on Bob's side:
\begin{enumerate}
\item \emph{State symmetrization:} Alice's $2n$ modes are processed with a random network of beamsplitters and phase-shifts, and Bob's modes with the conjugate network, giving a new state $\tilde{\rho}^{2n}$.
\item \emph{Distribution to additional players:} Alice and Bob distribute $\tilde{\rho}_1^n$ corresponding to the first $n$ modes of $\tilde{\rho}^{2n}$ to $A_1$ and $B_1$. Similarly, they give $\tilde{\rho}_2^n$ to $A_2$ and $B_2$.
\item \emph{Measurement:} $A_1$ and $B_1$ measure $\tilde{\rho}_1^n$ with heterodyne detection and obtain two vectors $X_1, Y_1 \in \mathbbm{R}^{2n}$. Similarly, $A_2$ and $B_2$ obtain $X_2, Y_2 \in \mathbbm{R}^{2n}$.
\item \emph{Parameter Estimation:}
$B_1$ sends some information to $A_1$ so that she can learn the values of $\|X_1\|^2, \|Y_1\|^2$ and $\langle X_1, Y_1\rangle$ and then compute a confidence region for the (averaged) covariance matrix of $\tilde{\rho}_2^n$. Similarly, $A_2$ computes a confidence region for that of $\tilde{\rho}_1^n$. 
\end{enumerate}
By averaged covariance matrix, we mean the three real values $\Sigma_a, \Sigma_b, \Sigma_c$ defined by:
$\Sigma_{a/b}  := \frac{1}{2n} \sum_{i=1}^n \left( \langle q_{A/B_i}^2 \rangle + \langle p_{A/B,i}^2\rangle\right)$ and $\Sigma_c  := \frac{1}{2n}  \sum_{i=1}^n  \left(\langle q_{A,i} q_{B,i} \rangle - \langle p_{A,i} p_{B,i}\rangle\right)$
where $q_{A,i}$ is the quadrature operator $\frac{1}{\sqrt{2}} (\hat{a}_i + \hat{a}^\dagger_i)$ for the $i^{\mathrm{th}}$ mode of Alice for instance. 

An interesting feature of this PE procedure is that $A_1$ and $A_2$ can respectively estimate the covariance matrices of $\tilde{\rho}_2^n$ and $\tilde{\rho}_1^n$, meaning that a secret key can be distilled from both halves of the state. In other words, no raw key is wasted because of parameter estimation.
While it is clear that this scheme is rather impractical, one can show that it can nevertheless be efficiently simulated by Alice, without any need for symmetrization or for additional parties. 

In fact, if Alice learns the values of $\|X\|^2, \|Y\|^2$ and $\langle X,Y\rangle$, she can compute $\gamma_a, \gamma_b, \gamma_c$ as follows
\begin{align}
\gamma_a &:= \frac{1}{2n} \left[ 1 + 2\sqrt{\frac{\log(36/\epsilon_{\mathrm{PE}})}{n}}\right] \|X\|^2-1,\label{gamma_a_def}\\
\gamma_b &:= \frac{1}{2n} \left[ 1 + 2\sqrt{\frac{\log(36/\epsilon_{\mathrm{PE}})}{n}}\right] \|Y\|^2-1\label{gamma_b_def}\\
\gamma_c &:= \frac{1}{2n} \langle X, Y\rangle - 5 \sqrt{\frac{\log (8/\epsilon_{\mathrm{PE}})}{n^3}}(\|X\|^2 + \|Y\|^2).\label{gamma_c_def}
\end{align}
We are now in a position to define the Parameter Estimation Test and bound its failure probability (proven in the appendix).
\begin{theo}
\label{PE-test}
The probability that the Parameter Estimation Test passes, that is, $[\gamma_a \leq \Sigma_a^{\max} ] \wedge [\gamma_b \leq \Sigma_b^{\max} ] \wedge [\gamma_c \geq \Sigma_c^{\min} ] $ and that Eve's information $\chi(U;E)$ computed for the Gaussian state with covariance matrix characterized by $\Sigma_a^{\max}, \Sigma_b^{\max}$ and $\Sigma_c^{\min}$ is underestimated is upper-bounded by $\epsilon_{\mathrm{PE}}$. 
\end{theo}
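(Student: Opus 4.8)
\emph{Proof strategy.} The plan is to analyse the idealized symmetrized tomography scheme sketched above — symmetrize the state with a random passive linear-optical network, distribute the two halves to the auxiliary players, measure one half, and predict the averaged covariance matrix of the other — and then to convert the resulting covariance-matrix confidence region into a bound on Eve's Holevo information via Gaussian extremality. The first thing to establish is that the estimators $\gamma_a,\gamma_b,\gamma_c$ that Alice actually computes in $\mathcal{E}_0$ from $\|X\|^2,\|Y\|^2,\langle X,Y\rangle$ genuinely implement this idealized scheme. The key observation is that conjugating a heterodyne measurement by a passive Gaussian unitary is the same as heterodyning first and then applying the corresponding orthogonal rotation to the real vector of outcomes; hence Alice need not symmetrize physically, she may heterodyne and rotate her classical data, the conjugate network on Bob's side being chosen precisely so that $\langle X,Y\rangle$ is invariant. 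Since $\|X\|^2$, $\|Y\|^2$ and $\langle X,Y\rangle$ transform covariantly under this rotation and the split into two halves is then a virtual partition of a randomly rotated vector, the whole tomography procedure is classically simulable from these three numbers, which is what Alice learns (arbitrarily well) from Bob's few extra bits after EC.

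Next I would condition on Eve's system $E$. For a purification $\Psi_{ABE}$, fixing $E$ fixes the reduced state $\tilde\rho_{AB}$ and therefore its covariance matrix; because of the symmetrization this matrix has the block structure determined by the three averaged parameters $\Sigma_a,\Sigma_b,\Sigma_c$, and the heterodyne outcomes on any subset of modes become classical random vectors whose relevant first and second moments are $\Sigma_a+1$, $\Sigma_b+1$ and $\Sigma_c$ (the $+1$'s being the heterodyne shot noise, one vacuum unit per quadrature). The core technical step is then a concentration argument: conditioned on the norms $\|X\|,\|Y\|$, the symmetrized outcome vector is (close to) uniformly distributed on the corresponding sphere, so the statistics of the measured half control those of the unmeasured half up to fluctuations of relative order $1/\sqrt{n}$ — precisely the content of the correction terms $2\sqrt{\log(36/\epsilon_{\mathrm{PE}})/n}$ in $\gamma_a,\gamma_b$ and $5\sqrt{\log(8/\epsilon_{\mathrm{PE}})/n^3}(\|X\|^2+\|Y\|^2)$ in $\gamma_c$. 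This is also where the a priori unboundedness of the covariance matrix is handled: symmetrization is exactly what makes the sphere picture valid and defeats the tail-mode obstruction of the $\sigma=(1-\epsilon)|0\rangle\langle0|+\epsilon|N\rangle\langle N|$ example, and the width of the $\Sigma_c$ confidence interval is unavoidably data-dependent, scaling with the observed energy. Quantitatively one invokes Lévy's lemma / Beta-distribution tail bounds on the sphere together with a union bound over the few failure events, concluding that, except with probability at most $\epsilon_{\mathrm{PE}}$, whenever the test passes the averaged covariance matrix of the remaining $n$ modes satisfies $\Sigma_a\le\Sigma_a^{\max}$, $\Sigma_b\le\Sigma_b^{\max}$ and $\Sigma_c\ge\Sigma_c^{\min}$.

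Finally I would feed this confidence region into the Holevo bound. Since $U$ is obtained from $Y$ by discretization, data processing gives $\chi(U;E)\le\chi(Y;E)$, and by the extremality of Gaussian states (Wolf--Giedke--Cirac) the latter is at most the Holevo information of the Gaussian state with the same covariance matrix; by monotonicity of $f$ in its three arguments this is in turn at most $f(\Sigma_a^{\max},\Sigma_b^{\max},\Sigma_c^{\min})$ on the event that the covariance matrix lies in the confidence region. Combining this with the previous paragraph, the event ``test passes and $\chi(U;E)>f(\Sigma_a^{\max},\Sigma_b^{\max},\Sigma_c^{\min})$'' has probability at most $\epsilon_{\mathrm{PE}}$, as claimed.

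I expect the concentration step to be the main obstacle. The two halves of the symmetrized state are not independent — they are coupled through the random network — so predicting one half's covariance matrix from the other half's measurement is a statement about exchangeable rather than i.i.d.\ data, and one must prove sphere-concentration tail bounds that are genuinely uniform over all (possibly non-Gaussian, possibly arbitrarily energetic) input states. Getting the dependence of the error bars on $\epsilon_{\mathrm{PE}}$ and on the observed energies exactly right, so that the aggregate failure probability collapses to a clean $\epsilon_{\mathrm{PE}}$ after the union bound, is the delicate part.
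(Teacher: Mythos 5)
Your high-level architecture is the same as the paper's: a virtual symmetrized tomography with auxiliary players that Alice simulates classically from $\|X\|^2,\|Y\|^2,\langle X,Y\rangle$ (because the passive symmetrization commutes with heterodyne detection), random-projection concentration between the two halves, and then Gaussian extremality plus data processing to turn the covariance bounds into a bound on $\chi(U;E)$. The classical concentration you invoke (uniform vector on a sphere, Beta/$\chi^2$ tails, polarization for the inner product) is indeed how the paper controls the relation between the \emph{observed} data of the two halves (Lemmas \ref{estX}, \ref{estXY}, \ref{X1givesX2}).

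However, there is a genuine gap at the decisive step. What must end up inside the confidence region is the \emph{averaged covariance matrix} of the remaining state, i.e.\ quantum expectation values $\Sigma_a,\Sigma_b,\Sigma_c$ (equivalently, conditional expectations of the would-be heterodyne outcomes of the unmeasured half), not the realized outcomes of that half. Sphere concentration only relates observed quantities on one half to observed quantities on the other half; it says nothing by itself about the conditional \emph{expectation}, which can be dominated by rare, arbitrarily large outcomes -- this is exactly the unbounded-covariance pathology of the $\sigma=(1-\epsilon)|0\rangle\langle 0|+\epsilon|N\rangle\langle N|$ example, and symmetrization does not make the realized norms concentrate around the state's second moments uniformly over all input states. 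Your second paragraph in fact runs the inference in the wrong direction: conditioning on $E$ and asserting that the outcomes have moments $\Sigma_a+1,\dots$ and then concentrate is precisely the kind of state-dependent claim that cannot be made uniformly. The paper closes this gap with dedicated machinery (Lemmas \ref{bound-var} and \ref{lemma-bound-cov}): it bounds the probability of the joint event ``observed value on one half is small (test passes) \emph{and} the conditional expectation for the other half is large'' by integrating the tail bound, via a sum $\sum_k b_{k+1}\,\epsilon(a,b_k)$ over a sequence of thresholds (and a corresponding argument for the correlation term), which is where the specific forms of $b$ and $d$ in Theorem \ref{main-parameter-estimation} come from. Without an argument of this type, your concluding claim ``whenever the test passes the averaged covariance matrix of the remaining modes satisfies the bounds, except with probability $\epsilon_{\mathrm{PE}}$'' does not follow from the concentration estimates you cite. (A secondary omission: since both halves are used for the key, one also needs the subadditivity step $\chi(Y_1Y_2;E)\leq\chi(Y_1;E)+\chi(Y_2;E)$ and the reduction of the symmetrized covariance matrix to the three-parameter form with the phase $\theta$ removed, which the paper handles separately.)
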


Here the Holevo information $\chi(U;E)$ is upper bounded by $f(\Sigma_a^{\max}, \Sigma_b^{\max}, \Sigma_c^{\min})  := g[(\nu_1-1)/2]+g[(\nu_2-1)/2]-g[(\nu_3-1)/2]$ where $\nu_1$ and $\nu_2$ are the symplectic eigenvalues of the covariance matrix $\left[
\begin{smallmatrix}
\Sigma_a^{\mathrm{max}} \mathbbm{1}_2 & \Sigma_c^{\mathrm{min}} \sigma_z \\
\Sigma_c^{\mathrm{min}}  \sigma_z & \Sigma_b^{\mathrm{max}}\mathbbm{1}_2 \\
\end{smallmatrix}
\right]$, $\nu_3 = \Sigma_a^{\mathrm{max}}- \left(\Sigma_c^{\mathrm{min}} \right)^2/(1+ \Sigma_b^{\mathrm{max}})$, $\sigma_z = \mathrm{diag}(1,-1)$ and  $g(x) := (x+1) \log_2(x+1) -x \log_2 (x)$.

Once we are able to analyze the PE test, the rest of the security proof follows in a rather straightforward fashion: see the appendix for all the details. 
It should be noted that the assumption of collective attacks was not used in the PE step: this is because the symmetrization breaks the tensor product of the state. However, we crucially rely on the collective attack assumption when exploiting the Asymptotic Equipartition Property of the smooth min-entropy, which is the quantity of interest to analyze the success of the privacy amplification step.

{\bf A security proof against general attacks}.--- 
So far, we have restricted the analysis to collective attacks. For CV QKD, there are two known techniques to obtain a full security proof from one holding against collective attacks: an exponential version of de Finetti theorem \cite{RC09} and the Postselection technique \cite{LGR13}. The former technique directly applies here and can be used to upgrade the protocol $\mathcal{E}_0$ to a slightly more complicated one (including an energy test and a random permutation) that is provably $\tilde{\epsilon}$-secure against general attacks, but with $\tilde{\epsilon} \gg \epsilon$, provided the key length is adequately shortened. However, while this provides composable security CV QKD with coherent states against general attacks, it does not give very good finite-key estimates. 
The Postselection technique is better, but still falls short on providing useful finite-size key estimates. 
Indeed, in order to apply it, one needs to add an energy test which depends on a small parameter $\epsilon_{\mathrm{test}}$, and if the protocol $\mathcal{E}_0$ was $\epsilon$-secure against collective attacks, the new protocol is $\tilde{\epsilon}$-secure against general attacks where $\tilde{\epsilon} = \epsilon 2^{O(\log^4(n/\epsilon_{\mathrm{test}}))} + 2 \epsilon_{\mathrm{test}}$, which is prohibitive in practice. 
Moreover, in the case of reverse reconciliation, it seems that the current Postselection technique requires an additional symmetrization step for the classical data, which has complexity $\Theta(n^2)$. Whether or not this symmetrization can be simulated, as was the case in the PE step, is left as an interesting open question. 

{\bf Conclusion}.--- 
We have provided a composable security proof of a CV QKD protocol using coherent states valid against collective attacks. This was the missing step to establish the security of such protocols against general attacks in the composable security framework. The bounds we obtained are compatible with state-of-the-art experiments. For protocols with direct reconciliation, this directly gives a composable security proof against general attacks. For reverse reconciliation, which is required to achieve long distances, an additional symmetrization step provides the same level of security. 
Further work will be needed to improve the current reductions from general to collective attacks, which should be possible since the current techniques do not exploit all the symmetries of the protocols.

We expect our parameter estimation procedure to find applications in the field of continuous-variable entanglement. Indeed, most criteria for detecting CV entanglement are based on the covariance matrix \cite{DGC00} and to our knowledge, our procedure gives the first robust estimation of the covariance matrix of an unknown quantum state without relying on any assumption such as the Gaussian nature of the state.

{\bf Acknowledgements}.--- I thank Fabian Furrer and Philipe Grangier who provided very useful comments on a preliminary version of this manuscript.  I'm especially grateful to Marco Tomamichel for enlightening discussions about smooth entropies and the asymptotic equipartition property.

\newpage
\appendix 
\begin{widetext}

In this appendix, we first recall the basic definitions related to the composable security of QKD (Section \ref{presentation}). Then, in Section \ref{protocol}, we give a complete description of the QKD protocol $\mathcal{E}_0$ that is secure against collective attacks. In Section \ref{expected}, we explain in details how to compute the expected secret key rate. In Section \ref{tools}, we introduce the tools needed for the security proof. In Section \ref{section-PE}, we explain the ideas behind the Parameter Estimation procedure and establish an upper bound for the failure probability of the PE test.  We present the security proof in Section \ref{sec-proof}. Finally, in Section \ref{general-attacks}, we explain how to upgrade $\mathcal{E}_0$ to a protocol provably secure against general attacks.

\section{Quantum Key Distribution and Composable Security}
\label{presentation}

An Entanglement-Based (EB) QKD protocol $\mathcal{E}$ is a Completely-Positive Trace-Preserving (CPTP) map:
\begin{align}
\begin{array}{cccc}
\mathcal{E}\colon &\mathcal{H}_A \otimes \mathcal{H}_B  &\rightarrow & \mathcal{S}_A \otimes \mathcal{S}_B \otimes \mathcal{C}\\
& \rho_{AB} & \mapsto & \rho_{S_A, S_B, C}.
 \end{array}
\end{align}
Let us explain the notations. The Hilbert spaces $ \mathcal{H}_A$ and $\mathcal{H}_B $ refer to the state spaces of Alice and Bob, respectively. Typically, each of these spaces corresponds to a large number of copies of a small Hilbert space. For instance, in the case of the BB84 protocol, one has $\mathcal{H}_A \cong \mathcal{H}_B \cong (\mathbbm{C}^{2})^{\otimes n}$, where $n$ is a large number counting how many 2-qubit states are shared by Alice and Bob.
 For a Hilbert space $\mathcal{H}$, we denote by $\mathcal{P}(\mathcal{H})$ the set of positive semi-definite operators on $\mathcal{H}$ and by $\mathcal{P}_=(\mathcal{H})$ the set of operators in $\mathcal{P}(\mathcal{H})$ with unit trace. Technically, the QKD protocol is therefore defined for the space $\mathcal{P}_=(\mathcal{H}_A \otimes \mathcal{H}_B)$, but it is sufficient to define it for $\mathcal{H}_A \otimes \mathcal{H}_B$ and to extend it by linearity.
In the case of a continuous-variable protocol, $\mathcal{H}$ corresponds to an $n$-mode Fock space: $\mathcal{H} \cong F^{\otimes n}$ where $F = \mathrm{Span}(|0\rangle, \ldots, |k\rangle, \ldots)$ and $|k\rangle$ is a $k$-photon Fock state. The fact that this space is infinite-dimensional makes the analysis much more involved than for BB84, already for the case of collective attacks where the relevant space is $F \otimes F$ in a CV protocol (i.e. it is sufficient to consider a single copy of the state and $\mathcal{H}_A \otimes \mathcal{H}_B$ is simply $F \otimes F$), compared to the 2-qubit space $\mathbbm{C}^2 \otimes \mathbbm{C}^2$ for BB84. 

The spaces $\mathcal{S}_A$, $\mathcal{S}_B$, $\mathcal{C}$ correspond to classical registers (with subscripts $A$ and $B$ referring to Alice and Bob, respectively). Indeed a QKD protocol takes a quantum state as its input and outputs classical strings. (An exception is device-independent QKD where the inputs are also classical but then, the violation of a Bell inequality ensures that Alice and Bob are indeed measuring an entangled quantum system.) The spaces $\mathcal{S}$ are that of the final keys and $\mathcal{C}$ is the public transcript of the protocol, corresponding to the classical information exchanged on the authenticated classical channel and therefore accessible by Eve.
For instance, the register $\mathcal{C}$ contains the size $l$ of the final key, which can be zero if the protocol aborted.

A QKD protocol should be \emph{secure}, which means that it should display both properties of \emph{secrecy} and \emph{correctness}. It should also be \emph{robust} in the sense that it should output nontrivial keys if there is no active attack on the quantum channel. Here, we follow the definitions of Ref.~\cite{TLG12}.
These properties correspond to properties of the output state, more precisely of $\rho_{S_A S_B E}$, which should hold for any input state. (Here, and in the remainder of the text, $\rho_H$ corresponds to the marginal of the state $\rho$ restricted to subspace $H$.) We note the appearance of the register $E$ corresponding to the Hilbert space $\mathcal{H}_E$ of the adversary. To formalize it, we imagine that the true input space of the protocol is $\mathcal{H}_A \otimes \mathcal{H}_B \otimes \mathcal{H}_E$, and that the input state is a pure state (this is without loss of generality). The protocol then acts as $\mathcal{E}_{AB} \otimes \mathrm{id}_E$, i.e. it acts trivially on the adversary' s space. The output space is $ \mathcal{S}_A \otimes \mathcal{S}_B \otimes \mathcal{T}_A \otimes \mathcal{T}_B \otimes \mathcal{C} \otimes \mathcal{H}_E$. Denote by $\mathcal{H}_{E'} = \mathcal{C} \otimes \mathcal{H}_E$ the space accessible to the adversary (corresponding to her own quantum system $E$ and to the totality of public communication $\mathcal{C}$). We will be interested in the state $\rho_{S_A S_B E'}$ describing the (output)  joint state of the keys of Alice and Bob and of the adversary. 
More precisely, a QKD protocol is called \emph{correct} if $S_A = S_B$ for any strategy of the adversary, that is, any initial state of the protocol $\Psi_{ABE}$. A protocol is $\epsilon_{\mathrm{cor}}$-correct if $\mathrm{Pr}[S_A \ne S_B] \leq \epsilon_{\mathrm{cor}}$.
A key is called $\delta$-secret if it is $\delta$-close to a uniformly distributed key that is uncorrelated with the eavesdropper:
\begin{align}
\frac{1}{2} \left\| \rho_{S_A^l  E'} - \omega_{l} \otimes \rho_{E'} \right\|_1 \leq \delta,
\end{align}
where $\rho_{S_A}^l$ is the state conditioned on the key length being $l$ and  $\omega_{l}$ is the fully mixed state on classical strings of length $l$.
We immediately note that if the protocol aborts, the corresponding key, is automatically secret since one party aborting the protocol immediately informs the other party about this decision. A QKD protocol is called $\epsilon_{\mathrm{sec}}$-secret if it outputs $\delta$-secret keys with $(1-p_{\mathrm{abort}}) \delta \leq \epsilon_{\mathrm{sec}}$.
Here, $p_{\mathrm{abort}}$ represents the probability that the protocol aborts. This probability depends on the strategy of the adversary, that is on the input state $\Psi_{ABE}$. Indeed, the adversary can always choose to cut the line between Alice and Bob and hence make sure that the protocol always aborts. This is fine since the key will nevertheless be secure. An important parameter is the \emph{robustness}, $\epsilon_{\mathrm{rob}}$, of the protocol, which corresponds to the abort probability if the adversary is passive, and if the characteristics of the quantum channel are conform to what was expected. For instance, in the case of a CV QKD protocol, a typical quantum channel corresponding to an optical fiber will be a Gaussian channel with fixed transmittance $T$ and excess noise $\xi$.
A QKD protocol is $\epsilon$-secure if it is $\epsilon_{\mathrm{sec}}$-secret and $\epsilon_{\mathrm{cor}}$-correct with $\epsilon_{\mathrm{sec}} + \epsilon_{\mathrm{cor}} \leq \epsilon$.

A generic technique to prove that a protocol is $\epsilon$-secure is to show that it is indistinguishable from an ideal protocol. Here, indistinguishability refers to two CPTP maps, and is quantified by the diamond distance between these maps, with the operational property that the maximum probability of correctly guessing whether a map corresponds to $\mathcal{E}$ or $\mathcal{F}$ is given by $p = \frac{1}{2} + \frac{1}{4} \|\mathcal{E} - \mathcal{F}\|_\diamond$. 
Here the map $\mathcal{E}$ will be the QKD protocol and $\mathcal{F}$ should be an ideal version of the same QKD protocol. This ideal version is for instance obtained by concatenating $\mathcal{E}$ and a (virtual) protocol $\mathcal{P}$ that replaces the final keys $S_A$ and $S_B$ by a perfect key $S$: $\mathcal{P}(\rho_{S_A S_B E'}^l) = \omega_{l,l} \otimes \rho_{E'}$, where $\omega_{l,l}$ is the totally mixed state on two copies of strings of length $l$. One defines $\mathcal{F} = \mathcal{P} \circ \mathcal{E}$ and the protocol $\mathcal{E}$ is $\epsilon$-secure if 
\begin{align}
\frac{1}{2} \|\mathcal{E}- \mathcal{F}\|_\diamond \leq \epsilon.
\end{align}

We note that for (finite-dimensional) \emph{symmetric} protocols, i.e. such that $\mathcal{E} \circ \pi = \mathcal{E}$ for any operation $\pi$ that permutes simultaneously Alice's and Bob's $n$ quantum subsystems, this diamond distance can be bounded by computing the distance when applied to an independent and identically distributed (i.i.d.) state. This is the idea behind the Postselection technique \cite{CKR09}, which shows that for such protocols, collective attacks are asymptotically optimal. A crucial question, however, is that of verifying that a protocol displays the required symmetry. This is the case for protocols such as BB84 with strict one-way classical communication (in fact, the protocol is first restricted to one where Alice prepares the initial entangled state, which cannot be given by the adversary), but in general, it might be necessary to actively symmetrize the state (or at least the raw keys) in order to make the protocol symmetric. We will come back to this question later since symmetrization is a costly process that one would like to avoid in a practical implementation. 

All these notions extend to Prepare-and-Measure (PM) protocols in the following way. A PM protocol and an EB protocol are equivalent if they are indistinguishable from any coalition of observers outside of Alice's lab, and if Alice outputs the same key in both cases. The security of a given PM protocol is a consequence of the security of the equivalent EB protocol. 

Note finally that in this paper, we make the assumption that the measurement devices of Alice and Bob are trusted and behave accordingly to their theoretical model. In the protocol we consider, the measurements correspond to heterodyne detection. For a single-mode state $\rho$, the probability density function of measurement outcome is $p(\alpha) = \langle \alpha |\rho|\alpha\rangle$ where $|\alpha\rangle = e^{-|\alpha|^2/2} \sum_{k=0}^\infty \frac{\alpha^k}{\sqrt{k!}}|k\rangle$ is a coherent state centered on $\alpha \in \mathbbm{C}$ in phase-space. In particular, these assumptions imply that Eve cannot tamper with the Local Oscillator used for the detection. To avoid too much complication, we also assume that Alice's state preparation and Bob's detection are ideal, but note that incorporating (trusted) imperfections to these models (for instance, imperfect quantum detection efficiency) is straightforward using similar techniques as in Ref.~\cite{LBG07} for instance.  
Moreover, we use the convention that the shot-noise variance is equal to 1.

\section{Description of the CV QKD protocol}
\label{protocol}

The main part of this manuscript is devoted to the analysis of a specific protocol, denoted by $\mathcal{E}_0$, for which we prove composable security against collective attacks. We focus on the EB version of the protocol here, with reverse reconciliation. The direct reconciliation version is easily derived by interchanging the roles of Alice and Bob in the classical post-processing part of the protocol. 
In order to obtain security against general attacks, one needs to add another step to the protocol, involving an energy test as well as a potential symmetrization procedure. This will be described in Section \ref{general-attacks}.

The protocol is characterized by a number of parameters:
\begin{itemize}
\item the number $2n$ of light pulses (coherent states) exchanged during the protocol. 
\item The size $l$ of the final key if the protocol did not abort. 
\item The number $d$ of bits on which each measurement result is encoded. For numerical applications, we will use $d=5$. In general, the value of $d$ should be optimized as a function of the specific reconciliation procedure used in the protocol. 
\item The size of Bob's communication to Alice, $\mathrm{leak}_{\mathrm{EC}}$, during the error correction procedure: this includes the syndrome of Bob's string for an error correcting code that was decided in advance by Alice and Bob, as well as the size of a small hash that will allow them to check that the error correction succeeded, except with some small probability $\epsilon_{\mathrm{cor}}$. Here failure means that the keys of Alice and Bob do not coincide and that the protocol did not abort. 
\item The number of bits $n_{\mathrm{PE}}$ that Bob sends to Alice during the Parameter Estimation.
\item Some bounds on covariance matrix elements, $\Sigma_a^{\max}, \Sigma_b^{\max}, \Sigma_c^{\min}$. The role of the Parameter Estimation procedure is to make sure that these bounds are sound in the realization of the protocol. Typically, these values are optimized as a function of the expected characteristics (transmittance and excess noise) of the quantum channel.
\item A maximum failure probability of the Parameter Estimation procedure, $\epsilon_{\mathrm{PE}}$.
\end{itemize}

A sketch of the protocol $\mathcal{E}_0$ is displayed on Fig.~\ref{protocol-E0-old}. 
\begin{figure}[htbp]
\begin{framed}
  \centering
\begin{enumerate}
\item \textbf{State Preparation:} Alice and Bob each have access to a $2n$-mode state. The global state is denoted by $\rho_{AB}^{2n}$.
\item \textbf{Measurement:} Alice and Bob measure their modes with heterodyne detection. They obtain two strings $X, Y \in \mathbbm{R}^{4n}$. Bob discretizes his $4n$-vector $Y$ in order to obtain the $m$-bit string $U$, where $m = 4dn$, i.e. each symbol is encoded with $d$ bits of precision.
\item \textbf{Error Correction:} Bob sends some side information to Alice (syndrome of $U$ for a linear error correcting code $C$ agreed on in advance) and Alice outputs a guess $U_A$ for the string of Bob. Bob computes a hash of $U$ of length $\lceil \log(1/\epsilon_{\mathrm{cor}}) \rceil$ and sends it to Alice who compares it with her own hash. If both hashes coincide, the protocol resumes, otherwise it aborts. The value $\mathrm{leak}_{\mathrm{EC}}$ corresponds to the total number of bits sent by Bob during the error correction phase.
\item \textbf{Parameter Estimation:} Bob sends $n_{\mathrm{PE}}$ bits of information to Alice that allow her to obtain three values $\gamma_a, \gamma_b$ and $\gamma_c$ defined in Eq.~\ref{def-gammaa-appendix}, \ref{def-gammab-appendix} and \ref{def-gammac-appendix}. 
If $\gamma_a \leq \Sigma_a^{\max}$ and $\gamma_b \leq \Sigma_b^{\max}$ and $\gamma_c \geq \Sigma_c^{\min}$, then the protocol continues. Otherwise it aborts. 
\item \textbf{Privacy Amplification:} Alice and Bob apply a random universal$_2$ hash function to their respective strings, obtaining two strings $S_A$ and $S_B$ of size $l$. 
\end{enumerate} 

\end{framed}
\caption{Protocol $\mathcal{E}_0$, with reverse reconciliation and parameters $n, l, d, \mathrm{leak}_{\mathrm{EC}}, \epsilon_{\mathrm{cor}}, n_{\mathrm{PE}}, \epsilon_{\mathrm{PE}}, \Sigma_a^{\max}, \Sigma_b^{\max}, \Sigma_c^{\min}$. }
\label{protocol-E0-old}
\end{figure}

We also recall our main result:
\begin{theo} 
\label{key-rate-theorem-2}
The protocol $\mathcal{E}_0$ is $\epsilon$-secure against collective attacks if $\epsilon = 2\epsilon_{\mathrm{sm}} + \bar{\epsilon} + \epsilon_{\mathrm{PE}}/\epsilon + \epsilon_{\mathrm{cor}}/\epsilon + \epsilon_{\mathrm{ent}}/\epsilon$ and 
\begin{align}
l \leq & 2n \left[ 2\hat{H}_{\mathrm{MLE}}(U) - f(\Sigma_a^{\max}, \Sigma_b^{\max}, \Sigma_c^{\min}) \right]- \mathrm{leak}_{\mathrm{EC}}  -  \Delta_{\mathrm{AEP}} - \Delta_{\mathrm{ent} } - 2\log\frac{1}{2\bar{\epsilon}},
\label{key-rate}
\end{align}
where $\hat{H}_{\mathrm{MLE}}(U)$ is the empirical entropy of $U$, $\Delta_{\mathrm{AEP}} := \sqrt{2n} \left[(d+1)^2 + 4(d+1) \log_2 \frac{2}{\epsilon_{\mathrm{sm}}^2} + 2\log_2 \frac{2}{\epsilon^2\epsilon_{\mathrm{sm}}}\right] - 4 \frac{\epsilon_{\mathrm{sm}} d}{\epsilon}$, $\Delta_{\mathrm{ent}} :=   \log_2 \frac{1}{\epsilon}-  \sqrt{8n \log_2^2 (4 n) \log(2/\epsilon)}$ and $f$ is the function computing the Holevo information between Eve and Bob's  measurement result for a Gaussian state with covariance matrix parametrized by $\Sigma_a^{\max}, \Sigma_b^{\max}, \Sigma_c^{\min}$.
\end{theo}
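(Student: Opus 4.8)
\emph{Overall plan.} I would run the standard smooth-entropy architecture for composable QKD security proofs, but feed it the covariance-matrix confidence region of Theorem~\ref{PE-test} in place of the usual finite-dimensional parameter estimation. First I split $\epsilon$ into a correctness and a secrecy contribution. Correctness is immediate from the protocol: in the Error Correction step Bob sends a $2$-universal hash of $U$ of length $\lceil\log_2(1/\epsilon_{\mathrm{cor}})\rceil$ and the protocol aborts unless the hashes agree, so the probability that it does not abort while $\hat U\neq U$ is at most $\epsilon_{\mathrm{cor}}$; since in reverse reconciliation $S_B$ is a fixed function of $U$ and $S_A$ the same function of $\hat U$, this bounds $\Pr[S_A\neq S_B]$ by $\epsilon_{\mathrm{cor}}$. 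It then remains to bound the secrecy $(1-p_{\mathrm{abort}})\,\tfrac12\|\rho_{S_A E'}-\omega_l\otimes\rho_{E'}\|_1$ of Bob's key conditioned on not aborting, where $E'$ collects Eve's quantum register and the public transcript $C$.

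\emph{Privacy amplification and the transcript.} The key is a random universal$_2$ hash of $U$ of output length $l$, so by the Leftover Hashing Lemma against quantum side information it is $\delta$-secret with $\delta\le 2\epsilon_{\mathrm{sm}}+\tfrac12\,2^{-\frac12\bigl(H_{\min}^{\epsilon_{\mathrm{sm}}}(U\,|\,EC)-l\bigr)}$; picking $l$ so that the exponential is at most $\bar\epsilon$ accounts for the $2\epsilon_{\mathrm{sm}}+\bar\epsilon$ terms and the $-2\log\frac1{2\bar\epsilon}$ in \eqref{key-rate}. The transcript is then peeled off with chain rules for the smooth min-entropy: the hash choice is independent of $U$ and drops out, while the $\mathrm{leak}_{\mathrm{EC}}$ bits of syndrome and hash (and, negligibly, the $O(\log(1/\epsilon_{\mathrm{PE}}))$ bits of the PE round) lower the entropy by at most their length, so $H_{\min}^{\epsilon_{\mathrm{sm}}}(U|EC)\ge H_{\min}^{\epsilon_{\mathrm{sm}}}(U|E)-\mathrm{leak}_{\mathrm{EC}}-O(\log(1/\epsilon_{\mathrm{PE}}))$. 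All of this is done on the sub-normalised state conditioned on the protocol not aborting, which is what lets the abort probability absorb the failure events below.

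\emph{The entropy bound.} Here the collective-attack hypothesis is used: the pre-measurement state is $\rho_{AB}^{\otimes 2n}$, so after heterodyne detection and discretisation $\rho_{UE}$ is i.i.d.\ over the $2n$ modes, and the Asymptotic Equipartition Property for the smooth min-entropy gives $H_{\min}^{\epsilon_{\mathrm{sm}}}(U|E)\ge 2n\,H(U_1|E_1)-\Delta_{\mathrm{AEP}}$, the $\sqrt{2n}$ finite-size penalty with its $d$-dependent constant being exactly $\Delta_{\mathrm{AEP}}$ (the classical symbol $U_1$ carries $2d$ bits, fixing the constant; the infinite-dimensionality of $E_1$ is harmless for this form of the AEP). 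Writing $H(U_1|E_1)=H(U_1)-\chi(U_1;E_1)$, I distinguish two cases for the fixed state $\rho_{AB}$. If $2n\,\chi(U_1;E_1)\le 2n\,f(\Sigma_a^{\max},\Sigma_b^{\max},\Sigma_c^{\min})$, then combining the AEP bound with a concentration inequality for the MLE entropy estimator---$\hat H_{\mathrm{MLE}}$ obeys a bounded-difference concentration with increments $O(\log(4n)/4n)$ and has non-positive bias, so $\sum_i H(U_i)\ge 4n\,\hat H_{\mathrm{MLE}}(U)-\Delta_{\mathrm{ent}}$ except with probability $\epsilon_{\mathrm{ent}}$---together with the chain-rule bound and the choice of $l$ above shows the key is $\delta$-secret with $\delta$ controlled by $2\epsilon_{\mathrm{sm}}+\bar\epsilon+\epsilon_{\mathrm{ent}}$ (up to the smoothing rescaling from conditioning on non-abort). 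In the opposite case $f(\Sigma_a^{\max},\Sigma_b^{\max},\Sigma_c^{\min})$ underestimates Eve's information, so ``PE passes'' implies the bad event of Theorem~\ref{PE-test}, whence $p_{\mathrm{pass}}\le\epsilon_{\mathrm{PE}}$ and the sub-normalised conditioned state has trace at most $\epsilon_{\mathrm{PE}}$, making the key trivially $\epsilon_{\mathrm{PE}}$-secret. A union bound over the failure events---PE test underestimates $\chi$, undetected EC failure, entropy estimate too large---combined with the non-abort conditioning gives the claimed $\epsilon$.

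\emph{Main obstacle.} The one genuinely new ingredient, and the real difficulty, is Theorem~\ref{PE-test}: an honest confidence region for $\Sigma_a,\Sigma_b,\Sigma_c$ with \emph{no} assumption on $\rho_{AB}$, despite the covariance matrix being a priori unbounded and the state having already been measured when PE runs. The device of the main text---symmetrising with a random passive linear-optical network, delegating the two halves to virtual parties, then showing this whole construction is exactly simulated by Alice alone from $\|X\|^2,\|Y\|^2,\langle X,Y\rangle$---is what makes $\gamma_a,\gamma_b,\gamma_c$ and their tail bounds go through; granted that, the rest is routine smooth-entropy bookkeeping. Two secondary points still need care: checking the AEP and the chain rules in the form required when Eve's system is an infinite-dimensional Fock space, and the concentration-plus-bias analysis of $\hat H_{\mathrm{MLE}}$ that yields $\Delta_{\mathrm{ent}}$.
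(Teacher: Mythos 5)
Your overall architecture (leftover hashing, chain rule for the transcript, AEP, Gaussian extremality fed by the PE confidence region, empirical-entropy estimation) matches the paper's, and you correctly single out Theorem~\ref{PE-test} as the genuinely new ingredient. However, there is a concrete gap at the step you dismiss as ``routine smooth-entropy bookkeeping''. You apply the Leftover Hash Lemma to the sub-normalised state conditioned on the protocol not aborting, but you then bound its smooth min-entropy by invoking the \emph{standard} i.i.d.\ AEP on $\rho_{UE}^{\otimes 2n}$, asserting that the usual $\sqrt{2n}$ penalty ``is exactly $\Delta_{\mathrm{AEP}}$''. These two states are not the same: conditioning on passing the PE and EC tests destroys the i.i.d.\ structure, so the i.i.d.\ AEP does not apply to the state that actually enters privacy amplification, and no argument is offered to transfer the bound. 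The paper's security proof hinges precisely on a dedicated ``AEP for conditional states'' theorem, proved for $\tau=\frac1p\,\Pi\rho^{\otimes n}\Pi$ via R\'enyi-entropy bounds, duality through a purification, and the Alicki--Fannes inequality; its conclusion carries extra penalties depending on the passing probability $p$ (taken $\geq\epsilon$), which is exactly where the terms $2\log_2\frac{2}{\epsilon^2\epsilon_{\mathrm{sm}}}$ and $4\epsilon_{\mathrm{sm}}d/\epsilon$ in $\Delta_{\mathrm{AEP}}$ come from. The standard AEP penalty does not contain them, so your identification of the constants cannot be made to work as stated.

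The same conditioning issue propagates to two further points of your accounting. First, the empirical-entropy bound: the paper's Eq.~(\ref{AEP-V}) shows that for the distribution conditioned on the passing set one loses an additional $\log_2(1/p_A)\leq\log_2(1/\epsilon)$ and the failure probability inflates to $\epsilon_{\mathrm{ent}}/p_A\leq\epsilon_{\mathrm{ent}}/\epsilon$; this is why $\Delta_{\mathrm{ent}}$ contains the $\log_2\frac{1}{\epsilon}$ term, which your unconditioned Antos--Kontoyiannis-type bound omits. Second, the security parameter itself: the terms $\epsilon_{\mathrm{PE}}/\epsilon+\epsilon_{\mathrm{cor}}/\epsilon+\epsilon_{\mathrm{ent}}/\epsilon$ in the theorem arise from dividing failure probabilities by the passing probability $p\geq\epsilon$ (with the complementary case $p<\epsilon$ handled by declaring the protocol trivially secure), not from the plain union bound $\epsilon_{\mathrm{PE}}+\epsilon_{\mathrm{cor}}+\epsilon_{\mathrm{ent}}$ your sketch suggests. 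Relatedly, your case distinction on the single-copy quantity $\chi(U_1;E_1)$ does not quite match what Theorem~\ref{PE-test} delivers: the paper's PE analysis bounds the Holevo quantity $\chi(Y;E)_{\tau}$ of the \emph{conditioned} state (except with probability $\epsilon_{\mathrm{PE}}/p$), via the symmetrization and virtual parties, and this is the quantity that is subtracted after the conditional-state AEP. To close the proof you would need either the paper's conditional-state AEP or an explicit lemma relating $H_{\min}^{\epsilon_{\mathrm{sm}}}$ of the post-selected state to that of the i.i.d.\ state with a quantitative loss in $p$; as written, the central technical step is missing.
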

Note that in the main text, we used that $\epsilon =  \sqrt{\epsilon_{\mathrm{PE}} + \epsilon_{\mathrm{cor}} + \epsilon_{\mathrm{ent}}} + 2\epsilon_{\mathrm{sm}} + \bar{\epsilon} $ as a possible security parameter.

The function $f$ is defined as follows:
\begin{align}
f(x,y,z)  := g((\nu_1-1)/2)+g((\nu_2-1)/2)-g((\nu_3-1)/2),
\end{align}
where $\nu_1$ and $\nu_2$ are the symplectic eigenvalues of the covariance matrix $\left[
\begin{smallmatrix}
x \mathbbm{1}_2 & z \sigma_z \\
z  \sigma_z &y\mathbbm{1}_2 \\
\end{smallmatrix}
\right]$, $\nu_3 =x^2- \left(z^2 \right)^2/(1+ y)$, $\sigma_z = \mathrm{diag}(1,-1)$ and the entropy function $g$ is given by  $g(x) := (x+1) \log_2(x+1) -x \log_2 (x)$.

The symplectic eigenvalues $\nu_1$ and $\nu_2$ of the covariance matrix $\left[
\begin{smallmatrix}
x \mathbbm{1}_2 & z \sigma_z \\
z  \sigma_z &y\mathbbm{1}_2 \\
\end{smallmatrix}
\right]$
satisfy the following relations:
\begin{align}
\nu_1^2 + \nu_2^2 &= x^2 +y^2-2z^2\\
\nu_1^2 \nu_2^2 & = (xy-z^2)^2.
\end{align}
More explicitly,
\begin{align}
\nu_1 &= \sqrt{\frac{1}{2} \left( x^2 +y^2-2z^2  + \sqrt{(x^2 +y^2-2z^2)-2-4 (xy-z^2)^2}\right)}\\
\nu_2 &= \sqrt{\frac{1}{2} \left( x^2 +y^2-2z^2  - \sqrt{(x^2 +y^2-2z^2)-2-4 (xy-z^2)^2}\right)}.
\end{align}

The empirical entropy of $U$ is computed as follows. The random variable $U$ is discrete and takes values in a set of size $2^d$ (where $d$ is the discretization parameter). 
Let us denote by $\hat{n}_i$ the number of times the variable $U$ takes the value $i$, for $i \in \{1, \ldots, 2^d\}$, and let us denote by $\hat{p}_i = \frac{\hat{n}_i}{4n}$ the relative frequency of obtaining the value $i$. Note that there are $4n$ samples in total, since $2n$ states are measured, and each measurement gives two outcomes, one per quadrature. 
We define the Maximum Likelihood Estimator (MLE), also know as ``empirical entropy", for $H(U)$ to be 
\begin{align}
\hat{H}_{\mathrm{MLE}}(U):= - \sum_{i=1}^{2^d} \hat{p}_i \log \hat{p}_i.
\end{align}

Finally, the leakage term $\mathrm{leak}_{\mathrm{EC}}$ is simply the size of the syndrome transmitted by Bob to Alice during the error correction step. 

We now detail the various steps of the protocol. 

\subsection{State Preparation}
Alice prepares $2n$ copies of a two-mode squeezed vacuum state, $|\Phi\rangle^{\otimes 2n}$ where
$|\Phi\rangle = \left[ \frac{2}{V+1}\right]^{1/4} \sum_{k=0}^\infty \left[ \frac{V-1}{V+1}\right]^{k/2}|k,k\rangle$. The squeezing parameter of the state is optimized as a function of the expected characteristics of the quantum channel. The covariance matrix of the two-mode squeezed vacuum state (with the ordering convention $\hat{q}_A, \hat{p}_A, \hat{q}_B, \hat{p}_B$ where $\hat{q}_{A/B}$ and $\hat{p}_{A/B}$ are the quadrature operators for Alice and Bob, respectively) is given by
\begin{align}
\Gamma_{\mathrm{TMSS}} = \left[
\begin{matrix}
V \mathbbm{1}_2 & \sqrt{V^2-1} \sigma_z \\
\sqrt{V^2-1} \sigma_z & V \mathbbm{1}_2 \\
\end{matrix}
\right],
\end{align}
where $\mathbbm{1}_2 = \mathrm{diag}(1,1)$ and $\sigma_z = \mathrm{diag}(1,-1)$.
Alice keeps the first mode of each state $|\Phi\rangle$ and sends the second half to Bob through an insecure quantum channel. 
When proving security, we will not make any assumption about the quantum channel, which can be arbitrary. When considering the robustness of the protocol, however, it makes sense to model the channel as one that typically occurs in implementations. Our model of choice is a Gaussian channel with fixed transmittance $T$ and excess noise $\xi$. In this specific case, the quantum state shared by Alice and Bob is Gaussian and its covariance matrix reads:
\begin{align}
\Gamma_{\mathrm{Gauss}} = \left[
\begin{matrix}
V \mathbbm{1}_2 &\sqrt{T} \sqrt{V^2-1} \sigma_z \\
\sqrt{T} \sqrt{V^2-1} \sigma_z & (TV - T+1+T\xi) \mathbbm{1}_2 \\
\end{matrix}
\right].
\end{align}

\subsection{Measurement}

\subsubsection{Obtaining two real-valued vectors}

In the Entanglement-Based version of the protocol, the measurement phase is straightforward. Alice and Bob both have access to $2n$ modes, which they measure with a heterodyne detection. We denote by $\hat{q}$ and $\hat{p}$ the two quadratures for each mode. Alice and Bob then form two vectors of length $4n$ where odd index coordinates refer to quadrature measurement $\hat{q}$ and even index coordinates refer to quadrature $\hat{p}$. 
We denote these two vectors by $X^{\mathrm{exp}} = (X^{\mathrm{exp}}_1, \ldots, X^{\mathrm{exp}}_{4n})$ for Alice and $Y^{\mathrm{exp}} = (Y^{\mathrm{exp}}_1, \ldots, Y^{\mathrm{exp}}_{4n})$ for Bob. 
The superscript $\mathrm{exp}$ refers to experimental data. 

Conditioned on her measurement outcomes $X^{\mathrm{exp}}$, Alice has effectively prepared $2n$ coherent states for Bob (before the quantum channel), $\{|X^{\mathrm{PM}}_{2k+1} + i X^{\mathrm{PM}}_{2k+2}\rangle\}_{k=1\cdots 2n}$ where the vector $X^{\mathrm{PM}}$ is related to $X^{\mathrm{exp}}$ through 
\begin{align}
\label{PM-EB}
X_k^{\mathrm{PM}} := \left\{ 
\begin{array}{cc}
\sqrt{\frac{V-1}{V+1}} X_k^{\mathrm{exp}}  & \text{if $k$ is odd,}\\ 
 -\sqrt{\frac{V-1}{V+1}}  X_k^{\mathrm{exp}}  & \text{if $k$ is even.}
\end{array}
\right.
\end{align}
Here $V$ corresponds to the variance of the initial two-mode squeezed vacuum states prepared by Alice. 

The equivalence between the Entangled-Based and the Prepare-and-Measure versions of the protocol is seen from the fact that Alice's vectors in both cases are related through Eq.~\ref{PM-EB}. We recall that in the Prepare-and-Measure version, Alice would simply prepare the $2n$ coherent states $\{|X^{\mathrm{PM}}_{2k+1} + i X_{2k+2}^{\mathrm{PM}}\rangle\}_{k=1\cdots 2n}$ and send them to Bob, where the random variables $X_k^{\mathrm{PM}}$ are i.i.d.~centered Gaussian variables with variance $(V-1)$.

\subsubsection{Centering the measurement outcomes}

In the QKD protocols investigated in this paper, Alice and Bob will measure a quantum state with a heterodyne detection and obtain continuous-valued outcomes  $X_1^{\mathrm{exp}}, \ldots, X_{4n}^{\mathrm{exp}}$ and $Y_1^{\mathrm{exp}}, \ldots, Y_{4n}^{\mathrm{exp}}$ (where the components with an odd index correspond to measurement outcomes for the $\hat{q}$ quadrature and components with an even index correspond to $\hat{p}$ quadrature measurements), and where the superscript $\mathrm{exp}$ refers to the fact that the variables correspond to the effectively measured experimental data. 

The security of the key that can be extracted from a given state with the protocols described here does not depend on the first moment of the quantum state: only its covariance matrix matters \cite{GC06, NGA06}. For that reason, Alice and Bob can apply a displacement to their state before measuring it or alternatively simulate this displacement at the level of classical data. 
For reasons that will appear more clearly later (related to the symmetrization procedure needed for Parameter Estimation for instance), it is better for the quantum state to be centered. 
To do that, Alice and Bob can compute two values for the displacement: 
\begin{align}
D_q^A := \frac{1}{2n} \sum_{k=0}^{2n-1} X_{2k +1}^{\mathrm{exp}}, \quad D_p^A := \frac{1}{2n} \sum_{k=1}^{2n} X_{2k}^{\mathrm{exp}}, \quad
D_q^B := \frac{1}{2n} \sum_{k=0}^{2n-1} Y_{2k +1}^{\mathrm{exp}} \quad \text{and} \quad D_p^B := \frac{1}{2n} \sum_{k=1}^{2n} Y_{2k}^{\mathrm{exp}}
\end{align}
and define the new variables 
\begin{align}
X_k & := \left\{ 
\begin{array}{cc}
X_k^{\mathrm{exp}} - D_q^A & \text{if $k$ is odd,}\\ 
X_k^{\mathrm{exp}} - D_p^A & \text{if $k$ is even,}
\end{array}
\right.\\
Y_k & := \left\{ 
\begin{array}{cc}
Y_k^{\mathrm{exp}} - D_q^B & \text{if $k$ is odd,}\\ 
Y_k^{\mathrm{exp}} - D_p^B & \text{if $k$ is even,}
\end{array}
\right.
\end{align}
and use these variables in the remainder of the QKD protocol.
The resulting random variables $X_k$ and $Y_k$ are therefore centered. 
One advantage of working with centered variables is that the covariance matrix elements can be simply expressed in terms the expectations of $\|X\|^2, \|Y\|^2$ and $\langle X,Y\rangle$.

\subsubsection{Discretization of a continuous variable}

In this work, we only consider discretization of continuous variables, one at a time. More precisely, we will divide the real axis into $2^d$ intervals. Here, the choice of a power of 2 is only made for convenience. 
This partition should be chosen so as to maximize the secret key rate in the case where the adversary is passive, when the quantum channel acts as a Gaussian channel with fixed transmittance and excess noise. In that case, Bob's outcomes are centered Gaussian random variables. 
Bob can easily compute the average variance of his measurement outcomes, $\frac{1}{4n} \|Y\|^2$. 
Consider the $2^d$ quantiles $\mathcal{I}_1, \ldots, \mathcal{I}_{2^d}$ of the normal distribution $\mathcal{N}(0, \frac{1}{4n} \|Y\|^2)$. We will apply the discretization map $\mathcal{D}: Y \mapsto U$ that assigns a distinct value for each quantile: $\mathcal{D}(Y_k) = j$ if $Y_k \in \mathcal{I}_j$.

We note that if the quantum channel is indeed Gaussian, then the random variable $U = \mathcal{D}(Y)$ should be close to uniform, i.e. have almost maximum entropy.
In fact, this discretization is suboptimal because it does not maximize the mutual information between Alice and Bob, $I(X;U)$. This will lead to a slightly suboptimal secret key rate, but with the advantage of a simpler security analysis. Better schemes involving a multidimensional error correction have been investigated in the literature \cite{LAB08} and could probably be proven secure against general attacks. 

At the end of this step, Alice knows the vector $X\in \mathbbm{R}^{4n}$ and Bob knows both vectors $Y \in \mathbbm{R}^{4n}$ and $U \in \{1, \ldots, 2^d\}^{4n}$.

\subsection{Error correction}

The goal of this step is for Alice to learn the string $U$. 
In order to help her calibrate her data, Bob first sends her the value of $\|Y\|^2$ which is used for the discretization function $\mathcal{D}$. This can be done by sending the value of $\sqrt{\|Y\|^2/(4n)}$ with a few bits of precision.
A possible technique to perform error correction is called \emph{slice reconciliation} \cite{VCC04, JEK14}. This means that Alice and Bob agree (before they start the QKD protocol) on a linear error correcting code $C$ (or, more precisely, on a family of such codes) that encodes $K$-bit binary strings into $4dn$-bit code words. This code is described by a parity-check matrix $H$ of size $(4dn) \times (4dn-K)$. 
Bob computes the syndrome $HU$ of his vector (interpreted as a binary string of length $4dn$) and sends this syndrome to Alice. This syndrome accounts for most of the leakage during the error correction procedure.
A figure of merit typically used in CV QKD to assess the quality of the error correction is the \emph{reconciliation efficiency} $\beta$ defined as:
\begin{align}
\beta = \frac{4dn - \mathrm{leak}_{\mathrm{EC}}}{2n \log_2(1+\mathrm{SNR})}, 
\end{align}
where the signal-to-noise ratio (SNR) is that of the expected Gaussian channel mapping $X$ to $Y$. The reconciliation efficiency compares how much information was extracted through the error correction procedure and compares it with the available mutual information corresponding to a Gaussian quantum channel of transmittance $T$ and excess noise $\xi$. In particular, one expects:
\begin{align}
\mathrm{SNR} = \frac{T(V-1)}{2+T\xi}.
\end{align}
In that case, the quantity $\frac{1}{2} \log_2(1+\mathrm{SNR})$ is the mutual information for each use of the channel, and gives an upper bound on the quantity of (classical) information that can be transmitted over the classical channel $X_i \mapsto Y_i$. 
The reconciliation efficiency therefore quantifies how far the error correction procedure is from an ideal one: if $\beta = 1$, then it is perfect. In practice, one can typically achieve $\beta \approx 0.95$ for a Gaussian channel \cite{JEK14}.

Once Alice learns the syndrome $HU$, she can use it together with her vector $X$ and the value of $\|Y\|$ in order to recover an estimate $\hat{U}$ of $U$. This is done by decoding the code $C$ in the coset corresponding to the syndrome. There are very practical algorithms for decoding as soon as the code $C$ is a low-density parity-check code for instance.

After this step, it is necessary to know whether the error correction worked, i.e. whether $\hat{U} = U$ or not. To achieve this, the usual technique is for Alice and Bob to choose a random universal$_2$ hash function mapping $4dn$-bit strings to strings of length $\lceil \log(1/\epsilon_{\mathrm{cor}}) \rceil$. Then Bob reveals his hash to Alice. If both hashes coincide, the protocol resumes, otherwise it aborts.
The length of the hash is chosen so the protocol is $\epsilon_{\mathrm{cor}}$-correct (see Theorem $1$ of Ref.~\cite{TLG12} for a proof of this statement).

\subsection{Parameter Estimation}

The Parameter Estimation is arguably the most crucial step in a QKD protocol: it is a coarse-grained version of quantum tomography, the process of inferring the description of a quantum state when one only has access to measurement outcomes. 
It is only a coarse-grained version of tomography because the security of the key usually only depends on a small number of parameters of the underlying quantum state. 

Before we describe in detail the Parameter Estimation step of the protocol $\mathcal{E}_0$, it is useful to explain a bit more how one can perform the tomography of continuous-variable quantum systems. Our approach is based on the paradigm introduced by Christandl and Renner in Ref.~\cite{CR12}.

\subsubsection{Estimation of the covariance matrix of a CV system}

One specific aspect of quantum tomography is that it does not make sense to speak of a quantum state if the state does not exist anymore, because it was already measured, say. 
For this reason, a quantum tomography process can be described as a CPTP map as follows:
\begin{align}
 \begin{array}{cccc}
\mathcal{T}\!\mathrm{om}\colon & \mathcal{H}^{\otimes (n+k)} & \rightarrow & \mathcal{H}^{\otimes n} \otimes \mathcal{R}\\
& \rho^{n+k} & \mapsto & \rho^n \otimes R,
\end{array}
\end{align}
where $R$ is a classical random variable corresponding to a \emph{confidence region} for the output state $\rho^n$. More precisely, the random variable $R$ describes a region of $\mathcal{P}_{=}(\mathcal{H}^{\otimes n})$, the set of normalized density operators on $\mathcal{H}^{\otimes n}$, which is believed to contain the state $\rho^n$. Here the superscript $n$ for $\rho^n$ should not be interpreted as saying that the state has an i.i.d.~structure, nor that $\rho^n$ corresponds to a marginal state of $\rho^{n+k}$; it is merely a reminder of the size of Hilbert space it lives in. 

The quantum tomography protocol involves performing measurements on a subsystem (here living in $\mathcal{H}^{\otimes k}$) of the initial quantum state and the quality of the protocol is assessed by two parameters: the probability $\epsilon_{\mathrm{tom}}$ that the prediction is false and the size of the region. Of course, if the region $R$ is equal to the total space, $\mathcal{H}^{\otimes n}$, the prediction is always correct, but it is also uninformative. 
For a QKD protocol, the size of the region will influence the tightness of the bound on the secret key rate: smaller confidence regions lead to tighter key rates. But the more crucial parameter is the probability that the prediction is incorrect. 
This probability is the smallest value of $\epsilon_{\mathrm{tom}}$ such that:
\begin{align}
\forall \rho^{n+k} \in\mathcal{H}^{\otimes (n+k)}, \mathrm{Pr}\left[ \rho^n \in R \right] \geq 1-\epsilon_{\mathrm{tom}}
\end{align}
where $\rho^n$ and $R$ are the output of the map $\mathcal{T}\!\mathrm{om}$ applied to $\rho^{n+k}$.  

In the context of CV QKD, we are interested in estimating the (averaged) covariance matrix of a bipartite state $\tilde{\rho}_{AB}^{n}$ (this state is in fact obtained from $\rho_{AB}^{2n}$ after a suitable symmetrization as we will explain later) which is characterized by three real values:
\begin{align}
\Sigma_a & := \frac{1}{2n} \sum_{i=1}^n \left[ \langle q_{A_i}^2 \rangle + \langle p_{A,i}^2\rangle \right]\\
\Sigma_b & := \frac{1}{2n}  \sum_{i=1}^n  \left[\langle q_{B_i}^2 \rangle + \langle p_{B,i}^2\rangle \right]\\
\Sigma_c & := \frac{1}{2n}  \sum_{i=1}^n  \left[\langle q_{A,i} q_{B,i} \rangle - \langle p_{A,i} p_{B,i}\rangle \right]
\label{cov-matrix-elements}
\end{align}
where $q_{A,i}$ is the quadrature operator $\frac{1}{\sqrt{2}} (\hat{a}_i + \hat{a}^\dagger_i)$ for the $i^{\mathrm{th}}$ mode of Alice for instance. 
For CV QKD, the region $R$ will be a confidence region for the three parameters $\Sigma_a, \Sigma_b$ and $\Sigma_c$. 
There are two reasons for this choice. The first one is the extremality property of Gaussian states for the Holevo information\cite{WGC06,GC06}: this says that this information can be upper bounded by a function of the covariance matrix of the state. Second, it is sufficient to compute this bound for the following symmetrized covariance matrix:
\begin{align}
\Gamma^{\mathrm{sym}} := \bigoplus_{i=1}^n \left[
\begin{matrix}
\Sigma_a & 0 & \Sigma_c & * \\
0 & \Sigma_a & * & -\Sigma_c \\
\Sigma_c & * & \Sigma_b & 0 \\
* & -\Sigma_c & 0 & \Sigma_b \\
\end{matrix}
\right].
\end{align} 
where the entries $*$ are not specified. More precisely, one can always assume that these entries are 0.
This will be detailed in the section devoted to the security proof. 
Moreover, because of the properties of the Holevo information, it will be sufficient for our purpose to obtain a confidence region for $(\Sigma_a, \Sigma_b, \Sigma_c)$ of the form $[0, \Sigma_a^{\max}] \times [0, \Sigma_b^{\max}] \times [\Sigma_c^{\min}, \infty]$.

Usually the task of quantum tomography is greatly simplified if we only require it to hold for initial states which are i.i.d., that if of the form $\rho^{\otimes (n+k)}$. When it comes to estimating a covariance matrix, however, this assumption still appears quite weak. The problem comes from the fact that the coefficients of the covariance matrix are \textit{a priori} unbounded (this is a fundamental difference with finite-dimensional systems such as qubits, for which the density matrix elements are obviously bounded). 
Consider for instance the state $\sigma^{\otimes (n+k)}$ with $\sigma = (1-\epsilon) |0\rangle \langle 0| + \epsilon |N\rangle \langle N|$. The covariance matrix of $\sigma$ is $\mathrm{diag}(1+ N\epsilon/2, 1+N\epsilon/2)$ but any tomographic procedure that only examines $k \ll 1/\epsilon$ modes will conclude that the covariance matrix is close to that of the vacuum, which is clearly false if $N\epsilon \gg1$.
In order to solve this issue, the solution consists in first appropriately symmetrizing the state $\rho^{n+k}$ before measuring $k$ subsystems and inferring properties for the remaining $n$ modes. 
This will solve the problem of parameter estimation, but create a new issue, namely that the output state of the tomographic procedure will not have a i.i.d.~structure anymore, making the analysis of collective attacks more complicated.

The symmetrization that makes sense for CV QKD is the one that maximally symmetrizes the state while leaving the (averaged) values $\Sigma_a, \Sigma_b, \Sigma_c$ of the covariance matrix unchanged. 
We explain this symmetrization now for an initial state $\rho^{n+k}$ on $F_A^{\otimes (n+k)} \otimes F_B^{\otimes (n+k)}$.
Consider the unitary group $U(n+k)$ acting on $\mathbbm{C}^{n+k}$. This group acts in a natural way on the $(n+k)$-mode Fock space $F^{\otimes(n+k)}$ by associating to $V \in U(n+k)$ the unitary operator $\Phi(V)$ acting on $F^{\otimes(n+k)}$ which maps the vector $\vec{a} = (a_1, \ldots, a_{n+k})$ of annihilation operators of the $n+k$ modes to the vector $V\vec{a}$. This operation corresponds to a passive linear symplectic map in phase space and can be implemented thanks to a network of beamsplitters and phase-shifts acting on the $n+k$ modes. 
The group $U(n+k)$ also acts naturally on $F_A^{\otimes (n+k)} \otimes F_B^{\otimes (n+k)}$ by associating to $V$ the unitary $\Phi(V)_A \otimes \Phi(V^*)_B$ where $V^*$ represents the complex conjugate of $V$. 
The symmetrization procedure consists in drawing a random unitary $V$ from the Haar measure $dV$ on $U(n+k)$ and applying $\Phi(V)_A \otimes \Phi(V^*)_B$ to the state:
\begin{align}
\label{sym-map}
\begin{array}{cccc}
\mathrm{Sym}\colon &  \mathcal{P}_=\left(F_A^{\otimes (n+k)} \otimes F_B^{\otimes (n+k)}\right) & \rightarrow & \mathcal{P}_=\left(F_A^{\otimes (n+k)} \otimes F_B^{\otimes (n+k)}\right) \\
& \rho_{AB}^{n+k} & \mapsto & \int \left(\Phi(V)_A \otimes \Phi(V^*)_B\right) \rho_{AB}^{n+k} \left(\Phi(V)_A \otimes \Phi(V^*)_B\right)^\dagger \mathrm{d}V.
\end{array}
\end{align}

The Parameter Estimation procedure that we have in mind first symmetrizes the state with $\mathrm{Sym}$, then measures the last $k$ modes with heterodyne detection and uses the measurement outcomes to give a confidence region for $\Sigma_a, \Sigma_b, \Sigma_c$.
In the QKD protocol, we will in fact choose $k=n$. If Alice and Bob were to proceed with the tomography procedure explained above, they would therefore measure $n$ modes each with heterodyne detection, and attempt to infer a confidence region for the covariance matrix of the $n$ remaining modes. 
Let us denote by $X_2$ and $Y_2$ the $2n$-vectors corresponding to their measurement results. 
In order to estimate $\Sigma_a, \Sigma_b, \Sigma_c$, Alice and Bob need to compute $\|X_2\|^2, \|Y_2\|^2$ and $\langle X_2, Y_2\rangle$. Since we study a QKD protocol, it is essential to limit classical communication to a minimum, because it might help the adversary. 
Our goal is for Alice to be able to perform the parameter estimation: it is clear that she can compute $\|X_2\|^2$. Similarly, Bob can compute $\|Y_2\|^2$ locally and simply send the result to Alice using a few bits to encode its value (as we explained before). 
Computing the inner product $\langle X_2, Y_2\rangle$ is a bit more problematic. In fact, computing the inner product of two vectors held by distant parties is a very well-known problem in communication complexity where the goal is to achieve the task while exchanging as little public information as possible. 
In the QKD setting, we are helped because the parameter estimation can be performed \emph{after} the error correction. This means in particular that Alice knows $X_2$ as well as $\hat{Y_2}$, her estimate of $Y_2$. Recall that this is only an estimate because the error correction procedure only allows her to recover Bob's discretized vector (with high probability if the protocol did not abort). 

Our goal is to obtain a lower bound on the value of $\langle X_2, Y_2 \rangle$, which will translate into a lower bound for $\Sigma_c$.

The discretization procedure maps $Y$ to $U$. We define another map that attempts to invert the discretization. Recall that $\mathcal{I}_i$ is the $i^{\mathrm{th}}$ quantile of the distribution $\mathcal{N}(0,v)$ for a specific variance $v$. One can define $2^d$ values, one for each quantile, as follows:
\begin{align}
\hat{y}_i := 2^d \int_{\mathcal{I}_i} \frac{x}{\sqrt{2\pi v}}e^{-\frac{x^2}{2v}} \mathrm{d}x,
\end{align}
which means that $\hat{y}_i$ corresponds to the mean of the Gaussian random variable conditioned on the fact that it is in the quantile $\mathcal{I}_i$.

Alice knows both $X_2$ and $\hat{Y}_2$ while Bob knows $Y_2$ and $\hat{Y}_2$. Since $\langle X_2, Y_2\rangle = \langle X_2, \hat{Y}_2\rangle + \langle X_2, Y_2-\hat{Y}_2\rangle$, it is sufficient for Alice to be able to estimate $\langle X_2, Y_2-\hat{Y}_2\rangle$, a quantity expected to be very small in practice. 
In order to achieve this, Bob first sends the norm of $\|Y_2-\hat{Y}_2\|$ to Alice. This requires only a small constant number of bits. 
Then, the communication problem that Alice and Bob should solve is the following: Alice knows a $2n$-dimensional unit vector $\vec{a} = \frac{X_2}{\|X_2\|}$, and Bob knows another unit vector $\vec{b} = \frac{Y_2-\hat{Y}_2}{\|Y_2-\hat{Y}_2\|}$, and they wish to estimate $\langle \vec{a}, \vec{b}\rangle$ up to a small additive error. 

This problem is studied in Ref.~\cite{KNR95}. The technique described by Kremer, Nisan and Ron gives the value $\langle \vec{a}, \vec{b}\rangle$ with additive error $\epsilon_2$ (except with probability $\epsilon_1$) with a $k$-round protocol with $ k = \Theta(\log(1/\epsilon_1)/\epsilon_2^2)$.
This means that Bob only needs to send approximately $\log(1/\epsilon_1)/\epsilon_2^2$ bits to Alice, so that she can compute $\langle X_2, Y_2\rangle$ arbitrary well.

In the remainder of this paper, we assume for simplicity that this task can be completed perfectly. This is a legitimate assumption since $\hat{Y}_2$ only differs from $Y_2$ because of discretization errors which are arbitrarily small. 
A thorough analysis of this type of errors and of the related optimization of the discretization step is left for future work. Note that there are no conceptual difficulties hidden here, but that optimizing this task only makes sense only if it is done jointly with the error correction procedure.

\subsubsection{Parameter Estimation in the protocol $\mathcal{E}_0$}

Now that we have explained the spirit of the Parameter Estimation procedure, we can introduce the PE test. 
Once Alice learns the values of $\|X\|^2, \|Y\|^2$ and $\langle X,Y\rangle$, she can compute:
\begin{align}
\gamma_a &:= \frac{1}{2n} \left[ 1 + 2\sqrt{\frac{\log(36/\epsilon_{\mathrm{PE}})}{n}}\right] \|X\|^2-1, \label{def-gammaa-appendix}\\
\gamma_b &:= \frac{1}{2n} \left[ 1 + 2\sqrt{\frac{\log(36/\epsilon_{\mathrm{PE}})}{n}}\right] \|Y\|^2-1,\label{def-gammab-appendix}\\
\gamma_c &:= \frac{1}{2n} \langle X, Y\rangle - 5 \sqrt{\frac{\log (8/\epsilon_{\mathrm{PE}})}{n^3}}(\|X\|^2 + \|Y\|^2).\label{def-gammac-appendix}
\end{align}
%%%%%%%%%%%%%%
%%%%%%%%%%%%%%
%%%%%%%%%%%%%%
Then, she compares these values with the parameters $\Sigma_a^{\max}, \Sigma_b^{\max}$ and $\Sigma_c^{\min}$ of the protocol. If the three following conditions $\gamma_a \leq \Sigma_a^{\max}$ and $\gamma_b \leq \Sigma_b^{\max}$ and $\gamma_c \geq \sigma_c^{\min}$, then the protocol continues. Otherwise it aborts. The Parameter Estimation test is therefore defined as follows:

\textbf{Parameter Estimation Test:}
\begin{itemize}
\item \emph{Parameters}:  $\Sigma_a^{\max},  \Sigma_b^{\max},  \Sigma_c^{\min}$,
\item \emph{Input}: $\|X\|^2, \|Y\|^2, \langle X, Y\rangle$,
\item \emph{Output}: ``Pass'' if $\left[\gamma_a \leq \Sigma_a^{\max}\right] \wedge \left[\gamma_b \leq \Sigma_b^{\max} \right] \wedge \left[\gamma_c \geq \sigma_c^{\min}\right]$; ``Fail'' otherwise.
\end{itemize}

When fixing the parameters of the test, one should always apply a trade-off between the expected secret key rate and the robustness. 
Typically, if the variance of Alice's initial state is $V$, if the expected transmittance of the quantum channel is $T$ and the expected excess noise is $\xi$, one should choose 
\begin{align}
\Sigma_a^{\max} &= V + \delta_a\\
\Sigma_b^{\max} &= T(V-1)+1 +T\xi + \delta_b\\
\Sigma_c^{\min} &= \sqrt{T} (V-1) - \delta_c
\end{align}
where $\delta_a, \delta_b$ and $\delta_c$ are small positive constants which are optimized (as a function of $n$) to ensure both robustness and large secret key rate. 

In numerical applications, we choose a value of the robustness of about 1 percent, which is obtained for instance by choosing  $\delta_a, \delta_b$ and $\delta_c$ equal to 3 standard deviations for $\gamma_a, \gamma_b$ and $\gamma_c$ (for an expected Gaussian channel with transmittance $T$ and excess noise $\xi$).

\subsection{Privacy Amplification}

This step is completely standard: Alice chooses a universal$_2$ hash function \cite{BBC95,RK05} and extracts $l$ bits of secret $S_A$ from $\hat{U}$. She communicates the choice of function to Bob who uses it to compute $S_B$.

\section{Expected secret key rate}
\label{expected}

In order to compute the expected secret key rate provided by a security proof, one needs to model the quantum channel. Here, we will model it as a Gaussian channel with fixed transmissivity $T$ and fixed excess noise $\xi$. 

Our goal in this section is to explain how to reproduce the plot of Fig.~2 in the main text. 
Recall that the secret key rate is given by
\begin{align}
l \leq & 2n \left[ 2\hat{H}_{\mathrm{MLE}}(U) - f(\Sigma_a^{\max}, \Sigma_b^{\max}, \Sigma_c^{\min}) \right]- \mathrm{leak}_{\mathrm{EC}}  -  \Delta_{\mathrm{AEP}} - \Delta_{\mathrm{ent} } - 2\log\frac{1}{2\bar{\epsilon}},
\end{align}
with $\epsilon = 2\epsilon_{\mathrm{sm}} + \bar{\epsilon} + \epsilon_{\mathrm{PE}}/\epsilon + \epsilon_{\mathrm{cor}}/\epsilon + \epsilon_{\mathrm{ent}}/\epsilon$.

For concreteness, we wish to compute a secret key rate with $\epsilon = 10^{-20}$. In general, one should optimize over all the values of the parameters compatible with such an $\epsilon$, but here, we make the following (slightly suboptimal) choice:
\begin{align}
 \epsilon_{\mathrm{sm}} = \bar{\epsilon} = 10^{-21}, \quad \epsilon_{\mathrm{PE}} = \epsilon_{\mathrm{cor}} = \epsilon_{\mathrm{ent}}= 10^{-41}.
\end{align}

Moreover, we use the following model for the error correction:
\begin{align}
\beta I(A;B) = 2\hat{H}_{\mathrm{MLE}}(U)  - \frac{1}{2n} \mathrm{leak}_{\mathrm{EC}}
\end{align}
where $\beta$ is the so-called ``reconciliation efficiency" and $I(A;B)$ is the mutual information between Alice and Bob's classical data. 
Efficient error correction protocols are known for reconciling correlated Gaussian random variables and e choose $\beta = 0.95$ which is consistent with the best schemes available in the literature. 

For the Gaussian channel we consider, and if the variance modulation is $V$, we obtain:
\begin{align}
I(A;B) &= 2 \times \frac{1}{2} \log_2(1 +\mathrm{SNR}) \\
&= \log_2\left(1 + \frac{T(V-1)}{2/T\xi} \right).
\end{align}

Moreover, we choose the robustness of the protocol to be $\epsilon_{rob} \leq 10^{-2}$, which is obtained if the probability of passing the Parameter Estimation test is at least $0.99$. 
This can be achieved by taking values for $\Sigma_a^{\max}, \Sigma_b^{\max}, \Sigma_c^{\min}$ differing by 3 standard deviations from the expected values of $\gamma_a, \gamma_b, \gamma_c$.
With probability at least $0.99$, the values of random variables $\|X\|^2, \|Y\|^2, \langle X,Y\rangle$ satisfy the following inequalities:
\begin{align}
\|X\|^2 &\leq 2n(V+1) +3 \sqrt{4n(V+1)}\\ 
\|Y\|^2 &\leq 2 n (T (V - 1) + T \xi + 2) + 3 \sqrt{4 n (T (V - 1) + T \xi + 2)}\\
\langle X,Y\rangle & \geq 2 n \sqrt{T (V^2 - 1)} - 3 \sqrt{n (V - 1) (2 + T \xi)}
\end{align}
where we modeled each $(x_i, y_i)$ as identical and independent normal random variables, centered and with covariance matrix $\left[\begin{smallmatrix} V+1 & \sqrt{T (V^2 - 1)}  \\ \sqrt{T (V^2 - 1)} & T (V - 1) + T \xi + 2 \end{smallmatrix}\right]$. 
Finally, we use these bounds on $\|X\|^2, \|Y\|^2, \langle X,Y\rangle$ to define:
\begin{align}
\Sigma_a^{\max} &= \frac{1}{2n} \left[ 1 + 2\sqrt{\frac{\log(36/\epsilon_{\mathrm{PE}})}{n}}\right] \|X\|^2-1\\
\Sigma_b^{\max} &= \frac{1}{2n} \left[ 1 + 2\sqrt{\frac{\log(36/\epsilon_{\mathrm{PE}})}{n}}\right] \|Y\|^2-1\\
\Sigma_c^{\min} &= \frac{1}{2n} \langle X, Y\rangle - 5 \sqrt{\frac{\log (8/\epsilon_{\mathrm{PE}})}{n^3}}(\|X\|^2 + \|Y\|^2).
\end{align}

With all this, we are now in a position to compute the expected secret key rate displayed on Fig.~2 of the main text:
\begin{align}
r &= (1-\epsilon_{\mathrm{rob}})\frac{l}{2n}\\
&=  (1-\epsilon_{\mathrm{rob}}) \left(\beta \log_2\left(1 + \frac{T(V-1)}{2/T\xi} \right) - f(\Sigma_a^{\max}, \Sigma_b^{\max}, \Sigma_c^{\min})- \frac{1}{2n}\left[ \Delta_{\mathrm{AEP}} - \Delta_{\mathrm{ent} } - 2\log\frac{1}{2\bar{\epsilon}}\right]\right).
\end{align}
The last step is to optimize over values of the modulation variance $V$. 

\section{Tools for proving the security of the protocol $\mathcal{E}_0$ against collective attacks}
\label{tools}

In this section, we describe the various elementary tools that will be used in the security proof. First, the leftover hash lemma relates the secrecy of the protocol with the smooth min-entropy of the raw key $U$, conditioned on the adversary's system. Then, when the analysis is restricted to collective attacks, the Asymptotic Equipartition Property allows one to compute this smooth min-entropy as a function of the von Neumann entropy of a single subsystem. There are two problems remaining then: computing the entropy of the string $U$ and estimating the average covariance matrix of the quantum system. The first question is addressed thanks to a concentration result for the entropy of an i.i.d.~random variable, the second is taken care of with the analysis of the parameter estimation test.

\subsection{Leftover Hash Lemma}
One wishes to prove a statement about the secrecy of a given QKD protocol, i.e. that $\frac{1}{2} \left\| \rho_{S_A^l E'} - \omega_{l} \otimes \rho_{E'} \right\|_1 \leq \epsilon_{\mathrm{sec}}$. In other words, the variable $S_A$ should be decoupled from the system $E'$ characterizing what is available to the adversary, namely her own Hilbert space $\mathcal{H}_E$ as well as the public information, $\mathcal{C}$,  leaked during the QKD protocol.
Such a bound can be established via the \emph{Leftover Hash Lemma} \cite{ren05,TSS11}, which holds if $S_A$ is obtained by applying a random universal$_2$ hash function of length $l$ to the string $U$, where $l$ should be slightly smaller than the smooth min-entropy of $U$ conditioned on $E'$. The smooth min-entropy, $H_{\min}^\epsilon(U|E)$, introduced in Ref.~\cite{ren05}, characterizes the average probability that Eve guesses $U$ correctly using her optimal strategy with access to the correlations stored in her quantum memory \cite{KRS09}. For a precise mathematical definition, we refer the reader to Ref.~\cite{TCR09}. 

 More precisely, the Privacy Amplification procedure applied to the string $U$ outputs a the key of size $l$ which is $\epsilon_{\mathrm{sec}}$-secret provided that \cite{TLG12,TSS11,BFS11}
\begin{align}
\label{privacy-amp}
\epsilon_{\mathrm{sec}} = \min_{\epsilon'} \frac{1}{2} \sqrt{2^{l- H_\mathrm{min}^{\epsilon'}(U|E')}}+2\epsilon',
\end{align}
where $E'$ summarizes all the information Eve learned about $U$ during the protocol.

\subsection{Smooth min-entropy of a conditional state}

Evaluating the smooth min-entropy is usually a intractable optimization problem. Fortunately, in many interesting situations (such as the study of collective attacks), it is sufficient to evaluate it for i.i.d. states, in which case the Asymptotic Equipartition Property applies, and provides a bound expressed in terms of the von Neumann entropy. 

In the case of CV QKD, one needs to compute the smooth min-entropy of the state provided the protocol did not abort. Unfortunately, this postselection destroys the i.i.d. structure of the initial state, and the AEP does not directly apply anymore. In the following, we show that we can still obtain a weak version of the AEP for the postselected state and relate the smooth min-entropy to the von Neumann entropy. 

In particular, we show the following result. 

\begin{theo}[AEP for conditional state]
Let $\rho_{XB}$ be a classical quantum state, and denote by $d = \log_2 \mathrm{dim} \, \mathcal{H}_X$ so that the variable $X$ has cardinality $2^d$. 
Let $\tau_{X^nB^n} = \frac{1}{p} \Pi (\rho_{XB})^{\otimes n} \Pi$ with $p = \mathrm{tr}\, (\Pi (\rho_{XB})^{\otimes n})$ and $\Pi$ be any projector such that $\tau_{X^n B^n}$ and $\rho_{XB}^{\otimes n}$ commute. Then,
\begin{align}
H_{\mathrm{min}}^\epsilon(X^n|B^n)_{\tau_{X^n B^n}}  \geq H(X^n|B^n)_{{\tau}_{X^n B^n}} - \sqrt{n} \left[(d+1)^2 + 4(d+1) \log_2 \frac{2}{\epsilon^2} + 2\log_2 \frac{2}{p^2\epsilon}\right] - 4 \frac{\epsilon d}{p}.
\end{align}
\end{theo}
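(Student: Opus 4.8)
The route I would take runs through R\'enyi conditional entropies of order $\alpha$ slightly larger than $1$, using the commutation hypothesis only to dominate $\tau$ by the genuine i.i.d.~state $\rho_{XB}^{\otimes n}$, and then reducing all the $n$-dependence to single-letter quantities governed by the alphabet size $2^d$. The first step is to record the elementary consequences of the hypotheses: since $\Pi$ commutes with $\rho_{XB}^{\otimes n}$ and $p=\mathrm{tr}(\Pi\rho_{XB}^{\otimes n}\Pi)$, one has $0\le\tau_{X^nB^n}\le \tfrac1p\,\rho_{XB}^{\otimes n}$ as operators, $F(\tau_{X^nB^n},\rho_{XB}^{\otimes n})=\sqrt p$, and hence, for every positive operator $P$ on $\mathcal{H}_X^{\otimes n}\otimes\mathcal{H}_B^{\otimes n}$, $\mathrm{tr}[\tau_{X^nB^n}P]\le \tfrac1p\,\mathrm{tr}[\rho_{XB}^{\otimes n}P]$; in particular $D_\alpha(\tau_{X^nB^n}\|\mathbbm{1}_{X^n}\otimes\sigma_{B^n})\le D_\alpha(\rho_{XB}^{\otimes n}\|\mathbbm{1}_{X^n}\otimes\sigma_{B^n})+\tfrac{\alpha}{\alpha-1}\log\tfrac1p$ for any $\sigma_{B^n}$ and $\alpha>1$. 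The key use of the operator domination will be with $P$ the square of a \emph{centred} surprisal operator, which controls variance-type quantities evaluated on $\tau$ by $\tfrac1p$ times the same quantity on the i.i.d.~state, where it manifestly scales linearly in $n$.

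The plan then has four steps. (1) Lower bound the smooth min-entropy by an order-$\alpha$ R\'enyi conditional entropy of $\tau$, of the form $H_{\mathrm{min}}^{\epsilon}(X^n|B^n)_{\tau}\ge H_\alpha^{\uparrow}(X^n|B^n)_{\tau}-\tfrac{1}{\alpha-1}\log\tfrac{2}{\epsilon^2}$, in a version that also tracks the renormalisation of the post-selected state and the passage between trace and purified distance (this is what produces the $1/p^2\epsilon$ inside the logarithm, the extra $(d+1)$ weight on $\log\tfrac{2}{\epsilon^2}$, and the residual summand $4\epsilon d/p$). (2) Expand $H_\alpha^{\uparrow}(X^n|B^n)_{\tau}$ around $\alpha=1$: its value at $\alpha\to 1^+$ is $H(X^n|B^n)_{\tau}$, and for $\alpha\in(1,2]$ the deficit is at most $(\alpha-1)$ times a relative-entropy-variance-type quantity of $\tau$; bound that quantity via the operator domination by $\tfrac1p$ times $n$ times a single-letter variance, itself at most $(d+1)^2$ because $X$ is classical of cardinality $2^d$ (this is where the $(d+1)^2$ term of the statement comes from), while at finite $\alpha$ one also collects the $\tfrac{\alpha}{\alpha-1}\log\tfrac1p$ penalty from the R\'enyi-divergence domination above. (3) Combine (1) and (2). (4) Optimise $\alpha-1=1/\sqrt n$, which converts the $n(\alpha-1)$ and $\tfrac{1}{\alpha-1}$ factors into the common $\sqrt n$ prefactor of the statement.

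The main obstacle is step (2): the second-order estimate must genuinely deliver $H(X^n|B^n)_{\tau}$, the conditional entropy of the \emph{post-selected} state, as the leading term, while the variance-type remainder stays $O(n)$ rather than $O(n^2)$. This is in tension with the additivity one would like, because the surprisal of $\tau$ is not a sum of i.i.d.~single-letter operators and the natural optimal reference $\tau_{B^n}$ is not a tensor power. The way out is to run the expansion with a tensor-power reference $\rho_B^{\otimes n}$, separately control the correction incurred by replacing $\tau_{B^n}$ by $\rho_B^{\otimes n}$ (again a variance-type quantity on $\tau$, to be tamed through $\tau\le\tfrac1p\rho_{XB}^{\otimes n}$), and exploit that for a classical $X$ of cardinality $2^d$ every single-letter variance and higher cumulant is bounded by a universal function of $d$. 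Put differently, the real content of the proof is showing that post-selection on an event of probability $p$ inflates these variances by at most a factor $1/p$ (together with additive terms logarithmic in $1/p$), so that the i.i.d.~$O(\sqrt n)$ fluctuation scale survives; once this is in place, the remaining steps --- manipulating R\'enyi divergences and smoothing balls --- are routine bookkeeping.
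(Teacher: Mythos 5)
There is a genuine gap, and it is quantitative as well as structural. Your central step (2) expands $H_\alpha$ directly on the post-selected state $\tau$ and controls the second-order term by a variance that you bound through the operator domination $\tau \leq \frac{1}{p}\,\rho_{XB}^{\otimes n}$. That domination gives at best $\mathrm{tr}\bigl[\tau\,W^2\bigr] \leq \frac{1}{p}\,\mathrm{tr}\bigl[\rho^{\otimes n}W^2\bigr] = O\!\left(\frac{n}{p}\right)$ for an i.i.d.-built centred surprisal $W$, i.e.\ the variance is inflated by a \emph{multiplicative} factor $1/p$, not by "additive terms logarithmic in $1/p$" as you assert. With $\alpha-1 = 1/\sqrt{n}$ this puts a factor $1/p$ on the leading term, giving $\sqrt{n}\,(d+1)^2/p$; optimizing $\alpha$ instead yields at best something of order $\sqrt{n/p}\,(d+1)$. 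Either way the result is strictly weaker than the claimed inequality, in which the only dependence on $1/p$ is additive and logarithmic (the $2\log_2\frac{2}{p^2\epsilon}$ term) plus the benign $4\epsilon d/p$. Since in the intended application $p$ may be as small as the security parameter $\epsilon$ (e.g.\ $10^{-20}$), a $1/p$ or $1/\sqrt{p}$ prefactor on the $\sqrt{n}$ term is fatal: your route would not prove the stated bound, nor anything usable in its place. The paper avoids exactly this trap by never expanding around $\tau$: it first passes from $H_\alpha(X^n|B^n)_\tau$ to $H_\alpha(X^n|B^n)_{\rho^{\otimes n}}$ at fixed $\alpha$, paying only $\frac{\alpha}{\alpha-1}\log_2\frac{1}{p} = 2\sqrt{n}\log_2\frac{1}{p}$ (logarithmic in $1/p$), and only then performs the $\alpha\to 1$ expansion at the single-letter level on the unconditioned $\rho$, where the spectral-spread bound $(d+1)^2$ applies.

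A second, related problem is your step (1): you attribute the terms $4(d+1)\log_2\frac{2}{\epsilon^2}$ and $4\epsilon d/p$ to the smoothing/renormalisation bookkeeping, but no standard bound relating $H_{\min}^{\epsilon}$ to $H_\alpha$ produces an additive $4\epsilon d/p$ or a $(d+1)$-weighted $\log\frac{2}{\epsilon^2}$. In the paper these terms come from an entirely separate component that your proposal has no counterpart for: since the expansion is done on $\rho^{\otimes n}$, one must afterwards relate $H(X^n|B^n)_{\rho^{\otimes n}}$ to $H(X^n|B^n)_{\tau}$, which is done by taking a purification $\rho_{XBE}$, using duality $H(X|B) = -H(X|E)$ for both states, applying the i.i.d.\ AEP on the $E$ side to obtain an operator $\bar\rho_{X^nE^n}$ dominated by $\lambda\,(\mathbbm{1}_X\otimes\rho_E)^{\otimes n}$, pushing it through $\Pi$ to build $\bar\tau$, and finishing with the Alicki--Fannes inequality (which is where $4\epsilon d/p$ originates). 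If you want to salvage your direct-expansion route you would need to show that the conditional surprisal variance of $\tau$ is $O(n)$ with at most logarithmic $1/p$-dependence — which the generic domination argument cannot deliver — or else reintroduce some version of the paper's duality-plus-Alicki--Fannes comparison, at which point you are back to the paper's proof.
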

The proof uses many results from Chapter 6 of Tomamichel's thesis \cite{tom12}, and is partially based on discussions with Marco Tomamichel. 
\begin{proof}
Let us fix $\alpha = 1 + \frac{1}{\sqrt{n}}$. Prop $6.2$ from \cite{tom12} gives:
\begin{align}
H_{\mathrm{min}}^\epsilon(X^n|B^n)_{\tau}  \geq H_\alpha(X^n|B^n)_\tau -\sqrt{n} \log_2 \frac{2}{\epsilon^2}.
\end{align}
We now bound this quantity in terms of the $\alpha$-Rényi entropy computed for the i.i.d. state $\rho^{\otimes n}$:
\begin{align}
H_\alpha(X^n|B^n)_\tau & = \max_{\sigma_B^n} \frac{1}{1-\alpha} \log_2 \mathrm{tr}\, (\tau_{X^nB^n}^\alpha \sigma_{B^n}^{1-\alpha})\\
&=  \max_{\sigma_B^n} \frac{1}{1-\alpha} \log_2 \mathrm{tr}\, ((\Pi (\rho_{XB})^{\otimes n} \Pi)^\alpha \sigma_{B^n}^{1-\alpha}) -\frac{\alpha}{1-\alpha} \log_2 \frac{1}{p}\\
&\geq \max_{\sigma_B^n} \frac{1}{1-\alpha} \log_2 \mathrm{tr}\, ((\rho_{XB})^{\otimes n})^\alpha \sigma_{B^n}^{1-\alpha}) -\frac{\alpha}{1-\alpha} \log_2 \frac{1}{p}\\
&= H_\alpha(X^n|B^n)_{\rho^{\otimes n}} -\frac{\alpha}{1-\alpha} \log_2 \frac{1}{p}\\
&= n H_\alpha(X|B)_{\rho} -2\sqrt{n} \log_2 \frac{1}{p}.
\end{align}
Lemma $6.3$ from \cite{tom12} then yields:
\begin{align}
H_\alpha(X|B)_{\rho} \geq H(X|B)_\rho - \frac{4}{\sqrt{n}}(\log_2 \nu)^2
\end{align}
where $\nu :=  \sqrt{2^{-H_{\min}(X|B)_\rho}} + \sqrt{2^{H_{\max}(X|B)_\rho}} + 1 \leq 2^{d/2}+2 \leq 2^{(d+1)/2}-1/4$, for $d \geq 3$.  
So far, we have established that:
\begin{align}
\label{partial-proof}
H_{\mathrm{min}}^\epsilon(X^n|B^n)_{\tau}  \geq H(X^n|B^n)_{\rho^{\otimes n}} - \sqrt{n} \left[(d+1)^2 + 2\log_2 \frac{1}{p\epsilon}\right].
\end{align}
Intuitively, both conditional entropies $H(X^n|B^n)_{\tau}$ and $H(X^n|B^n)_{\rho^{\otimes n}}$ should be close, provided $p$ is not too small. Let us make this rigorous. 

Consider a purification $\rho_{XBE}$ of $\rho_{XB}$. Clearly, $\tau_{X^n B^n E^n} =  \frac{1}{p} (\Pi \otimes \mathbbm{1}_E) \rho_{XBE}^{\otimes n} (\Pi \otimes \mathbbm{1}_E)$ is also a purification of $\tau_{X^n B^n}$. Consequently, the followings identities hold: $H(X|B)_\rho = -H(X|E)_\rho$ and $H(X^n|B^n)_{\tau} = - H(X^n|E^n)_{\tau}$.

The Asymptotic Equipartition Property (Corollary $6.5$ from \cite{tom12}) applied to $\rho^{\otimes n}$ gives 
\begin{align}
 H_{\mathrm{min}}^\epsilon(X^n|E^n)_{\rho^{\otimes n}} \geq H(X^n|E^n)_{\rho^{\otimes n}} -2(d+1)\sqrt{n} \log_2 \frac{2}{\epsilon^2}.
\end{align}
Define $\lambda$ such that $-\log_2 \lambda = H(X^n|E^n)_{\rho^{\otimes n}} -2(d+1)\sqrt{n} \log_2 \frac{2}{\epsilon^2}$. 
By definition of the smooth min-entropy, there exists an operator $\bar{\rho}_{X^nE^n}$ with the following properties:
\begin{align}
\lambda \cdot (\mathbbm{1}_X \otimes \rho_E)^{\otimes n} - \bar{\rho}_{X^n E^n} &\geq 0. \\
 \|\rho_{XE}^{\otimes n} - \bar{\rho}_{X^n E^n}\|_1 &\leq \epsilon.
\end{align}
Define also the operator $\bar{\tau}_{X^n B^n E^n} :=  \frac{1}{p} \Pi \bar{\rho}_{X^n B^n E^n} \Pi$. 
Since $\Pi \leq \mathbbm{1}_{XBE}$, the following inequalities hold:
\begin{align}
\|\tau_{X^n E^n}-\bar{\tau}_{X^n E^n}\|_1 \leq \frac{\epsilon}{p} \quad \text{and} \quad
\bar{\tau}_{X^n E^n} & \leq \frac{1}{p} \bar{\rho}_{X^n E^n} \leq \frac{1}{p} \lambda  \cdot (\mathrm{id}_X \otimes \rho_E)^{\otimes n}.
\end{align}
By definition of the conditional von Neumann entropy,
\begin{align}
H(X^n|E^n)_{\bar{\tau}_{X^n E^n}} & := \max_{\sigma_E^n} \mathrm{tr}\, \left[ \bar{\tau}_{X^n E^n} \left( \mathbbm{1}_X \otimes \log_2 \sigma_E^{\otimes n} -\log_2 \bar{\tau}_{X^n E^n} \right) \right]\\
&\geq \mathrm{tr} \left[ \bar{\tau}_{X^n E^n} \left( \mathbbm{1}_X \otimes \log_2 \rho_E^{\otimes n} -\log_2 \bar{\tau}_{X^n E^n}\right)\right]\\
&\geq \mathrm{tr} \left[ \bar{\tau}_{X^n E^n} \left( \mathbbm{1}_X \otimes \log_2 \rho_E^{\otimes n}-\log_2 (\mathbbm{1}_X \otimes \rho_E)^{\otimes n} - \log_2 \lambda - \log_2 \frac{1}{p}\right) \right]\\
&\geq \mathrm{tr} \left[ \bar{\tau}_{X^n E^n} \left(- \log_2 \lambda - \log_2 \frac{1}{p}\right) \right]\\
&\geq  \mathrm{tr} \left[ \bar{\tau}_{X^n E^n} \right] \left[ H(X^n|E^n)_{\rho^{\otimes n}} -2(d+1)\sqrt{n} \log_2 \frac{2}{\epsilon^2}- \log_2 \frac{1}{p}\right]\\
&\geq \left(1 -\frac{\epsilon}{p} \right)H(X^n|E^n)_{\rho^{\otimes n}} -4(d+1)\sqrt{n} \log_2 \frac{2}{\epsilon^2}- 2\log_2 \frac{1}{p}.
\end{align}
The duality property of the conditional von Neumann entropy implies that:
\begin{align}
H(X^n|B^n)_{\rho^{\otimes n}} \geq \left(1 -\frac{\epsilon}{p} \right)H(X^n|E^n)_{\rho^{\otimes n}} \geq H(X^n|B^n)_{\bar{\tau}_{X^n B^n}} - 4(d+1)\sqrt{n} \log_2 \frac{2}{\epsilon^2} - 2\log_2 \frac{1}{p}.
\end{align}
The Alicky-Fannes inequality applied to $\tau$ and $\bar{\tau}$ gives:
\begin{align}
H(X^n|B^n)_{\bar{\tau}_{X^n B^n}} \geq H(X|B)_{\tau_{X^n B^n}} - 4 \frac{\epsilon d}{p} - 2 h(\epsilon/p).
\end{align}
Finally, noticing that the binary entropy $h(\epsilon/p)$ is less than 1 yields:
\begin{align}
H(X^n|B^n)_{\rho^{\otimes n}} \geq H(X^n|B^n)_{{\tau}_{X^n B^n}} - 4(d+1)\sqrt{n} \log_2 \frac{2}{\epsilon^2} - 2\log_2 \frac{2}{p} - 4 \frac{\epsilon d}{p}.
\end{align}
Combining this bound with Eq.~\ref{partial-proof} completes the proof.
\end{proof}

If $p \geq \epsilon$, we obtain the bound:
\begin{align}
H_{\mathrm{min}}^{\epsilon_{\mathrm{sm}}} (X^{2n}|E^{2n})_{\tau_{X^{2n} E^{2n}}}  \geq H(X^{2n}|E^{2n})_{{\tau}_{X^{2n} E^{2n}}} - \Delta_{AEP},
\end{align}
with
\begin{align}
\label{def-delta-AEP}
\Delta_{AEP} := \sqrt{2n} \left[(d+1)^2 + 4(d+1) \log_2 \frac{2}{\epsilon_{\mathrm{sm}}^2} + 2\log_2 \frac{2}{\epsilon^2\epsilon_{\mathrm{sm}}}\right] + 4 \frac{\epsilon_{\mathrm{sm}} d}{\epsilon}.
\end{align}

\subsection{Lower bound on the entropy of an i.i.d. variable}

Let $U$ be a random variable with finite support $\{u_1, \ldots, u_{2^d}\}$  described by an unknown probability distribution $p_i =  \mathrm{Pr}[U=u_i]$. Given that we observe $n$ independent realizations of this random variable, $U = (U_1, \ldots, U_n)$, we wish to obtain a lower bound on its entropy $H(U) = - \sum_{i=1}^{2^d} p_i \log p_i$. We define the Maximum Likelihood Estimator (MLE), also know as ``empirical entropy", for $H(U)$ to be 
\begin{align}
\hat{H}_{\mathrm{MLE}}(U):= - \sum_{i=1}^{2^d} \hat{p}_i \log \hat{p}_i,
\end{align}
where $\hat{p}_i = \frac{1}{n} \sum_{k=1}^n \delta_i(U_k)$ and $\delta_i$ denotes the probability measure concentrated at $u_i$.
The random variable $\hat{H}_{\mathrm{MLE}}(U)$ is negatively biased everywhere \cite{pan03}:
\begin{align}
\mathbbm{E}_p\hat{H}_{\mathrm{MLE}}(U) \leq H(U),
\end{align}
where $\mathbbm{E}_p$ denotes the conditional expectation given $p$.
This gives a lower bound on $H(U)$ but one that is not directly observable in an experiment. The missing piece is a concentration result for $\hat{H}_{\mathrm{MLE}}(U)$ around its mean due to Antos and Kontoyiannis \cite{AK01}.
\begin{theo}[Antos and Kontoyiannis]
\begin{align}
\mathrm{Pr}\left[\left|\hat{H}_{\mathrm{MLE}}(U) - \mathbbm{E} \hat{H}_{\mathrm{MLE}} \right|\geq \delta\right] \leq \epsilon,
\end{align}
where 
$$
\delta = \sqrt{\frac{2 \log_2^2 n \log(2/\epsilon)}{n}}.
$$
\end{theo}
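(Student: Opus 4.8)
The plan is to view $\hat{H}_{\mathrm{MLE}}(U)$ as a permutation-symmetric function $f(U_1,\dots,U_n)$ of the $n$ independent samples and to invoke McDiarmid's bounded differences inequality (equivalently, Azuma--Hoeffding applied to the Doob martingale of $f$). Writing $n_i$ for the number of indices $k$ with $U_k=u_i$, one has the identity $\hat{H}_{\mathrm{MLE}}(U) = \log_2 n - \frac{1}{n}\sum_{i=1}^{2^d} n_i \log_2 n_i$, which makes the dependence on the samples explicit; the task reduces to controlling how much $f$ can change when a single coordinate $U_k$ is altered.

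Suppose $U_k$ is changed from the value $u_a$ to the value $u_b$ with $a\neq b$; then only the counts $n_a\mapsto n_a-1$ and $n_b\mapsto n_b+1$ are affected, so
\[
 f-f' = \frac{1}{n}\Bigl[\bigl((n_b+1)\log_2(n_b+1)-n_b\log_2 n_b\bigr)-\bigl(n_a\log_2 n_a-(n_a-1)\log_2(n_a-1)\bigr)\Bigr].
\]
The key estimate is that for every integer $m\ge 2$,
\[
 0\;\le\; m\log_2 m-(m-1)\log_2(m-1) \;=\; \log_2 m + (m-1)\log_2\tfrac{m}{m-1}\;\le\; \log_2 m + \tfrac{1}{\ln 2},
\]
using $\log_2(1+x)\le x/\ln 2$, together with the analogous bound $0\le (m+1)\log_2(m+1)-m\log_2 m\le \log_2(m+1)+\tfrac{1}{\ln 2}$ (the increments also vanish at $m=1$ and $m=0$ respectively). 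Since $n_a\le n$ and $n_b+1\le n$, both bracketed terms lie in $[0,\log_2 n+\tfrac{1}{\ln 2}]$, hence their difference has absolute value at most $\log_2 n+\tfrac{1}{\ln 2}\le 2\log_2 n$ for $n\ge 3$. Therefore $f$ has the bounded differences property with constant $c:=\tfrac{2\log_2 n}{n}$ in each of its $n$ arguments.

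McDiarmid's inequality then yields, for any $\delta>0$,
\[
 \mathrm{Pr}\bigl[\,\bigl|\hat{H}_{\mathrm{MLE}}(U)-\mathbbm{E}\hat{H}_{\mathrm{MLE}}(U)\bigr|\ge\delta\,\bigr]\;\le\; 2\exp\!\Bigl(-\frac{2\delta^2}{n c^2}\Bigr) \;=\; 2\exp\!\Bigl(-\frac{\delta^2 n}{2\log_2^2 n}\Bigr),
\]
and setting the right-hand side equal to $\epsilon$ gives exactly $\delta=\sqrt{2\log_2^2 n\,\log(2/\epsilon)/n}$, which is the claim (here $\log$ should be read as the natural logarithm, matching the natural exponential in McDiarmid's bound; only the prefactor, not the $\log_2^2 n$ term, is affected by the choice of base).

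The one genuinely delicate point is the bounded-difference estimate: one must observe not only that the increments of $t\mapsto t\log_2 t$ between consecutive integers grow merely logarithmically in the count, but also that both the ``remove-one'' and ``add-one'' increments are \emph{nonnegative}, so that the per-coordinate sensitivity is $O(\log_2 n/n)$ rather than $\mathrm{diam}(f)\cdot O(1/n)=O(d/n)$ — the latter, cruder estimate would give a bound that degrades badly for large alphabets. Everything else (the martingale form of the inequality, the union bound giving the factor $2$, and the inversion for $\delta$) is routine. Note finally that this two-sided statement is meant to be combined with the negative-bias inequality $\mathbbm{E}_p\hat{H}_{\mathrm{MLE}}(U)\le H(U)$ recalled above, yielding the observable one-sided lower bound $H(U)\ge \hat{H}_{\mathrm{MLE}}(U)-\delta$ with probability at least $1-\epsilon$, which is what enters the key-rate formula.
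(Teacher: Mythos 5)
Your proof is correct. Note that the paper does not prove this statement at all --- it simply quotes it from Antos and Kontoyiannis \cite{AK01} --- and your bounded-differences/McDiarmid argument, with the key observation that altering one sample changes $\hat{H}_{\mathrm{MLE}}$ by at most $\tfrac{1}{n}\bigl(\log_2 n + \tfrac{1}{\ln 2}\bigr) \le \tfrac{2\log_2 n}{n}$, is precisely the standard proof given in that reference, so there is nothing to compare beyond saying you have correctly reconstructed it (including the correct reading of $\log$ as the natural logarithm in $\log(2/\epsilon)$).
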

Since $\hat{H}_{\mathrm{MLE}}(U)$ is directly measurable in an experiment, we obtain a lower bound on the entropy of the distribution, when we are given access to $n$ i.i.d.~realizations:
\begin{align}
\label{AEP-U}
\mathrm{Pr}\left[H(U) \leq \hat{H}_{\mathrm{MLE}}(U)-  \sqrt{\frac{2 \log_2^2 n \log(2/\epsilon)}{n}}\right] \leq \epsilon.
\end{align}

Finally, we consider  the random variable $V$ with support $\{ u_1, \ldots, u_{2^d}\}^{n}$ and probability distribution 
\begin{align}
\mathrm{Pr}[V = u_{i_1} \ldots u_{i_n}] = \left\{
\begin{array}{ccc}
\frac{p_{i_1} \cdots p_{i_n}}{p_A } & \mathrm{if} & u_{i_1} \ldots u_{i_n} \in A,\\
0 & \mathrm{otherwise}.& 
\end{array}
\right.
\end{align}
where $p_A := \mathrm{Pr}[U^n \in A]$ is the probability that the string $U^n$ belongs to the set $A$. 
It immediately follows from Eq.~\ref{AEP-U} that:
\begin{align}
\label{AEP-V}
\mathrm{Pr}\left[H(V) \leq \hat{H}_{\mathrm{MLE}}(V) - \log_2 \frac{1}{p_A}-  \sqrt{\frac{2 \log_2^2 n \log(2/\epsilon)}{n}}\right] \leq \frac{\epsilon}{p_A}.
\end{align}

In the QKD protocol, the probability $p_A$ of passing can be assumed to be at least equal to the security parameter $\epsilon$. We obtain that the following bound holds, except with probability $\epsilon_{\mathrm{end}}/\epsilon$:
\begin{align}
4n H(U) \geq 4n  \hat{H}_{\mathrm{MLE}}(U) - \Delta_{\mathrm{ent}}
\end{align}
where $U$ is the string of size $4n$ corresponding to the raw key, and where 
\begin{align}
\label{def-delta-ent}
\Delta_{\mathrm{ent}} :=  \log_2 \frac{1}{\epsilon}+  \sqrt{8n \log_2^2 (4 n) \log(2/\epsilon_{\mathrm{sm}})}.
\end{align}

\subsection{Gaussian states and covariance matrices}

Using the AEP, the problem of computing the smooth min-entropy above can actually be reduced to the problem of computing the Holevo information $\chi(Y;E)$ between Bob's measurement outcome $Y$ and the register $E$. 
Thanks to extremality properties of Gaussian states \cite{WGC06, GC06}, it is known that this quantity can be upper bounded by its value computed for a Gaussian state with the same covariance matrix as the true state. 
We will now proceed and show that it is in fact sufficient to know a symmetrized covariance matrix in order to get a bound on Eve's information.

We need to analyze the symmetrization of the covariance matrix and show that without loss of security, one can assume that it only depends on 3 variables. 
A general 2-mode covariance matrix with the appropriate symmetry (that is, where the state has been symmetrized with the map $\mathrm{Sym}$ described in Eq.~\ref{sym-map}) is of the form:
\begin{align}
\gamma= \left[
\begin{matrix}
x & 0 & z \cos \theta & z \sin \theta \\
0 & x & z \sin \theta & -z \cos \theta\\
z \cos \theta & z \sin \theta  & y & 0\\
z \sin \theta & -z \cos \theta& 0 & y\\
\end{matrix}
\right]
\end{align}
One can check that the determinant $D$ of $\gamma$ and the quantity $\Delta = x^2 + y^2 -2z^2$ (corresponding to the sum of the determinants of the four $2\times 2$ blocks of $\gamma$) are independent of $\theta$. This means that the symplectic eigenvalues of $\gamma$ are independent of $\theta$. This was expected since $\theta$ corresponds to a phase-shift applied to Alice's mode for instance. 

The conditional $2 \times 2$ covariance matrix of Alice's state given that Bob performed a heterodyne detection on his part of the state is 
\begin{align}
\gamma_{A|y}^{\mathrm{het}} & = \gamma_A - \gamma_C (\gamma_B + \mathbbm{1}_2)^{-1} \gamma_C^T\\
&= \left[ \begin{matrix}  x-\frac{z^2}{1+y} & 0 \\ 0 & x - \frac{z^2}{1+y} \end{matrix}\right]
\end{align}
which is also independent of $\theta$. Here, $\gamma_A, \gamma_B$ and $\gamma_C$ are the $2\times 2$ blocks of $\gamma$.

We now recall that the Holevo information $\chi(Y;E)$ computed for the Gaussian state of variance $\gamma$ is given by \cite{WPG12}
\begin{align}
\chi(Y;E) = g\left[\frac{\nu_1-1}{2} \right] + g\left[\frac{\nu_2-1}{2} \right] - g\left[\frac{\nu_3-1}{2} \right],
\end{align}
where $\nu_1, \nu_2$ are the symplectic eigenvalues of $\gamma$ and $\nu_3$ is the symplectic eigenvalue of $\gamma_{A|y}^{\mathrm{het}} $ and $g(x) := (x+1) \log(x+1) -x \log (x)$.
Let us introduce the function $f(x,y,z,\theta) := \chi(Y;E)$ for the Gaussian state with covariance matrix $\gamma$.

If $\theta=0$ and $x$ and $y$ are fixed, then one can check numerically that $f$ is monotonically decreasing when $z$ is increasing, i.e. the function $z \mapsto f(x,y,z,0)$ is decreasing. For simplicity, we write $f(x,y,z)$ when the fourth variable is zero. This has an intuitive interpretation: if the correlations between Alice and Bob decrease, then Eve's information increases.
This implies that $f(x,y,z \cos \theta, 0) \geq f(x,y,z,0) = f(x, y, z, \theta)$.
In other words, one can always assume that the covariance matrix $\gamma$ of the symmetrized state has the form 
\begin{align}
\gamma= \left[
\begin{matrix}
x & 0 & z & 0 \\
0 & x & 0 & -z \\
z & 0  & y & 0\\
0 & -z & 0 & y\\
\end{matrix}
\right].
\end{align}
This means that in order to upper bound the Holevo information between Bob's measurement result and Eve for a state $\tilde{\rho}^n$, it is sufficient to obtain bounds on $\Sigma_a, \Sigma_b$ and $\Sigma_c $ defined in Eq.~\ref{cov-matrix-elements}.

We now devote a section to give more details about the parameter estimation procedure.

\section{Parameter Estimation in the protocol $\mathcal{E}_0$}
\label{section-PE}

We first describe the intuition behind the parameter estimation before proving some technical statements. 

\subsection{Principle}

The Parameter Estimation procedure can be decomposed into two steps. First, Alice needs to compute a confidence region for the three quantities $\|X\|^2, \|Y\|^2$ and $\langle X,Y\rangle$. Then, she should be able to obtain bounds on the covariance matrix of the state she shared with Bob.

As we already mentioned, we assume here that $\|X\|^2, \|Y\|^2$ and $\langle X,Y\rangle$ are known. 
We relate the Parameter Estimation procedure to a certain GendankenExperiment, which is described in Fig.~\ref{Gedanken}. This experiment involves two additional players $A_1$ and $A_2$ on Alice's side as well as two others $B_1$ and $B_2$ on Bob's side. Then, we will show that the GedankenExperiment can in fact be efficiently simulated by Alice and Bob alone. 

In the GedankenExperiment, Alice and Bob start by symmetrizing their quantum state by processing it through a random network of beamsplitters and phase shifters. Then, they split their respective $2n$ modes into two sets of $n$ modes, which they forward to $A_1$ and $B_1$ for the first half, and to $A_2$ and $B_2$ for the second half. 
Now, both couples ($A_1, B_1$ and $A_2, B_2$) are in position to perform a conventional quantum state tomography procedure:  the couple of players $A_1, B_1$ can try to estimate the state held by $A_2$ and $B_2$ and vice versa. Players $A_1$ and $B_1$ will be able to infer a lower bound on the secret key rate that $A_2$ and $B_2$ can extract from their state. Similarly, $A_2$ and $B_2$ will be able to infer a lower bound on the secret key rate that $A_1$ and $B_1$ can extract from their state. Adding the two bounds gives a bound on the total secret key rate that Alice and Bob would have been able to extract from their overall states, had they known the results of the two parameter estimation procedures. 
Then, we can show that because the measurement is the same for key elements and for parameter estimation, Alice and Bob are able to simulate the results of the actions of $A_1, A_2, B_1$ and $B_2$.

\begin{figure}[htbp!]
\begin{framed}
  \centering
\begin{enumerate}
\item \textbf{State Preparation:} Alice and Bob each have access to the global state $\rho_{AB}^{2n}$.
\item \textbf{State Symmetrization:} A random unitary $V$ is drawn from the Haar measure on $U(2n)$. The state $\rho_{AB}^{2n}$ is mapped to $\left(\Phi(V)_A \otimes \Phi(V^*)_B\right) \rho_{AB}^{2n} \left(\Phi(V)_A \otimes \Phi(V^*)_B\right)^\dagger$. This can be achieved by processing the optical modes through the appropriate network of beamsplitters and phase-shifts. 
\item \textbf{Distribution to additional players:} Alice forwards her first $n$ modes to Agent $A_1$ and the remaining $n$ modes to Agent $A_2$. Similarly, Bob forwards his first $n$ modes to $B_1$ and the remaining $n$ modes to $B_2$. The first $n$ modes correspond to a bipartite state $\rho_1$, the remaining modes to a state $\rho_2$.
\item \textbf{Measurement:} Agents $A_1$ and $B_1$ measure their $n$ respective modes with a heterodyne detection obtaining two vectors $X_1$ and $Y_1$ of length $2n$. 
Similarly, $A_2$ and $B_2$ measure their modes with heterodyne detection, obtain vectors $X_2$ and $Y_2$. 
Finally, agents $B_1$ and $B_2$ publicly reveal $Y_1$ and $Y_2$.
\item \textbf{Parameter Estimation:}
Agent $A_1$ uses $X_1$ and $Y_1$ to compute a confidence region for the (averaged) covariance matrix of $\rho_2$. Similarly, $A_2$ computes a confidence region for that of $\rho_1$. 
\end{enumerate} 

\end{framed}
\caption{Parameter Estimation Procedure (GedankenExperiment)}
\label{Gedanken}
\end{figure}

In order to analyze the GedankenExperiment, we show that it can be simulated by Alice if she initially knows the values of $\|X\|^2, \|Y\|^2$ and $\langle X,Y\rangle$. There are two main steps:
\begin{itemize}
\item Alice needs first to simulate the symmetrization and the distribution of the first $n$ modes to $A_1$. In particular, she needs to compute a confidence region for the three parameters $\|X_1\|^2, \|Y_1\|^2$ and $\langle X_1, Y_1\rangle$, which correspond to the norms and dot product of the vectors of measurements outcomes for $A_1$ and $B_1$. This will be analyzed in Lemmas \ref{estX} and \ref{estXY}.
\item Second, Alice needs to compute the confidence regions for the state $\tilde{\rho}_2^n$ that $A_1$ and $B_1$ would have inferred from their measurement results $X_1$ and $Y_1$. This is the object of Lemma \ref{X1givesX2}.
\end{itemize}

\subsection{Proofs related to the analysis of Parameter Estimation}

The first lemma is a standard concentration measure result in statistics. 
\begin{lemma}[Tail bounds for $\chi^2$ distribution \cite{LM00}]
\label{laurent-massart}
Let $U$ be a $\chi^2$ statistics with $n$ degrees of freedom. For any $x>0$, 
\begin{align}
\mathrm{Pr}\left[U -n \geq 2\sqrt{nx} + 2x \right] \leq e^{-x} \quad \mathrm{and} \quad
\mathrm{Pr}\left[U -X \geq 2\sqrt{nx}  \right] \leq e^{-x}.
\end{align}
\end{lemma}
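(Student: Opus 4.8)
The plan is to prove both inequalities by the Laplace-transform (Chernoff) method, i.e.\ by the argument of Laurent and Massart \cite{LM00} specialized to unit weights; I would give it in self-contained form. Write $U = \sum_{i=1}^{n} Z_i^2$ with $Z_1,\dots,Z_n$ i.i.d.\ standard Gaussians (the definition of the $\chi^2$ law with $n$ degrees of freedom) and record the elementary moment generating function $\mathbbm{E}\bigl[e^{t Z_i^2}\bigr] = (1-2t)^{-1/2}$ for $t<1/2$, so that by independence $\mathbbm{E}\bigl[e^{t(U-n)}\bigr] = e^{-nt}(1-2t)^{-n/2}$ for $0\le t<1/2$ and $\mathbbm{E}\bigl[e^{-t(U-n)}\bigr] = e^{nt}(1+2t)^{-n/2}$ for $t\ge 0$.

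For the upper tail, Markov's inequality applied to $e^{t(U-n)}$ gives, for $0<t<1/2$, $\mathrm{Pr}[U-n\ge u] \le \exp\bigl(-tu - nt - \tfrac{n}{2}\log(1-2t)\bigr)$. The exponent is a convex function of $t$ minimized at $t = u/(2(n+u))$, which lies in $(0,1/2)$ for $u>0$, and substituting this value yields $\mathrm{Pr}[U-n\ge u] \le \exp\bigl(-\tfrac{u}{2} + \tfrac{n}{2}\log(1+u/n)\bigr)$. Taking $u = 2\sqrt{nx}+2x$ and setting $s := \sqrt{x/n}$ (so that $u/n = 2s+2s^2$ and $x = ns^2$) reduces the desired bound $e^{-x}$ to the elementary inequality $1+2s+2s^2 \le e^{2s}$ for $s\ge 0$, which is immediate from the power series of $e^{2s}$.

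For the lower deviation — the ``$U-X$'' in the display is evidently a misprint for ``$n-U$'' — Markov's inequality applied to $e^{-t(U-n)}$ gives, for $t>0$, $\mathrm{Pr}[n-U\ge u] \le \exp\bigl(-tu + nt - \tfrac{n}{2}\log(1+2t)\bigr)$. When $u<n$ the exponent is minimized at $t = u/(2(n-u)) > 0$, giving $\mathrm{Pr}[n-U\ge u] \le \exp\bigl(\tfrac{u}{2} + \tfrac{n}{2}\log(1-u/n)\bigr)$, and with $u = 2\sqrt{nx}$, $s := \sqrt{x/n}$ this reduces to $\log(1-2s) \le -2s-2s^2$, again immediate from the logarithmic series; when $2\sqrt{nx}\ge n$ the event $\{n-U\ge 2\sqrt{nx}\}$ is empty (since $U\ge 0$) and the bound is trivial.

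There is no genuine obstacle here: the whole content is the Cram\'er--Chernoff computation. The only point requiring a little care is choosing the Chernoff parameter so that, after optimization, the exponent collapses to exactly $-x$ with the stated clean deviation terms $2\sqrt{nx}+2x$ and $2\sqrt{nx}$ rather than to $-x$ times a worse numerical constant — which is precisely why it is worth isolating the two one-line inequalities $1+2s+2s^2\le e^{2s}$ and $\log(1-2s)\le -2s-2s^2$ — together with checking that the two optimal values of $t$ indeed stay in the admissible ranges. Alternatively one may simply invoke Lemma~1 of Laurent--Massart \cite{LM00}.
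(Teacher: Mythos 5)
Your proof is correct: the paper does not prove this lemma but simply cites Laurent--Massart, and your Cram\'er--Chernoff computation (MGF of the $\chi^2$ law, optimization of the exponent, and the two elementary inequalities $1+2s+2s^2\le e^{2s}$ and $\log(1-2s)\le -2s-2s^2$) is exactly the argument of \cite{LM00} specialized to unit weights. Your reading of the misprint ``$U-X$'' as the lower-deviation bound $\mathrm{Pr}[n-U\ge 2\sqrt{nx}]\le e^{-x}$ is also the right one, as confirmed by how the lemma is invoked in the proof of Lemma \ref{estX}.
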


The following two lemmas deal with the situation where Alice tries to simulate the distribution of $\rho_1$ to $A_1$ and $B_2$ and $\rho_2$ to $A_2$ and $B_2$. In particular, given the knowledge of $\|X\|^2, \|Y\|^2$ and $\langle X,Y\rangle$, she can compute confidence regions for $\|X_i\|^2, \|Y_i\|^2$ and $\langle X_i,Y_i\rangle$.

Since the symmetrization of the state with the map $\mathrm{Sym}$ described \ref{sym-map} commutes with the heterodyne measurement, Alice can simulate the measurement of $A_1$ by first measuring $X$ and only later perform the random rotation. In particular, the random vector $X_1$ corresponds to the projection of $X$ on a random subspace of complex dimension $n$. 
\begin{lemma}
\label{estX}
Given a vector $X \in \mathbbm{C}^{2n}$, consider $X_1$ the projection of $X$ on a random subspace of dimension $n$, then for $\epsilon \geq 2e^{-n/2}$,
\begin{align}
\mathrm{Pr}\left[2\|X_1\|^2 \geq \left[1+1.5\sqrt{\frac{\ln (2/\epsilon)}{n}}\right] \|X\|^2 \right]& \leq\epsilon \\
\mathrm{Pr}\left[2\|X_1\|^2 \leq\left[1-2.2\sqrt{\frac{\ln (2\epsilon)}{n}}\right] \|X\|^2 \right] &\leq \epsilon.
\end{align}
\end{lemma}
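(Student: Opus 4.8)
The plan is to reduce the statement to a concentration inequality for $\chi^2$ random variables using rotational invariance, and then invoke the Laurent--Massart bounds of Lemma~\ref{laurent-massart}.

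First I would use the observation made just before the lemma that the Haar-random passive transformation used for symmetrization commutes with heterodyne detection: this means that $X_1$ has the same distribution as the orthogonal projection of the \emph{fixed} vector $X$ onto a Haar-random complex $n$-dimensional subspace of $\mathbbm{C}^{2n}$, i.e.\ a uniformly random real $2n$-dimensional subspace of $\mathbbm{R}^{4n}$. By rotational invariance of the Haar measure, this is in turn distributed like the projection of a uniformly random vector of length $\|X\|$ onto a \emph{fixed} $2n$-dimensional coordinate subspace. Writing that uniform vector as $\|X\|\, G/\|G\|$ with $G$ a standard Gaussian on $\mathbbm{R}^{4n}$ and splitting $G=(G_1,G_2)$ along the subspace, one gets the exact identity
\begin{align}
\frac{2\|X_1\|^2}{\|X\|^2} = \frac{2\|G_1\|^2}{\|G_1\|^2+\|G_2\|^2},
\end{align}
where $\|G_1\|^2$ and $\|G_2\|^2$ are independent $\chi^2$ variables with $2n$ degrees of freedom each; the factor $2$ is there precisely because the right-hand side has mean $1$.

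Next I would bound numerator and denominator separately, using that $(g_1,g_2)\mapsto 2g_1/(g_1+g_2)$ is increasing in $g_1$ and decreasing in $g_2$. For the first inequality it suffices to upper bound $\|G_1\|^2$ and lower bound $\|G_2\|^2$; for the second, the reverse. Applying Lemma~\ref{laurent-massart} with the number of degrees of freedom equal to $2n$ and with $x=\ln(2/\epsilon)$ gives each of
\begin{align}
\|G_1\|^2 \le 2n+2\sqrt{2nx}+2x, \qquad \|G_2\|^2 \ge 2n-2\sqrt{2nx}
\end{align}
with probability at least $1-e^{-x}$, so that a union bound over the two bad events costs $2e^{-x}=\epsilon$. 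On the complement one obtains
\begin{align}
\frac{2\|X_1\|^2}{\|X\|^2} \le \frac{2\left(2n+2\sqrt{2nx}+2x\right)}{4n+2x} = 1+\frac{2\sqrt{2}\,u+u^2}{2+u^2},
\end{align}
where $u:=\sqrt{x/n}$, and the hypothesis $\epsilon\ge 2e^{-n/2}$ is exactly the statement $x\le n/2$, i.e.\ $u\le 1/\sqrt{2}$. A short computation (the relevant quadratic in $u$ has negative discriminant) shows $\frac{2\sqrt{2}\,u+u^2}{2+u^2}\le 1.5\,u$ for all $u\ge 0$, giving the first bound; dropping the $u^2$ from the denominator and using $u\le 1/\sqrt2$ gives the cruder constant $2.2$ needed for the second.

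The conceptual content is entirely in the reduction step, and the one thing to be careful about there is that the action of the Haar measure on $U(2n)$ at the level of the Fock space really does induce the uniform distribution over $n$-dimensional complex subspaces at the level of the heterodyne outcome vectors --- this is where the fact that the measurement is identical for parameter estimation and for the key is used. The rest is bookkeeping: the only mild obstacle is checking that the Laurent--Massart slack terms $2\sqrt{2nx}$ and $2x$ are absorbed by the stated constants $1.5$ and $2.2$, and this is precisely what the assumption $\epsilon\ge 2e^{-n/2}$ buys.
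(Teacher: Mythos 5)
Your proof is correct and follows essentially the same route as the paper's: rotational invariance reduces $2\|X_1\|^2/\|X\|^2$ to a ratio $2U/(U+U')$ of independent $\chi^2$ variables, which is then handled by the Laurent--Massart tail bounds of Lemma~\ref{laurent-massart} plus a union bound, with the hypothesis $\epsilon \ge 2e^{-n/2}$ (i.e.\ $x \le n/2$) used to absorb the deviation terms into the constants $1.5$ and $2.2$. The only difference is bookkeeping: you count $2n$ real degrees of freedom per half (treating each complex coordinate as two real Gaussians) whereas the paper's write-up works with $n$, so your concentration is if anything slightly tighter and the stated constants are verified with room to spare.
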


\begin{proof}
By rotation invariance of the problem in $\mathbbm{C}^{2n}$, one can fix the random subspace to be $\mathrm{Span}(|1\rangle, \ldots, |n\rangle)$ where $(|1\rangle, \ldots, |2n\rangle)$ is the canonical basis of $\mathbbm{C}^{2n}$ and take the vector $|X\rangle$ to be uniformly distributed on the sphere of radius $\|X\|$ in $\mathbbm{C}^{2n}$. We wish to get bounds on the random variable $\langle X |\Pi |X\rangle$ where $\Pi = \sum_{i=1}^n |i\rangle \langle i|$.
Writing $|X\rangle = \sum_{i=1}^{2n} \alpha_i |i\rangle$, one obtains 
\begin{align}
\langle X |\Pi |X\rangle = \sum_{i=1}^n |\alpha_i|^2 = \frac{\sum_{i=1}^n |\alpha_i|^2}{\sum_{i=1}^n |\alpha_i|^2 + \sum_{i=n+1}^{2n} |\alpha_i|^2} = \frac{\sum_{i=1}^n x_i^2}{\sum_{i=1}^n x_i^2 + \sum_{i=1}^{n} y_i^2},
\end{align}
where $x_i$ and $y_i$ are i.i.d.~normal variables $\mathcal{N}(0,1)$ and the last equality follows from the well-known fact that one can generate  a uniformly distributed vector on a sphere by drawing i.i.d.~normal variables for its coordinates and normalizing the resulting vector. 
In particular, $\langle X |\Pi |X\rangle $ has the same distribution as $\frac{U}{U + U'}$ where $U$ and $U'$ are independent $\chi^2$ random variables.

Using the bounds of Lemma \ref{laurent-massart} together with the union bound, one obtains
\begin{align*}
\mathrm{Pr}\left[U  \geq n + 2\sqrt{nx} + 2x \right] \leq e^{-x} & \quad \text{and} \quad \mathrm{Pr}\left[U  \leq n -2\sqrt{nx}  \right]\leq e^{-x},\\
\mathrm{Pr}\left[\frac{U}{U+U'}  \geq \frac{n+2\sqrt{nx} + 2x}{ 2(n+x ) } \right] \leq 2e^{-x} & \quad\text{and} \quad\mathrm{Pr}\left[\frac{U}{U+U'}  \leq \frac{n-2\sqrt{nx} }{2(n +x)} \right] \leq 2e^{-x} \\
\end{align*}

Fix $\gamma = \sqrt{x/n}$. One can easily check that for $\gamma \in [0, 1]$,
\begin{align}
\frac{1+2\gamma + 2\gamma^2}{2(1+\gamma)}  \leq \frac{1}{2}\left[1 + \frac{3\gamma}{2}\right] \quad \text{and} \quad
\frac{1-2\gamma}{2(1+\gamma^2)} \geq \frac{1}{2}(1-2.2\gamma),
\end{align}
concluding the proof.
\end{proof}

\begin{lemma}
\label{estXY}
Given two vectors $X, Y \in \mathbbm{R}^{4n}$, consider $X_1$ and $Y_1$ the projections of $X$ and $Y$ on a random subspace of dimension $2n$, and $x \leq n/2$, then 
\begin{align}
\mathrm{Pr}\left[\langle X,Y\rangle - 1.85\sqrt{\frac{x}{n}} \left( \|X\|^2 +\|Y\|^2 \right) \leq 2\langle X_i, Y_i\rangle \leq \langle X,Y\rangle + 1.85\sqrt{\frac{x}{n}} \left( \|X\|^2 +\|Y\|^2 \right) \right]  \geq 1- 8 e^{-x}.
\end{align}
\end{lemma}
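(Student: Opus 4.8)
The plan is to deduce the bilinear estimate from the purely quadratic one already proven in Lemma~\ref{estX}, via the polarization identity. Regard the real $4n$-vectors as complex $2n$-vectors under the standard identification $\mathbbm{R}^{4n} \cong \mathbbm{C}^{2n}$; the random real $2n$-dimensional subspace produced by the $U(2n)$-symmetrization is then a random complex $n$-dimensional subspace, so Lemma~\ref{estX} applies verbatim to any such complex vector and the same random subspace. Since the projection $W \mapsto W_1$ onto this subspace is linear, $(X+Y)_1 = X_1 + Y_1$ and $(X-Y)_1 = X_1 - Y_1$, whence
\begin{align}
2\langle X_1, Y_1\rangle = \tfrac{1}{4}\left(2\|(X+Y)_1\|^2 - 2\|(X-Y)_1\|^2\right), \qquad \langle X,Y\rangle = \tfrac{1}{4}\left(\|X+Y\|^2 - \|X-Y\|^2\right).
\end{align}
It therefore suffices to control $\|(X+Y)_1\|^2$ and $\|(X-Y)_1\|^2$, which is exactly what Lemma~\ref{estX} does.

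First I would set $\epsilon = 2e^{-x}$; the hypothesis $\epsilon \geq 2e^{-n/2}$ of Lemma~\ref{estX} then holds precisely because $x \leq n/2$, and $\sqrt{\ln(2/\epsilon)/n} = \sqrt{x/n} =: \gamma$. Applying both bounds of Lemma~\ref{estX} to $W = X+Y$ and to $W = X-Y$ and taking a union bound over the four resulting events (total failure probability at most $8e^{-x}$), on the complementary event one has $(1-2.2\gamma)\|W\|^2 \leq 2\|W_1\|^2 \leq (1+1.5\gamma)\|W\|^2$ for both $W = X\pm Y$. Feeding the upper estimate for $\|(X+Y)_1\|^2$ and the lower estimate for $\|(X-Y)_1\|^2$ into the polarization identity gives
\begin{align}
2\langle X_1, Y_1\rangle \leq \langle X,Y\rangle + \tfrac{\gamma}{4}\left(1.5\|X+Y\|^2 + 2.2\|X-Y\|^2\right),
\end{align}
while the symmetric choice gives $2\langle X_1,Y_1\rangle \geq \langle X,Y\rangle - \tfrac{\gamma}{4}\left(2.2\|X+Y\|^2 + 1.5\|X-Y\|^2\right)$.

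To finish I would invoke the parallelogram bound $\|X \pm Y\|^2 \leq 2(\|X\|^2 + \|Y\|^2)$, which shows that both correction terms above are at most $\tfrac{\gamma}{4}\cdot 7.4\,(\|X\|^2 + \|Y\|^2) = 1.85\gamma(\|X\|^2 + \|Y\|^2)$, yielding the claimed two-sided inequality off an event of probability at most $8e^{-x}$. The argument is essentially bookkeeping once polarization is invoked; the only points needing a little care are checking that the asymmetric constants $1.5$ and $2.2$ from Lemma~\ref{estX} collapse (after the crude parallelogram estimate) into the single constant $1.85$, and that the four invocations of Lemma~\ref{estX} cost only $8e^{-x}$ in total. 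No genuinely new probabilistic estimate is needed.
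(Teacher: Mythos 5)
Your proposal is correct and follows essentially the same route as the paper's proof: polarization, four applications of Lemma~\ref{estX} to $X\pm Y$ with a union bound costing $8e^{-x}$, and the crude bound $\|X\pm Y\|^2 \leq 2(\|X\|^2+\|Y\|^2)$ (which is the paper's ``$2\langle X,Y\rangle \leq \|X\|^2+\|Y\|^2$'' step) to collapse the constants $1.5$ and $2.2$ into $1.85$. Your handling of the $\mathbbm{R}^{4n}\cong\mathbbm{C}^{2n}$ identification and of the parameter translation $\epsilon = 2e^{-x}$, $x\leq n/2$ is in fact slightly more explicit than the paper's.
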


\begin{proof}
The proof can be reduced to the result of Lemma \ref{estX} since 
\begin{align}
\langle X, Y\rangle = \frac{1}{4} \left[\|X+Y\|^2 - \|X-Y\|^2 \right] \quad \text{and} \quad \langle X_i, Y_i\rangle = \frac{1}{4} \left[\|X_i+Y_i\|^2 - \|X_i-Y_i\|^2 \right].
\end{align}
The union bound insures that, except with probability at most $8e^{-x}$, one has (for $x \leq n$),
\begin{align}
\left[1-2.2 \sqrt{\frac{x}{n}} \right] \|X+Y\|^2 &\leq 2 \|X_1 + Y_1 \|^2 \leq \left[1+1.5 \sqrt{\frac{x}{n}} \right] \|X+Y\|^2\\
-\left[1+1.5 \sqrt{\frac{x}{n}} \right] \|X-Y\|^2 &\leq - 2 \|X_1 - Y_1 \|^2 \leq -\left[1-2.2 \sqrt{\frac{x}{n}} \right] \|X-Y\|^2.
\end{align}
Summing these inequalities and using that $2\langle X, Y\rangle \leq \|X\|^2 + \|Y\|^2$ gives:
\begin{align}
\langle X,Y\rangle - 1.85\sqrt{\frac{x}{n}} \left( \|X\|^2 +\|Y\|^2 \right) \leq 2\langle X_i, Y_i\rangle \leq \langle X,Y\rangle + 1.85\sqrt{\frac{x}{n}} \left( \|X\|^2 +\|Y\|^2 \right).
\end{align}
\end{proof}
Note in particular that the following bounds hold:
\begin{align}
\mathrm{Pr}\left[ \langle X_i, Y_i \rangle \leq \frac{1}{2} \langle X,Y\rangle - \sqrt{\frac{x}{n}} ( \|X\|^2 + \|Y\|^2) \right] & \leq 4 e^{-x} \label{bound-c}\\
\mathrm{Pr}\left[| \langle X_1, Y_1 \rangle - \langle X_2,Y_2\rangle| \geq 2 \sqrt{\frac{x}{n}} ( \|X\|^2 + \|Y\|^2) \right] & \leq 8 e^{-x}.
\end{align}

We now reason in terms of the covariance matrix of the Husimi $Q$-function, that is the function giving the probability density function for the result of a heterodyne measurement. 
The averaged covariance matrix of the $Q$-function of a (randomized) state $\rho_{AB}^{n}$ is 
\begin{align}
\Gamma_Q = \bigoplus_{k=1}^n \left[ 
\begin{matrix}
\frac{1}{2}(\Sigma_a +1) & 0 & \frac{\Sigma_c}{2} & * \\
0 & \frac{1}{2}(\Sigma_a +1) & * & -\frac{\Sigma_c}{2} \\
\frac{\Sigma_c}{2} & * & \frac{1}{2}(\Sigma_b +1) & 0 \\
* & -\frac{\Sigma_c}{2} & 0 & \frac{1}{2}(\Sigma_b +1) \\
\end{matrix}\right],
\end{align}
where some entries are not specified ($*$).
Because Alice has access to the heterodyne measurement of the first half of the (symmetrized) state $\tilde{\rho}^{2n}$, she is able to infer bounds on measurement outcomes for the second half, that will hold except with some small probability. 

The following lemma deals with the scenario where players $A_1$ and $B_1$ try to estimate the covariance matrix of $\tilde{\rho}^n_2$, given their measurement outcomes of $\tilde{\rho}_1^n$, namely the quantities $\|X_1\|^2, \|Y_1\|^2$ and $\langle X_1, Y_1\rangle$.

\begin{lemma} Consider an $8n$-dimensional probability distribution $Q(X, Y)$ where $X = (X_1,X_2) \in \mathbbm{R}^{4n}$ and $Y=(Y_1,Y_2) \in \mathbbm{R}^{4n}$, which is rotationally-invariant (when applying an orthogonal transformation and its transpose to $X$ and $Y$). 
Then, for ${\frac{\log(2/\epsilon)}{2n}} \leq 0.05$, the following bounds hold:
\begin{align}
&\mathrm{Pr}\left[ \|X_2\|^2 \geq \left[1+ 5\sqrt{\frac{\log(2/\epsilon)}{2n}}\right] \|X_1\|^2\right] \leq \epsilon, \label{ineqA}\\
&\mathrm{Pr}\left[ \|Y_2\|^2 \geq \left[1+ 5\sqrt{\frac{\log(2/\epsilon)}{2n}}\right]  \|Y_1\|^2\right] \leq \epsilon,\label{ineqB}\\
&\mathrm{Pr}\left[ \langle X_2, Y_2\rangle \leq \langle X_1, Y_1\rangle- \frac{9}{2} \sqrt{\frac{\log(2/\epsilon)}{2n}} \left(\|X_1\|^2+\|Y_1\|^2\right) \right] \leq 2\epsilon. \label{ineqC}
\end{align}
\label{X1givesX2}
\end{lemma}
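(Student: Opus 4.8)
The plan is to reduce all three inequalities to the fixed-vector estimates of Lemmas~\ref{estX} and~\ref{estXY}, and then to convert bounds phrased relative to $X$ and $Y$ into bounds phrased relative to $X_1$ and $Y_1$ using the elementary identities $\|X_1\|^2+\|X_2\|^2=\|X\|^2$, $\|Y_1\|^2+\|Y_2\|^2=\|Y\|^2$ and $\langle X_1,Y_1\rangle+\langle X_2,Y_2\rangle=\langle X,Y\rangle$. The reduction itself goes as follows. Condition on the rotation-invariant quantities $\|X\|^2,\|Y\|^2,\langle X,Y\rangle$. Conditioned on their values, the assumed invariance of $Q$ forces $(X,Y)$ to be distributed as $(Ox_0,Oy_0)$ with $x_0,y_0$ a fixed pair of vectors realising these norms and inner product and $O$ Haar-distributed on the relevant (complex unitary, equivalently orthogonal) group; hence $(X_1,Y_1)$ is distributed exactly as the projections of $x_0$ and $y_0$ onto a uniformly random $2n$-dimensional subspace, and $(X_2,Y_2)$ as the projections onto its orthogonal complement. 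Because the probabilistic bounds of Lemmas~\ref{estX} and~\ref{estXY} are uniform over $x_0,y_0$, it suffices to establish \eqref{ineqA}--\eqref{ineqC} for fixed $x_0,y_0$ and then to drop the conditioning.

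For \eqref{ineqA} (and symmetrically \eqref{ineqB}), observe that by $\|X_1\|^2+\|X_2\|^2=\|X\|^2$ the event $\{\|X_2\|^2\ge(1+5\sqrt{\log(2/\epsilon)/(2n)})\,\|X_1\|^2\}$ coincides with $\{\|X_1\|^2\le\|X\|^2/(2+5\sqrt{\log(2/\epsilon)/(2n)})\}$. I would bound this via the lower tail of Lemma~\ref{estX}, namely $\Pr[\,2\|X_1\|^2\le(1-2.2\sqrt{\ln(2/\epsilon)/n})\,\|X\|^2\,]\le\epsilon$; it then remains only to verify the numerical inequality $(2+5\sqrt{\log(2/\epsilon)/(2n)})^{-1}\le\tfrac12(1-2.2\sqrt{\ln(2/\epsilon)/n})$, which holds under the hypothesis $\log(2/\epsilon)/(2n)\le 0.05$. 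Equivalently, one may note that $\|X_2\|^2/\|X_1\|^2$ has the law of $U'/U$ for independent $\chi^2$ variables $U,U'$ and apply Lemma~\ref{laurent-massart} to numerator and denominator with a union bound.

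For \eqref{ineqC}, write $\langle X_2,Y_2\rangle=\langle X,Y\rangle-\langle X_1,Y_1\rangle$ and apply Lemma~\ref{estXY} (with its parameter $x$ chosen of order $\log(1/\epsilon)$ so that $8e^{-x}\le\epsilon$) to get, except with probability at most $\epsilon$, $2\langle X_1,Y_1\rangle\le\langle X,Y\rangle+1.85\sqrt{x/n}\,(\|X\|^2+\|Y\|^2)$, hence $\langle X_2,Y_2\rangle\ge\langle X_1,Y_1\rangle-1.85\sqrt{x/n}\,(\|X\|^2+\|Y\|^2)$. To turn $\|X\|^2+\|Y\|^2$ into $\|X_1\|^2+\|Y_1\|^2$, apply the lower tail of Lemma~\ref{estX} to $X$ and to $Y$ (each failing with probability at most $\epsilon/2$): then $\|X\|^2+\|Y\|^2\le\frac{2}{1-2.2\sqrt{\ln(4/\epsilon)/n}}(\|X_1\|^2+\|Y_1\|^2)$. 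A union bound over the three failure events gives total failure at most $2\epsilon$, and under the hypothesis $\log(2/\epsilon)/(2n)\le 0.05$ all prefactors can be absorbed into the constant $\tfrac92$, yielding $\langle X_2,Y_2\rangle\ge\langle X_1,Y_1\rangle-\tfrac92\sqrt{\log(2/\epsilon)/(2n)}(\|X_1\|^2+\|Y_1\|^2)$. Removing the conditioning from the first step completes the argument.

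All of the conceptual work lives in the reduction of the first step; once the problem is phrased as projections of fixed vectors onto a random subspace, the three inequalities are immediate consequences of Lemmas~\ref{estX}, \ref{estXY} and a union bound. The only point requiring genuine care is the verification of the explicit constants ($5$ in \eqref{ineqA}/\eqref{ineqB} and $\tfrac92$ in \eqref{ineqC}) under the quantitative assumption $\log(2/\epsilon)/(2n)\le 0.05$; here one must keep track of whether the $\chi^2$ variables in play carry $n$ or $2n$ degrees of freedom and of the slack lost by bounding numerator and denominator separately, but this is a routine numerical check rather than a conceptual difficulty.
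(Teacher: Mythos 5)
Your reduction in the first step (condition on the invariants $\|X\|^2,\|Y\|^2,\langle X,Y\rangle$, use rotation invariance to replace $(X,Y)$ by fixed vectors hit by a Haar rotation, so that $(X_1,Y_1)$ and $(X_2,Y_2)$ become the two halves of a randomly split pair) is sound and is indeed the setting in which the paper works. The gap is in the quantitative steps, which is where the actual content of this lemma lies. For Inequality \ref{ineqA}: writing $u=\sqrt{\log(2/\epsilon)/(2n)}$, your target event is $\bigl\{2\|X_1\|^2\le \tfrac{2}{2+5u}\|X\|^2\bigr\}$, and $\tfrac{2}{2+5u}=\tfrac{1}{1+2.5u}\ge 1-2.5u$, whereas the lower tail of Lemma \ref{estX} only controls $\bigl\{2\|X_1\|^2\le (1-2.2\sqrt{\ln(2/\epsilon)/n})\,\|X\|^2\bigr\}$ with $2.2\sqrt{\ln(2/\epsilon)/n}\approx 3.1\,u$. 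Hence your event strictly \emph{contains} the event controlled by the lemma, the claimed numerical inequality $(2+5u)^{-1}\le\tfrac12(1-2.2\sqrt{\ln(2/\epsilon)/n})$ is false for every admissible $u>0$, and no bound follows from a single application of the lower tail. The route you relegate to a parenthetical aside --- treating $\|X_2\|^2/\|X_1\|^2$ as $U'/U$ for two $\chi^2$ variables and applying Lemma \ref{laurent-massart} to numerator and denominator separately --- is not ``equivalent'' to your containment argument; it is the correct argument, and it is essentially what the paper does by invoking Lemma B.1 of Ref.~\cite{LGR13}, where the constant $5$ arises from $\frac{1+2\gamma+2\gamma^2}{1-2\gamma}\le 1+5\gamma$ with $\gamma=\sqrt{\log(2/\epsilon)/(2n)}$ (a two-tail ratio bound, not a one-tail bound).

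For Inequality \ref{ineqC} the constants also do not close. Lemma \ref{estXY} is normalized as $1.85\sqrt{x/n}=1.85\sqrt{2}\,\sqrt{x/(2n)}\approx 2.6\sqrt{x/(2n)}$, so after multiplying by the factor $2$ from converting $\|X\|^2+\|Y\|^2$ into $\|X_1\|^2+\|Y_1\|^2$ you already have a coefficient $\approx 5.2\sqrt{x/(2n)}$ with $x=\ln(8/\epsilon)\ge\log(2/\epsilon)$, which exceeds $\tfrac92\sqrt{\log(2/\epsilon)/(2n)}$ even before dividing by $1-2.2\sqrt{\ln(4/\epsilon)/n}<1$; the claim that ``all prefactors can be absorbed into $\tfrac92$'' therefore fails for all parameters allowed by the hypothesis, not just marginally. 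The paper instead obtains Inequality \ref{ineqC} by the polarization identity applied to $X_1\pm Y_1$: it combines Inequalities \ref{ineqA}, \ref{ineqB} with their lower-tail analogues (the bounds with prefactor $1-4\sqrt{\log(2/\epsilon)/(2n)}$ stated in its proof), uses $\|X_1\pm Y_1\|^2\le 2(\|X_1\|^2+\|Y_1\|^2)$, and a union bound over the two relevant events; this yields exactly the constant $\tfrac92=\tfrac{2(4+5)}{4}$ and failure probability $2\epsilon$. To repair your write-up you would need to (i) replace the containment argument for \ref{ineqA}--\ref{ineqB} by the two-tail ratio bound and carry out its constant check, and (ii) derive \ref{ineqC} from those ratio bounds via polarization rather than from Lemma \ref{estXY}.
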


\begin{proof}
The first two inequalities are a direct application of Lemma $B.1$ in Ref.~\cite{LGR13} and of the observation that for $\gamma \in [0, 0.05]$, the following inequality holds:
\begin{align}
\frac{1+2\gamma+2\gamma^2}{1-2\gamma} \geq 1+5\gamma.
\end{align}
Similarly, one can show that the following inequalities also hold:
\begin{align}
\mathrm{Pr}\left[ \|X_2\|^2 \leq \left[1-4\sqrt{\frac{\log(2/\epsilon)}{2n}}\right] \|X_1\|^2\right] \leq \epsilon, \label{ineqD}\\
\mathrm{Pr}\left[ \|Y_2\|^2 \leq \left[1-4\sqrt{\frac{\log(2/\epsilon)}{2n}}\right]  \|Y_1\|^2\right] \leq \epsilon. \label{ineqE}
\end{align}
Using the same strategy as in the proof of Lemma \ref{estXY}, one can apply Inequalities \ref{ineqA}, \ref{ineqB}, \ref{ineqD} and \ref{ineqE} to vectors $X_i \pm Y_i$, which  immediately gives Inequality \ref{ineqC} thanks to the union bound. 
\end{proof}

We will also need techniques to bound the expectation of some random variables, and not only the values of the random variables. 

Let $p(x,y)$ be a probability distribution defined on $[0, \infty[^2$ such that 
\begin{align}
\int_{0}^a \int_b^{\infty} p(x,y) \mathrm{d}x \mathrm{d}y \leq \epsilon(a,b). 
\end{align}
We define two sets $A$ and $B_\delta$ for the random variable $X$:
\begin{align}
A & := \left\{x \in [0 ,a] \right\} \quad \text{and} \quad
B_\delta  := \left\{ x \: : \: \int_{0}^\infty y p(y|x) \mathrm{d}y \geq b + \delta \right\}.
\end{align}

\begin{lemma}
\label{bound-var}
Let $(b_k)_{k=1 ..\infty}$ a nondecreasing sequence such that $b_0=b$ and $\lim_{k\rightarrow \infty} = \infty$, then
 \begin{align}
 \mathbbm{P}[ A \cap B_\delta] \leq \frac{1}{\delta} \sum_{k=1}^\infty b_{k+1} \epsilon(a, b_k).
\end{align}
\end{lemma}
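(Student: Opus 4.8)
The plan is to decompose the event $A \cap B_\delta$ according to which dyadic-type band $[b_k, b_{k+1})$ the conditional expectation $m(x) := \int_0^\infty y\, p(y|x)\,\mathrm{d}y$ falls into, and then apply a conditional Markov inequality band by band. First I would write
\begin{align}
\mathbbm{P}[A \cap B_\delta] = \sum_{k=1}^\infty \mathbbm{P}\!\left[ x \in A,\ b_k \leq m(x) < b_{k+1} \right],
\end{align}
using that on $B_\delta$ one has $m(x) \geq b+\delta > b = b_0$, so the bands $[b_k, b_{k+1})$ for $k \geq 1$ cover $B_\delta$ (and the sequence exhausts $[b,\infty)$ since $b_k \to \infty$).

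Next, on each band I would bound the contribution of $x$ by relating it to the tail of $Y$. The key step: if $x \in A$ and $m(x) \geq b_k$, then by Markov's inequality applied to the conditional law $p(\cdot\,|x)$,
\begin{align}
\int_{b_k}^\infty p(y|x)\,\mathrm{d}y \;\geq\; \int_{b_k}^\infty \frac{y}{b_{k+1}}\, p(y|x)\,\mathrm{d}y \;\cdot\; \mathbbm{1}[m(x) < b_{k+1}] \;+\;\text{(handling of the upper band)},
\end{align}
so more carefully: for $x$ with $b_k \le m(x) < b_{k+1}$ we get $\int_{b_k}^\infty p(y|x)\,\mathrm{d}y \geq \big(m(x) - b_k\big)/ \big(b_{k+1}-b_k\big)$ — but this is not yet uniformly positive, so instead I would use the cleaner route: integrating $m(x) \geq b_k$ against the indicator of the band and the constraint $x\in A$ gives $\int_{A \cap \{m \ge b_k\}} \big(\int_{b_k}^\infty y\, p(y|x)\,\mathrm{d}y\big)\,\mathrm{d}x \le \int_{b_k}^\infty\!\!\int_0^a y\,p(x,y)\,\mathrm{d}x\,\mathrm{d}y$. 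The right side is controlled by $\epsilon(a,\cdot)$ after a summation by parts or a crude bound $y \le \sum_{j\ge k} (b_{j+1}-b_j)\mathbbm{1}[y \ge b_j] + b_k$, which turns the $y$-weighted tail integral into $\sum_{j \ge k}$ of the plain tail masses $\epsilon(a, b_j)$, each weighted by an increment. Combining with the Markov step $\delta \cdot \mathbbm{P}[A\cap B_\delta \cap \{b_k \le m < b_{k+1}\}] \le \int_{\{b_k \le m\}} (m(x)-b)\,\mathrm{d}x$ restricted appropriately, one collects the bands and arrives at $\mathbbm{P}[A\cap B_\delta] \le \frac{1}{\delta}\sum_{k=1}^\infty b_{k+1}\,\epsilon(a,b_k)$.

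The main obstacle I expect is the bookkeeping that converts the $y$-weighted tail integral $\int_{b_k}^\infty y\,p(x,y)\,\mathrm{d}y$ into a sum of the given two-sided tail bounds $\epsilon(a,b_k)$ with the correct coefficients $b_{k+1}$: one must choose the right telescoping/summation-by-parts identity (writing $y = b_1 + \sum_{j\ge 1}(b_{j+1}-b_j)\mathbbm{1}[y \ge b_{j+1}]$ or similar) so that the weights land on $b_{k+1}$ rather than $b_k$ or $b_{k+1}-b_k$, and so that no band is double-counted. Everything else — the conditional Markov inequality, Fubini to swap the $x$ and $y$ integrations, and the observation that $B_\delta \subseteq \{m \ge b_1\}$ — is routine. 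I would assume the measurability of $m(x)$ and finiteness of the relevant integrals without comment, as is standard.
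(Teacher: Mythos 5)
Your ingredients are the right ones, and they are in fact the same as the paper's: (i) on $A\cap B_\delta$ the conditional mean $m(x):=\int_0^\infty y\,p(y|x)\,\mathrm{d}y$ exceeds $b+\delta$, so $\delta\,\mathbbm{P}[A\cap B_\delta]\le\int_{A\cap B_\delta}(m(x)-b)\,p(x)\,\mathrm{d}x$; and (ii) the excess $m(x)-b$ is controlled by the $y$-mass above $b$, which is sliced into the bands $[b_k,b_{k+1})$, where one bounds $y\le b_{k+1}$ and pays the tail probability $\epsilon(a,b_k)$ per band. The paper does exactly this in one stroke: it sandwiches $\int_{A\cap B_\delta}p(x)\,m(x)\,\mathrm{d}x$ between $(b+\delta)\,\mathbbm{P}[A\cap B_\delta]$ from below and $b\,\mathbbm{P}[A\cap B_\delta]+\sum_{k}b_{k+1}\,\epsilon(a,b_k)$ from above, and subtracts.

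The assembly you describe, however, has a genuine gap --- the very bookkeeping you flag as the ``main obstacle'' and leave unresolved. You partition $A\cap B_\delta$ according to which band $m(x)$ lies in, bound the band-$k$ term by an integral over $\{m\ge b_k\}$ (not over the band itself), convert that integral into a tail sum $\sum_{j\ge k}$ of the masses $\epsilon(a,b_j)$, and then ``collect the bands''. Summed over $k$, each $\epsilon(a,b_j)$ is then counted once for every $k\le j$, so this route yields something like $\frac1\delta\sum_{k}\sum_{j\ge k}b_{j+1}\,\epsilon(a,b_j)$, which is not the claimed bound and need not even converge when the claimed bound does. The missing observation is that no partition of $x$ by the value of $m(x)$ is needed at all: apply the Markov step once over all of $A\cap B_\delta$ to get $\delta\,\mathbbm{P}[A\cap B_\delta]\le\int_0^a\!\int_b^\infty y\,p(x,y)\,\mathrm{d}y\,\mathrm{d}x$, and then perform a single layer-cake decomposition in $y$ over $[b,\infty)=\bigcup_k[b_k,b_{k+1})$; the cross terms disappear and the weights $b_{k+1}$ come out directly. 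A smaller point: your opening identity is also off, since for a general nondecreasing sequence the bands with $k\ge1$ only cover $\{m\ge b_1\}$, so the portion of $B_\delta$ with $b+\delta\le m(x)<b_1$ is lost; one must keep the $k=0$ band. To be fair, the paper's own statement and proof silently drop the corresponding term $b_1\,\epsilon(a,b_0)$ (the sum should really start at $k=0$), so this blemish is inherited from the lemma as stated rather than introduced by you.
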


\begin{proof}
We wish to compute the probability that $x \in A \cap  B_\delta$. 
Let us first compute the expectation of $Y$ over that set:
\begin{align}
\int_{x \in A \cap B_{\delta} } \mathrm{d}x p(x) \int_0^\infty y p(y|x) \mathrm{d}y & = \int_{x \in A\cap B_{\delta}} \mathrm{d}x p(x) \left[ \int_0^b y p(y|x) \mathrm{d}y + \sum_{k=1}^\infty \int_{b_k}^{b_{k+1}} y p(y|x) \mathrm{d}y  \right] \\
&\leq \int_{x \in A\cap B_{\delta}} b  p(x) \mathrm{d}x   + \sum_{k=1}^\infty b_{k+1} \epsilon(a, b_k) \\
&= b \mathbbm{P}[x \in A\cap B_{\delta} ]       + \sum_{k=1}^\infty b_{k+1} \epsilon(a, b_k).
\end{align}
On the other hand, we know that:
\begin{align}
\int_{x \in A \cap B_\delta} \mathrm{d}x p(x) \int_0^\infty y p(y|x) \mathrm{d}y  \geq (b+\delta)  \mathbbm{P}[x \in A \cap B_\delta].
\end{align}
Putting both inequalities together yields:
\begin{align}
(b+\delta)  \mathbbm{P}[x \in A \cap B_\delta] \leq b \mathbbm{P}[x \in A\cap B_{\delta} ]       + \sum_{k=1}^\infty b_{k+1} \epsilon(a, b_k),
\end{align} 
which completes the proof.
\end{proof}

Let us define two sets $C$ and $D_\delta$ for the random variable $X$:
\begin{align}
C  := \left\{x \in [a, \infty[ \right\} \quad \text{and} \quad 
D_\delta  := \left\{ x \: : \: \int_{0}^\infty y p(y|x) \mathrm{d}y \leq b - \delta \right\}.
\end{align}
We have the following lemma. 
\begin{lemma}
\label{lemma-bound-cov}
If $p(x,y)$ is such that $ \int_{a}^\infty \int_0^{b} p(x,y) \mathrm{d}x \mathrm{d}y \leq \epsilon_2(a,b)$,
then
\begin{align}
 \mathbbm{P}[ C \cap D_\delta] \leq \frac{b \epsilon_2(a,b)}{\delta}.
\end{align}
\end{lemma}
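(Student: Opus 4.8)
The plan is to run a Markov-type argument that mirrors (but is simpler than) the proof of Lemma~\ref{bound-var}: the key observation is that if the conditional mean of $Y$ given $x$ is smaller than $b-\delta$, then $Y$ must land in $[0,b]$ with conditional probability at least $\delta/b$, and this can be plugged directly into the hypothesis $\int_a^\infty\int_0^b p(x,y)\,\mathrm{d}x\,\mathrm{d}y\le\epsilon_2(a,b)$.

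First I would fix $x\in D_\delta$ and lower-bound the conditional probability $\Pr[Y\le b\mid x]=\int_0^b p(y|x)\,\mathrm{d}y$. Using $\int_0^\infty p(y|x)\,\mathrm{d}y=1$ and the defining inequality of $D_\delta$, namely $\int_0^\infty y\,p(y|x)\,\mathrm{d}y\le b-\delta$, we get
\begin{align}
\delta\;\le\; b-\int_0^\infty y\,p(y|x)\,\mathrm{d}y\;=\;\int_0^\infty (b-y)\,p(y|x)\,\mathrm{d}y\;\le\;\int_0^b (b-y)\,p(y|x)\,\mathrm{d}y\;\le\; b\int_0^b p(y|x)\,\mathrm{d}y,
\end{align}
where the second inequality discards the nonpositive contribution of $y\ge b$ and the last uses $b-y\le b$ on $[0,b]$. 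Hence $\Pr[Y\le b\mid x]\ge \delta/b$ for every $x\in D_\delta$ (the set being empty, and the claim trivial, when $\delta\ge b$).

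Next I would integrate this pointwise bound over $x\in C\cap D_\delta$ against the marginal $p(x)$, recalling $C=\{x\ge a\}$:
\begin{align}
\epsilon_2(a,b)\;\ge\;\int_a^\infty\int_0^b p(x,y)\,\mathrm{d}y\,\mathrm{d}x\;\ge\;\int_{x\in C\cap D_\delta} p(x)\int_0^b p(y|x)\,\mathrm{d}y\,\mathrm{d}x\;\ge\;\frac{\delta}{b}\,\mathbbm{P}[C\cap D_\delta].
\end{align}
Rearranging gives $\mathbbm{P}[C\cap D_\delta]\le b\,\epsilon_2(a,b)/\delta$, which is the stated bound.

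I do not anticipate any genuine obstacle: the argument is a one-shot Markov/averaging estimate, and the only points requiring care are splitting $\int_0^\infty (b-y)p(y|x)\,\mathrm{d}y$ at $y=b$ with the correct sign and keeping the orientation of each inequality consistent. Compared with Lemma~\ref{bound-var}, no dyadic sequence $(b_k)$ is needed here because we only need $Y$ to be \emph{small} with noticeable probability rather than to control its upper tail.
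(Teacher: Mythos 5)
Your proof is correct and is essentially the paper's argument: both hinge on the same Markov-type observation that a conditional mean below $b-\delta$ forces $Y\le b$ with conditional probability at least $\delta/b$, combined with the tail hypothesis on $\int_a^\infty\int_0^b p(x,y)$. The only difference is organizational — you establish the pointwise bound $\Pr[Y\le b\mid x]\ge\delta/b$ first and then integrate over $C\cap D_\delta$, whereas the paper integrates the conditional expectation over $C\cap D_\delta$ and compares aggregate upper and lower bounds — so no substantive comparison is needed.
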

\begin{proof}
We wish to compute the probability that $x \in C \cap  D_\delta$. 
Let us first compute the expectation of $Y$ over that set:
\begin{align}
\int_{x \in C \cap D_{\delta} } \mathrm{d}x p(x) \int_0^\infty y p(y|x) \mathrm{d}y & = \int_{x \in C\cap D_{\delta}} \mathrm{d}x p(x) \left[ \int_0^{b} y p(y|x) \mathrm{d}y + \int_{b}^{\infty} y p(y|x) \mathrm{d}y  \right] \\
& = b \mathbbm{P}[x \in C\cap D_{\delta} ] + \int_{x \in C\cap D_{\delta}} \mathrm{d}x p(x) \left[ \int_0^{b} (y-b) p(y|x) \mathrm{d}y+ \int_{b}^{\infty} (y-b) p(y|x) \mathrm{d}y  \right] \\
&\geq b \mathbbm{P}[x \in C\cap D_{\delta} ] - b \epsilon_2(a,b)
\end{align}
On the other hand, we know that:
\begin{align}
\int_{x \in C \cap D_\delta} \mathrm{d}x p(x) \int_0^\infty y p(y|x) \mathrm{d}y  \leq (b-\delta)  \mathbbm{P}[x \in C \cap D_\delta].
\end{align}
Putting both inequalities together yields:
\begin{align}
b \mathbbm{P}[x \in C\cap D_{\delta} ] - b \epsilon_2(a,b) \leq(b-\delta)  \mathbbm{P}[x \in C \cap D_\delta],
\end{align} 
which concludes the proof. 
\end{proof}

We are now in a position to prove our main result regarding parameter estimation. 
%%%%%

\subsection{Probability of the bad event}

%%%

Consider that the PE test passed, i.e. that $[\gamma_a \leq \Sigma_a^{\max} ] \wedge [\gamma_b \leq \Sigma_b^{\max} ] \wedge [\gamma_c \geq \Sigma_c^{\min} ]$. The problematic cases are the ones where either $A_1$ or $A_2$ obtains values not compatible with their own parameter estimation procedure (i.e. they would abort their own protocol), or when these procedures fail (i.e. the protocol did not abort but the estimation is incorrect). 
For instance, the problematic cases for the estimation of $A_1$ and $A_2$ variances correspond to 
\begin{align}
E_{\mathrm{bad}}^{\|X\|^2} := [\|X_1\|^2 \geq a] \vee [\|X_2\|^2 \geq a] \vee [ (\mathbbm{E} \|X_2\|^2 \geq b) \wedge (\|X_1\|^2 \leq a)]  \vee [ (\mathbbm{E} \|X_1\|^2 \geq b) \wedge( \|X_2\|^2 \leq a)],
\end{align}
where $a$ and $b$ will be optimized later.
Similarly, one can define:
\begin{align}
E_{\mathrm{bad}}^{\|Y\|^2} := [\|Y_1\|^2 \geq a] \vee [\|Y_2\|^2 \geq a] \vee [ (\mathbbm{E} \|Y_2\|^2 \geq b) \wedge (\|Y_1\|^2 \leq a)]  \vee [ (\mathbbm{E} \|Y_1\|^2 \geq b) \wedge( \|Y_2\|^2 \leq a)]
\end{align}
which is the bad event for the estimation of $B_1$ and $B_2$'s variances.  
Finally, the bad event for the correlations is:
\begin{align}
E_{\mathrm{bad}}^{\langle X,Y\rangle} := [\langle X_1, Y_1\rangle \leq c] \vee [\langle X_2, Y_2\rangle \leq c] \vee [ (\mathbbm{E} \langle X_2, Y_2\rangle \leq d) \wedge (\langle X_1, Y_1\rangle \geq c)]  \vee [ (\mathbbm{E} \langle X_1, Y_1\rangle \leq d) \wedge( \langle X_2, Y_2\rangle \geq c)].
\end{align}

\begin{theo}
\label{main-parameter-estimation}
The probability of the bad event $E_{\mathrm{bad}}^{\\|X\|^2} \vee E_{\mathrm{bad}}^{\|Y\|^2} \vee E_{\mathrm{bad}}^{\langle X,Y\rangle}$ is upper bounded by $\epsilon$ for the following choice of parameters:
\begin{align}
a &  = \frac{1}{2} \left[1+1.5\sqrt{\frac{\log (36/\epsilon)}{n}}\right] \|X\|^2  \quad \text{or} \quad a = \frac{1}{2} \left[1+1.5\sqrt{\frac{\log (36/\epsilon)}{n}}\right] \|Y\|^2,\\
b & = a \left[ 1 + \frac{360}{\epsilon} \exp[-n/25]\right],\\
c &= \frac{1}{2} \langle X, Y\rangle - \sqrt{\frac{\log(72/\epsilon)}{n}} ( \|X\|^2 + \|Y\|^2),\\
d & = c - 2(\|X\|^2 + \|Y\|^2) \sqrt{\frac{8 \log(18/\epsilon)}{n}}.
\end{align}
\end{theo}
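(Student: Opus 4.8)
The plan is to notice that $E_{\mathrm{bad}}^{\|X\|^2}\vee E_{\mathrm{bad}}^{\|Y\|^2}\vee E_{\mathrm{bad}}^{\langle X,Y\rangle}$ is, by construction, a union of twelve elementary events — for each of the three quantities $\|X\|^2,\|Y\|^2,\langle X,Y\rangle$ there are two \emph{threshold} events (the value measured on one of the two halves lies on the wrong side of the test bound) and two \emph{expectation} events (the true averaged second moment of one half is on the wrong side of its bound while the value measured on the other half is not) — and then to bound each of the twelve by one of the concentration results already proved, distributing the budget so that $12\cdot(\epsilon/18)=\tfrac23\epsilon\le\epsilon$; the constants $36,72,18,360$ in the statement are precisely the reciprocals of the resulting per-event budgets. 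First I would record the reduction from Section~\ref{section-PE}: since $\mathrm{Sym}$ of Eq.~\ref{sym-map} commutes with heterodyne detection one may measure $\rho_{AB}^{2n}$ first and rotate afterwards, so that $(X_1,Y_1)$ is the projection of $(X,Y)$ onto a uniformly random real subspace of dimension $2n$ and $(X_2,Y_2)$ the projection onto its complement; in particular $\|X_1\|^2+\|X_2\|^2=\|X\|^2$, $\|Y_1\|^2+\|Y_2\|^2=\|Y\|^2$, and the joint heterodyne distribution is rotation invariant, which is exactly the hypothesis of Lemma~\ref{X1givesX2}.

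For the threshold events on the norms I would apply the upper tail of Lemma~\ref{estX} with its failure parameter set to $\epsilon/18$ (so $2/(\epsilon/18)=36/\epsilon$): this bounds $\mathrm{Pr}[\|X_1\|^2\ge a]$ and, since $X_2$ is the projection onto the complementary — again uniformly random of dimension $n$ — subspace, $\mathrm{Pr}[\|X_2\|^2\ge a]$, each by $\epsilon/18$, and likewise for $\|Y_1\|^2,\|Y_2\|^2$ with the analogous constant built from $\|Y\|^2$. For the two threshold events on the correlation I would use the first displayed consequence of Lemma~\ref{estXY}, Eq.~\ref{bound-c}, with $x=\log(72/\epsilon)$, which gives $\mathrm{Pr}[\langle X_i,Y_i\rangle\le c]\le 4e^{-x}=\epsilon/18$ for $i=1,2$ (the case $i=2$ by exchanging the two halves).

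The expectation events are the real work. For $(\mathbbm{E}[\|X_2\|^2\mid\|X_1\|^2]\ge b)\wedge(\|X_1\|^2\le a)$ I would invoke Lemma~\ref{bound-var} with $x=\|X_1\|^2$, $y=\|X_2\|^2$, reference level $a$, and slack $\delta=b-a=a\,\tfrac{360}{\epsilon}e^{-n/25}$. The tail function $\epsilon(a,t)=\mathrm{Pr}[\|X_1\|^2\le a,\ \|X_2\|^2\ge t]$ is, for $t$ not too far above $a$, exponentially small in $n$ by Inequality~\ref{ineqA} of Lemma~\ref{X1givesX2} (of the form $2\exp(-cn(t/a-1)^2)$), while for $t>\|X\|^2$ it vanishes because $\|X_2\|^2\le\|X\|^2$, so the intermediate range is closed by applying the lower tail of Lemma~\ref{estX} to $\|X_1\|^2=\|X\|^2-\|X_2\|^2$; one then picks a nondecreasing sequence $(b_k)$ with $b_0=a$ (e.g.\ $b_k=a(1+k/\sqrt2)$), checks that $\tfrac1\delta\sum_k b_{k+1}\epsilon(a,b_k)$ is a geometrically dominated series of value at most $\epsilon/18$ for the stated $\delta$, and concludes by Lemma~\ref{bound-var}. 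The same argument with $X_1\leftrightarrow X_2$ and then with $Y$ in place of $X$ disposes of the remaining three expectation events for the norms. For the two expectation events on the correlation I would repeat this through Lemma~\ref{lemma-bound-cov} (the one-sided lower-deviation version) together with Inequality~\ref{ineqC} of Lemma~\ref{X1givesX2}, expanding $4\langle X_i,Y_i\rangle=\|X_i+Y_i\|^2-\|X_i-Y_i\|^2$ and feeding in the bounds already obtained for $X_i\pm Y_i$; with $\delta=c-d=2(\|X\|^2+\|Y\|^2)\sqrt{8\log(18/\epsilon)/n}$ this again yields $\epsilon/18$ per event. Summing the twelve contributions gives the claimed $\le\epsilon$.

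The step I expect to be the main obstacle is exactly the one producing the constant $\tfrac{360}{\epsilon}e^{-n/25}$: converting ``the conditional expectation of the complementary quantity is large while the measured quantity is small'' into a probability bound forces one to stitch together the Gaussian-type tail of Lemma~\ref{X1givesX2} (valid only in a restricted range of $t/a$), the exact identity $\|X_1\|^2+\|X_2\|^2=\|X\|^2$, and both tails of Lemma~\ref{estX}, and then to sum the series of Lemma~\ref{bound-var} with a sequence $(b_k)$ chosen so that its geometric tail is absorbed by $\delta$ — the value $360/\epsilon$ is precisely the slack needed so that the dominant ($k=1$) term of that series stays within budget. Everything else is a routine union bound over the preceding lemmas.
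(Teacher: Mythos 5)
Your overall skeleton coincides with the paper's proof: split the bad event into threshold and expectation sub-events, give each a budget of $\epsilon/18$, handle the norm thresholds with the upper tail of Lemma~\ref{estX} (whence the constant $36$), the correlation thresholds with Eq.~\ref{bound-c} (whence $72$), and the expectation events with Lemmas~\ref{bound-var} and \ref{lemma-bound-cov}; using Inequality~\ref{ineqC} instead of the difference bound $\Pr[|\langle X_1,Y_1\rangle-\langle X_2,Y_2\rangle|\geq\cdots]\leq 8e^{-x}$ that the paper takes from Lemma~\ref{estXY} (with $d=c-2\delta$, plus the check $\delta\geq 4c\epsilon/9$ to absorb the prefactor $8c/\delta$) is only a cosmetic variation. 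The genuine gap is in the step you yourself single out as the main obstacle: the norm expectation events. You propose to truncate the tail function $\epsilon(a,t)=\Pr[\|X_1\|^2\leq a,\ \|X_2\|^2\geq t]$ at $t=\|X\|^2$ because ``$\|X_2\|^2\leq\|X\|^2$'', closing the intermediate range with the lower tail of Lemma~\ref{estX} applied to $\|X_1\|^2=\|X\|^2-\|X_2\|^2$. This conflates two probability spaces. The quantity in $E_{\mathrm{bad}}^{\|X\|^2}$ is the conditional expectation $\mathbbm{E}[\,\|X_2\|^2\mid X_1,Y_1]$, i.e.\ the second moment of the \emph{still unmeasured} half $\tilde{\rho}_2^n$ given only the first-half outcomes; relative to that conditioning, $\|X\|^2$ is not a constant (it equals $\|X_1\|^2+\|X_2\|^2$ and is itself random), and $\|X_2\|^2$ admits no a priori cap --- this is exactly the pathology of the example $\sigma=(1-\epsilon)|0\rangle\langle 0|+\epsilon|N\rangle\langle N|$ that the symmetrization-plus-tail-sum argument is designed to defeat. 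If you condition on the total in order to make your truncation and your data-dependent sequence $b_k$ legitimate, then the conditional expectation controlled by Lemma~\ref{bound-var} degenerates to $\|X\|^2-\|X_1\|^2$, which is no longer the covariance of $\tilde{\rho}_2^n$ entering the security statement, so the bound you obtain is about the wrong object.

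The paper needs no truncation: it rewrites the joint event as a ratio event, $\Pr[\|X_1\|^2\leq a\wedge\|X_2\|^2\geq y]\leq\Pr\!\left[\|X_2\|^2\geq\tfrac{y}{a}\|X_1\|^2\right]\leq 2\exp\!\left(-2n\left[\tfrac{y-a}{5a}\right]^2\right)$, a bound depending only on the ratio $y/a$ and hence valid for arbitrarily large $y$ (no lower tail of Lemma~\ref{estX} is used anywhere in this proof), then sums the full series with $b_k=b+ka$ inside Lemma~\ref{bound-var} to get $\tfrac{20a}{\delta}e^{-n/25}$, and finally sets $\delta=b-a=a\,\tfrac{360}{\epsilon}e^{-n/25}$ so that this equals $\epsilon/18$; the $1/\epsilon$ in $b$ thus comes from dividing the ($\epsilon$-independent) series sum by $\delta$, not from keeping a dominant term of the series within budget. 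Your instinct that Inequality~\ref{ineqA} only covers a restricted range of ratios is fair (the paper silently extends the Gaussian-type form to all scales), but the repair must be a ratio bound valid at every scale, e.g.\ the unrestricted version of Lemma B.1 of Ref.~\cite{LGR13} behind Lemma~\ref{X1givesX2}, not a truncation at $\|X\|^2$. With that replacement, and the calibration of $\delta$ as above, your union-bound bookkeeping ($12$ sub-events at $\epsilon/18$ each) goes through as in the paper.
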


\begin{proof}
Using Lemma \ref{estX}, one can define $\epsilon_1(a,y) = 2 \exp\left(-2n\left[ \frac{y-a}{5a}\right]^2 \right)$ such that
\begin{align}
\mathrm{Pr} \left[ (\|X_1\|^2 \leq a) \wedge ( \|X_2\|^2  \geq y )\right] &\leq \mathrm{Pr} \left[ \|X_2\|^2 \geq \frac{y}{a} \|X_1\|^2 \right]\\
& \leq \epsilon_1(a,y). 
\end{align}
Let us define $b_k = b+ k a$ such that $b_0 = b \geq a$ and $\lim_{k \rightarrow \infty} b_k = \infty$. Let us introduce $\gamma = b/a$.
One has:
\begin{align}
\sum_{k=1}^\infty b_{k+1} \epsilon_1(a, b_k) & = 2a \sum_{k=0}^\infty (\gamma  +2 + k ) \exp\left[ -\frac{n}{25} (\gamma + k)^2  \right]\\
& \leq 2a \exp\left[ -n \gamma^2/25 \right] \sum_{k=0}^\infty (\gamma  +2 + k ) \exp\left[ -\frac{2n\gamma k}{25}  \right]\\
& \leq 8a(\gamma+2) \exp\left[ -n \gamma^2/25 \right]
\end{align}
where the last inequality holds provided that $\exp\left[ -\frac{2n\gamma }{25}  \right] \leq 1/2$.
Using Lemma \ref{bound-var}, one obtains that 
\begin{align}
\label{choice-delta}
\mathrm{Pr} \left[ (\|X_1\|^2 \leq a) \wedge ( \mathbbm{E}\|X_2\|^2 \geq a + \delta )\right] \leq \frac{20a}{\delta} \exp\left[ -n/25 \right].
\end{align}
We exploit Lemma \ref{lemma-bound-cov} which states that, for $c \geq d$, 
\begin{align}
\mathrm{Pr}\left[ (\mathbbm{E} \langle X_2, Y_2\rangle \leq d - \delta) \wedge (\langle X_1, Y_1\rangle \geq c) \right] &\leq \frac{d}{\delta} \cdot \mathrm{Pr}\left[ (\langle X_2, Y_2\rangle \leq d) \wedge (\langle X_1, Y_1\rangle \geq c) \right]\\
& \leq \frac{d}{\delta} \cdot \mathrm{Pr}\left[ \langle X_1, Y_1\rangle  - \langle X_2, Y_2\rangle\geq c-d \right]\\
& \leq \frac{8d}{\delta}\exp\left[- n \left[\frac{c-d}{2( \|X\|^2 + \|Y\|^2)}\right]^2\right].
\end{align}
Let us choose $d = c-2 \delta$. Then,
\begin{align}
\mathrm{Pr}\left[ (\mathbbm{E} \langle X_2, Y_2\rangle \leq c -2 \delta) \wedge (\langle X_1, Y_1\rangle \geq c) \right] & \leq \frac{8c}{\delta}\exp\left[- n \left[\frac{\delta}{2( \|X\|^2 + \|Y\|^2)}\right]^2\right]. 
\label{choice-delta2}
\end{align}
Let us choose the values of $a$, $b$, $c$ and $d$ such that each of the 18 individual events has a probability $\epsilon/18$. 
The probability $p_{\mathrm{bad}}^{\mathrm{PE}}$ for the parameter estimation is then:
\begin{align}
p_{\mathrm{bad}}^{\mathrm{PE}} \leq \mathrm{Pr} \left[ E_{\mathrm{bad}}^{\|X\|^2}\right] +\mathrm{Pr} \left[ E_{\mathrm{bad}}^{\|Y\|^2}\right]  + \mathrm{Pr} \left[ E_{\mathrm{bad}}^{\langle X,Y\rangle} \right]   \leq \epsilon.
\end{align}
This is achieved for:
\begin{align}
a &  = \frac{1}{2} \left[1+1.5\sqrt{\frac{\log (36/\epsilon)}{n}}\right] \|X\|^2\\
b & = a \left[ 1 + \frac{360}{\epsilon} \exp[-n/25]\right]\\
c &= \frac{1}{2} \langle X, Y\rangle - \sqrt{\frac{\log(72/\epsilon)}{n}} ( \|X\|^2 + \|Y\|^2)\\
d & = c - 2(\|X\|^2 + \|Y\|^2) \sqrt{\frac{8 \log(18/\epsilon)}{n}}
\end{align}
where the second equality is a consequence of Eq.~\ref{choice-delta}, the third equality is a consequence of Eq.~\ref{bound-c} and the last equality results from Eq.~\ref{choice-delta2} (and noting that $\delta \geq 4c\epsilon/9$ for reasonable parameters). 
\end{proof}

Let us finally define:
\begin{align}
\gamma_a &:= \frac{1}{2n} \left[ 1 + 2\sqrt{\frac{\log(36/\epsilon_{\mathrm{PE}})}{n}}\right] \|X\|^2-1,\\
\gamma_b &:= \frac{1}{2n} \left[ 1 + 2\sqrt{\frac{\log(36/\epsilon_{\mathrm{PE}})}{n}}\right] \|Y\|^2-1,\\
\gamma_c &:= \frac{1}{2n} \langle X, Y\rangle - 5 \sqrt{\frac{\log (8/\epsilon_{\mathrm{PE}})}{n^3}}(\|X\|^2 + \|Y\|^2).
\end{align}
where we choose a regime of $n$ such that 
\begin{align}
 \left[1+1.5\sqrt{\frac{\log (36/\epsilon)}{n}}\right]  \left[1+\frac{360}{\epsilon} \exp[-n/25]\right] \leq  1+2\sqrt{\frac{\log (36/\epsilon)}{n}},
\end{align}
which is the case in all practical situations. 
Moreover, we note that 
\begin{align}
\sqrt{32 \log(18/\epsilon)} + \sqrt{\log(72/\epsilon)}&\leq \sqrt{2(32  \log(18/\epsilon) + \log(72/\epsilon))}\\
& \leq 10 \sqrt{\log(8/\epsilon)} 
\end{align}
We have the following corollary. 
\begin{cor}
The probability that the Parameter Estimation Test passes,  that is, $[\gamma_a \leq \Sigma_a^{\max} ] \wedge [\gamma_b \leq \Sigma_b^{\max} ] \wedge [\gamma_c \geq \Sigma_c^{\min} ] $ and that Eve's information is larger than the one computed for the Gaussian state with covariance matrix characterized by $\Sigma_a^{\max}, \Sigma_b^{\max}$ and $\Sigma_c^{\min}$ is upper-bounded by $\epsilon_{\mathrm{PE}}$. 
\end{cor}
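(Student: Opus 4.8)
The plan is to show that the event in the statement --- ``the PE test passes \emph{and} Eve's Holevo information exceeds the value $f(\Sigma_a^{\max},\Sigma_b^{\max},\Sigma_c^{\min})$ computed for the Gaussian state with those covariance parameters'' --- is contained in the bad event $E_{\mathrm{bad}}^{\|X\|^2}\vee E_{\mathrm{bad}}^{\|Y\|^2}\vee E_{\mathrm{bad}}^{\langle X,Y\rangle}$ of Theorem~\ref{main-parameter-estimation} with the parameters $a,b,c,d$ given there and $\epsilon=\epsilon_{\mathrm{PE}}$. The bound $\epsilon_{\mathrm{PE}}$ then follows immediately from that theorem (which itself rests on Lemmas~\ref{estX}, \ref{estXY} and \ref{X1givesX2}), and this is precisely the restatement of Theorem~\ref{PE-test}.

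First I would reduce ``Eve's information is underestimated'' to a statement purely about covariance matrices. In the GedankenExperiment of Fig.~\ref{Gedanken} the full raw key is split into the keys extracted from the two halves $\tilde\rho_1^n$ and $\tilde\rho_2^n$, so Eve's information is at most the sum of her information on each half. By the extremality of Gaussian states for the Holevo information and the reduction to a symmetrized covariance matrix carried out in Section~\ref{tools}, her information on half $j$ is at most $f(\Sigma_a^{(j)},\Sigma_b^{(j)},\Sigma_c^{(j)})$ per copy, where $(\Sigma_a^{(j)},\Sigma_b^{(j)},\Sigma_c^{(j)})$ are the averaged covariance parameters of that half. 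Since $f$ is increasing in its first two arguments and decreasing in the third, $f(\Sigma_a^{(j)},\Sigma_b^{(j)},\Sigma_c^{(j)})\le f(\Sigma_a^{\max},\Sigma_b^{\max},\Sigma_c^{\min})$ as soon as $\Sigma_a^{(j)}\le\Sigma_a^{\max}$, $\Sigma_b^{(j)}\le\Sigma_b^{\max}$ and $\Sigma_c^{(j)}\ge\Sigma_c^{\min}$. Hence the bound can fail only if, for $j=1$ or $j=2$, one of $\Sigma_a^{(j)}>\Sigma_a^{\max}$, $\Sigma_b^{(j)}>\Sigma_b^{\max}$, $\Sigma_c^{(j)}<\Sigma_c^{\min}$ holds.

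Next I would match these violations with the disjuncts of $E_{\mathrm{bad}}$. Alice simulates the symmetrization and the distribution of modes to the four agents by taking $X_1,X_2$ (resp.\ $Y_1,Y_2$) to be the projections of $X$ (resp.\ $Y$) onto complementary random $n$-dimensional subspaces --- legitimate because the symmetrization map commutes with heterodyne detection --- and then $\mathbbm{E}\|X_j\|^2$, $\mathbbm{E}\|Y_j\|^2$ and $\mathbbm{E}\langle X_j,Y_j\rangle$ are the affine images of $\Sigma_a^{(j)}$, $\Sigma_b^{(j)}$, $\Sigma_c^{(j)}$ obtained by adding one unit of vacuum noise per mode (with the sign convention fixed earlier for the correlations). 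On the event that the PE test passes, i.e.\ $\gamma_a\le\Sigma_a^{\max}$, $\gamma_b\le\Sigma_b^{\max}$, $\gamma_c\ge\Sigma_c^{\min}$, the definitions of $\gamma_a,\gamma_b,\gamma_c$ and of $a,b,c,d$ are calibrated so that $b$ is dominated by the threshold corresponding to $\gamma_a$ (resp.\ $\gamma_b$) and $d$ by the one corresponding to $\gamma_c$: this uses the regime of $n$ for which $[1+1.5\sqrt{\log(36/\epsilon_{\mathrm{PE}})/n}][1+\tfrac{360}{\epsilon_{\mathrm{PE}}}e^{-n/25}]\le 1+2\sqrt{\log(36/\epsilon_{\mathrm{PE}})/n}$, together with $\sqrt{32\log(18/\epsilon_{\mathrm{PE}})}+\sqrt{\log(72/\epsilon_{\mathrm{PE}})}\le 10\sqrt{\log(8/\epsilon_{\mathrm{PE}})}$. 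Therefore $\Sigma_a^{(j)}>\Sigma_a^{\max}$ forces $\mathbbm{E}\|X_j\|^2\ge b$, $\Sigma_b^{(j)}>\Sigma_b^{\max}$ forces $\mathbbm{E}\|Y_j\|^2\ge b$, and $\Sigma_c^{(j)}<\Sigma_c^{\min}$ forces $\mathbbm{E}\langle X_j,Y_j\rangle\le d$; combined with the trivially true dichotomy on the measured norm of the \emph{other} half, each of these lies in one of the disjuncts of $E_{\mathrm{bad}}^{\|X\|^2}$, $E_{\mathrm{bad}}^{\|Y\|^2}$ or $E_{\mathrm{bad}}^{\langle X,Y\rangle}$. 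Applying Theorem~\ref{main-parameter-estimation} with $\epsilon=\epsilon_{\mathrm{PE}}$ bounds the probability of the union by $\epsilon_{\mathrm{PE}}$, which is the claim.

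The main obstacle is precisely this calibration: one must check that the confidence factors built into $\gamma_a,\gamma_b,\gamma_c$ leave enough slack over the analysis thresholds $b$ and $d$ --- which is exactly where the regime restriction on $n$ and the elementary inequality on the logarithms enter --- and that the normalization and sign conventions relating $X,Y$ to the symmetrized covariance matrix, to the function $f$, and to the definitions of $\Sigma_a^{\max},\Sigma_b^{\max},\Sigma_c^{\min}$ are mutually consistent, so that ``PE test passes'' genuinely certifies the three covariance bounds up to failure probability $\epsilon_{\mathrm{PE}}$.
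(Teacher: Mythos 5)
Your proposal is correct and follows essentially the same route as the paper: it reduces the underestimation of Eve's information to a violation of the covariance bounds on one of the two halves of the symmetrized state (via the GedankenExperiment split, subadditivity of the Holevo information, Gaussian extremality and monotonicity of $f$), shows this event is contained in the bad event of Theorem~\ref{main-parameter-estimation} using the same calibration of $\gamma_a,\gamma_b,\gamma_c$ against $a,b,c,d$ (the regime condition on $n$ and the logarithmic inequality), and invokes that theorem with $\epsilon=\epsilon_{\mathrm{PE}}$. This matches the paper's own derivation of the corollary.
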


\subsection{Analysis of the Parameter Estimation}

Consider the simulated protocol where Alice and Bob first symmetrize their state, then distribute the first half of their modes to players $A_1$ and $B_2$ and the second half of their modes to $A_2$ and $B_2$. 
The goal of $A_1$ and $B_1$ is to measure their own modes and infer some confidence region for the covariance matrix of the state shared by $A_2$ and $B_2$. Similarly, $A_2$ and $B_2$ measure their state and try to infer a confidence region for the covariance matrix of $A_1$ and $B_1$. 

The bad event that we considered in the previous section corresponds to the case where the parameter estimation test of the true protocol passes but a problem occurs for either one of the virtual parameter estimation tests, i.e. one does not pass (in which case the virtual protocol would have aborted) or both virtual tests pass but their conclusion is not valid (i.e. the covariance matrix of the remaining modes is not in the predicted confidence region). 

Recall that if the state $\tau_{12}$ represents a quantum state on $2n$ modes, and if $\tau_1$ (resp. $\tau_2$) corresponds to the first $n$ (resp. last $n$) modes, then the strong subadditivity of the Holevo information implies that $\chi(Y_1 Y_2;E)_{\tau_{12}} \leq \chi(Y_1;E)_{\tau_1} + \chi(Y_2;E)_{\tau_2}$.

In particular, if we denote by $\tau^{\mathrm{PE}} = \frac{1}{p^{\mathrm{PE}}} \Pi \rho^{\otimes (2n)} \Pi$ (with $p^{\mathrm{PE}} = \mathrm{tr}\,(\Pi \rho^{\otimes (2n)})$) the quantum state conditioned on passing the Parameter Estimation test, then with probability $1-\epsilon/p$, the Holevo information between the string $Y$ corresponding to Bob's heterodyne measurement results and Eve's quantum register is:
\begin{align}
\chi(Y;E)_{\tau^{\mathrm{PE}}} &\leq 2n f(\Sigma_a^{\max}, \Sigma_b^{\max}, \Sigma_c^{\min}), \quad \text{except with probability} \, \epsilon_{\mathrm{PE}}/p.
\end{align}

\section{Security of the protocol $\mathcal{E}_0$ against collective attacks}
\label{sec-proof}

In this section, we finally put the various pieces of the proof together and show that the protocol $\mathcal{E}_0$ described above is secure against collective attacks. 
We will use superscripts $m$, $n$ or $2n$ to recall the length of the various strings.
Our goal is to obtain a lower bound on the smooth min-entropy of the string $U$ given Eve's information, that is, her quantum system $E$ and the public transcript $C$ of the protocol, when the protocol did not abort. The smoothing parameter $\epsilon_{\mathrm{sm}}$ will be optimized later. 

Let us write $\tau = \frac{1}{p} P \rho^{\otimes (2n)} P$ with $p := \mathrm{tr}\, \left( P \rho^{\otimes (2n)}\right)$, the quantum state conditioned on both the Parameter Estimation and the Error Correction tests passing.  Without loss of generality, this passing probability can be assumed to be at least $\epsilon$ since if the abort probability is greater than $1-\epsilon$, the protocol is automatically $\epsilon$-secure. 
Let us define some error parameter $\epsilon_{\mathrm{err}} := \epsilon_{\mathrm{PE}} + \epsilon_{\mathrm{cor}}$.
The analysis of the Parameter Estimation test implies that, except with probability at most $\epsilon_{\mathrm{err}}/p$, one can upper bound the quantum mutual information between $Y$ and $E$ as follows:
\begin{align}
\chi(Y;E)_{\tau} \leq 2n f(\Sigma_a^{\max}, \Sigma_b^{\max}, \Sigma_c^{\min}). 
\end{align}
This is because for a given covariance matrix, the Holevo information is maximized for the Gaussian state with same second moment \cite{GC06} and the function $f$ exactly computes the Holevo information for the Gaussian state with covariance matrix $\bigoplus_{i=1}^{2n} \left[\begin{smallmatrix}
\Sigma_a^{\mathrm{max}} \mathbbm{1}_2 & \Sigma_c^{\mathrm{min}} \sigma_z \\
\Sigma_c^{\mathrm{min}}  \sigma_z & \Sigma_b^{\mathrm{max}}\mathbbm{1}_2 \\
\end{smallmatrix}\right]$.

The chain rule for the smooth min-entropy gives:
\begin{align}
H_{\min}^{\epsilon_{\mathrm{sm}}}( U^{m} | E C)_{\tau} \geq H_{\min}^{\epsilon_{\mathrm{sm}}}( U^{m} | E )_{\tau} - \log |C|.
\end{align}
Recall that $\log |C| = \mathrm{leak}_{\mathrm{EC}}$ since we neglect here the communication where Bob reveals the values of $\|Y\|$ and $\|Y-\hat{Y}\|$ to Alice (which only consumes a small constant number of bits). 
The smooth min-entropy can be bounded further by:
\begin{align}
H_{\min}^{\epsilon_{\mathrm{sm}}}( U^{m} | E)_{\tau} &\geq H(U^{m} |E)_{\tau}  - \Delta_{\mathrm{AEP} } \label{eq-AEP}\\
&= H(U^{m})_{\tau} - \chi(U^{m}; E)_{\tau}  -  \Delta_{\mathrm{AEP} } \label{def-holevo}\\
&\geq4n \hat{H}_{\mathrm{MLE}}(U) -  \Delta_{\mathrm{ent}}   - \chi(Y^{2n}; E)_{\tau}  -  \Delta_{\mathrm{AEP} }\label{data-proc},
\end{align}
where the last inequality holds except with probability $\epsilon_{\mathrm{ent}}/\epsilon$.
Eq.~\ref{eq-AEP} results from the Asymptotic Equipartition Property and $\Delta_{\mathrm{AEP} } $ is defined in Eq.~\ref{def-delta-AEP}.
Eq.~\ref{def-holevo} results from the definition of the Holevo information between the classical string $U^m$ and Eve's quantum system.
Eq.~\ref{data-proc} results from the bound on the entropy as a function of the empirical entropy and from the data-processing inequality and $\Delta_{\mathrm{ent}}$ is defined in Eq.~\ref{def-delta-ent}.

Finally, except with probability $(\epsilon_{\mathrm{err}}+\epsilon_{\mathrm{ent}})/\epsilon$, the smooth min-entropy is lower-bounded as follows:
\begin{align}
H_{\min}^{\epsilon_{\mathrm{sm}}}( U^{m} | E C)_{\tau} \geq 4n \hat{H}_{\mathrm{MLE}}(U) -  2n f(\Sigma_a^{\max}, \Sigma_b^{\max}, \Sigma_c^{\min}) -\mathrm{leak}_{\mathrm{EC}} - \Delta_{\mathrm{ent}} -   \Delta_{\mathrm{AEP} }.
\end{align}

The Leftover Hash Lemma concludes the proof of Theorem \ref{key-rate-theorem-2}.

\section{A security proof against general attacks without active symmetrization}
\label{general-attacks}

We sketch here how to use the Postselection technique \cite{LGR13} to obtain a security proof against general attacks for the case of a direct reconciliation without applying any active symmetrization. The idea is very similar to the Parameter Estimation step of $\mathcal{E}_0$. Alice prepares $4n$ two-mode squeezed vacuum states and sends the appropriate modes to Bob. 
Then Alice and Bob simulate the symmetrization of their state, split their respective modes into two sets of size $2n$ and give the corresponding modes to additional players $A_1$ and $A_2$ for Alice, and $B_1$ and $B_2$ for Bob. Players $A_1$ and $B_1$ perform the energy test, which conditioned on passing, guarantees that the protocol $\mathcal{E}_0$ applied by $A_2$ and $B_2$ will be secure, and similarly players $A_2$ and $B_2$ perform an energy test which gives some security guarantees for the keys obtained by $A_1$ and $B_1$ when applying $\mathcal{E}_0$ to their respective modes. 
This simulation can be done by Alice and Bob as before because the energy test commutes with the measurements used for the key distillation.

The obvious open question is whether an active symmetrization is really needed in the case of a reverse reconciliation. We believe that this is not the case, and that this is a artifact of the current version of the Postselection technique. It is very natural to conjecture that a better version of the Postselection technique, exploiting all the symmetries of the protocol in phase-space could be sufficient to prove the security of the protocol without any need for active symmetrization. 

Finally, it should be noted that a similar kind of active symmetrization seems to be also required for protocols such as BB84 if Alice and Bob discard all the events where Bob's detectors did not register a photon. Indeed, postselecting on the events where the detectors clicked is a form of reverse reconciliation which breaks \textit{a priori} the symmetry of the protocol, i.e. its invariance under joint permutations of the subsystems of Alice and Bob in this case. For BB84, performing an active symmetrization of the data means drawing uniformly at random a permutation of size $n$, a task with complexity $O(n \log n)$. Of course, this is doable in theory, but not really practical. 

For this reason, the question of symmetrization is not relevant only for CV QKD, but also for protocols such as BB84 when considered in a practical setting where losses are large.

\end{widetext}


\begin{thebibliography}{44}
\expandafter\ifx\csname natexlab\endcsname\relax\def\natexlab#1{#1}\fi
\expandafter\ifx\csname bibnamefont\endcsname\relax
  \def\bibnamefont#1{#1}\fi
\expandafter\ifx\csname bibfnamefont\endcsname\relax
  \def\bibfnamefont#1{#1}\fi
\expandafter\ifx\csname citenamefont\endcsname\relax
  \def\citenamefont#1{#1}\fi
\expandafter\ifx\csname url\endcsname\relax
  \def\url#1{\texttt{#1}}\fi
\expandafter\ifx\csname urlprefix\endcsname\relax\def\urlprefix{URL }\fi
\providecommand{\bibinfo}[2]{#2}
\providecommand{\eprint}[2][]{\url{#2}}

\bibitem[{\citenamefont{Scarani et~al.}(2009)\citenamefont{Scarani,
  Bechmann-Pasquinucci, Cerf, Du{\v{s}}ek, L{\"u}tkenhaus, and Peev}}]{SBC09}
\bibinfo{author}{\bibfnamefont{V.}~\bibnamefont{Scarani}},
  \bibinfo{author}{\bibfnamefont{H.}~\bibnamefont{Bechmann-Pasquinucci}},
  \bibinfo{author}{\bibfnamefont{N.~J.} \bibnamefont{Cerf}},
  \bibinfo{author}{\bibfnamefont{M.}~\bibnamefont{Du{\v{s}}ek}},
  \bibinfo{author}{\bibfnamefont{N.}~\bibnamefont{L{\"u}tkenhaus}},
  \bibnamefont{and} \bibinfo{author}{\bibfnamefont{M.}~\bibnamefont{Peev}},
  \bibinfo{journal}{Rev. Mod. Phys.} \textbf{\bibinfo{volume}{81}},
  \bibinfo{pages}{1301} (\bibinfo{year}{2009}).

\bibitem[{\citenamefont{Canetti}(2001)}]{can01}
\bibinfo{author}{\bibfnamefont{R.}~\bibnamefont{Canetti}}, in
  \emph{\bibinfo{booktitle}{42nd IEEE Symposium on Foundations of Computer
  Science}} (\bibinfo{year}{2001}), pp. \bibinfo{pages}{136--145}.

\bibitem[{\citenamefont{M{\"u}ller-Quade and Renner}(2009)}]{MR09}
\bibinfo{author}{\bibfnamefont{J.}~\bibnamefont{M{\"u}ller-Quade}}
  \bibnamefont{and} \bibinfo{author}{\bibfnamefont{R.}~\bibnamefont{Renner}},
  \bibinfo{journal}{New J. Phys.} \textbf{\bibinfo{volume}{11}},
  \bibinfo{pages}{085006} (\bibinfo{year}{2009}).

\bibitem[{\citenamefont{Bennett and Brassard}(1984)}]{BB84}
\bibinfo{author}{\bibfnamefont{C.}~\bibnamefont{Bennett}} \bibnamefont{and}
  \bibinfo{author}{\bibfnamefont{G.}~\bibnamefont{Brassard}}, in
  \emph{\bibinfo{booktitle}{Proceedings of IEEE International Conference on
  Computers, Systems and Signal Processing}} (\bibinfo{year}{1984}), vol.
  \bibinfo{volume}{175}.

\bibitem[{\citenamefont{Renner}(2007)}]{ren07}
\bibinfo{author}{\bibfnamefont{R.}~\bibnamefont{Renner}},
  \bibinfo{journal}{Nat. Phys.} \textbf{\bibinfo{volume}{3}},
  \bibinfo{pages}{645} (\bibinfo{year}{2007}).

\bibitem[{\citenamefont{Christandl et~al.}(2009)\citenamefont{Christandl,
  K\"onig, and Renner}}]{CKR09}
\bibinfo{author}{\bibfnamefont{M.}~\bibnamefont{Christandl}},
  \bibinfo{author}{\bibfnamefont{R.}~\bibnamefont{K\"onig}}, \bibnamefont{and}
  \bibinfo{author}{\bibfnamefont{R.}~\bibnamefont{Renner}},
  \bibinfo{journal}{Phys. Rev. Lett.} \textbf{\bibinfo{volume}{102}},
  \bibinfo{pages}{020504} (\bibinfo{year}{2009}).

\bibitem[{\citenamefont{Tomamichel et~al.}(2012)\citenamefont{Tomamichel, Lim,
  Gisin, and Renner}}]{TLG12}
\bibinfo{author}{\bibfnamefont{M.}~\bibnamefont{Tomamichel}},
  \bibinfo{author}{\bibfnamefont{C.~C.~W.} \bibnamefont{Lim}},
  \bibinfo{author}{\bibfnamefont{N.}~\bibnamefont{Gisin}}, \bibnamefont{and}
  \bibinfo{author}{\bibfnamefont{R.}~\bibnamefont{Renner}},
  \bibinfo{journal}{Nat. Comm.} \textbf{\bibinfo{volume}{3}},
  \bibinfo{pages}{634} (\bibinfo{year}{2012}).

\bibitem[{\citenamefont{Weedbrook et~al.}(2012)\citenamefont{Weedbrook,
  Pirandola, Garc{\'i}a-Patr\'on, Cerf, Ralph, Shapiro, and Lloyd}}]{WPG12}
\bibinfo{author}{\bibfnamefont{C.}~\bibnamefont{Weedbrook}},
  \bibinfo{author}{\bibfnamefont{S.}~\bibnamefont{Pirandola}},
  \bibinfo{author}{\bibfnamefont{R.}~\bibnamefont{Garc{\'i}a-Patr\'on}},
  \bibinfo{author}{\bibfnamefont{N.~J.} \bibnamefont{Cerf}},
  \bibinfo{author}{\bibfnamefont{T.~C.} \bibnamefont{Ralph}},
  \bibinfo{author}{\bibfnamefont{J.~H.} \bibnamefont{Shapiro}},
  \bibnamefont{and} \bibinfo{author}{\bibfnamefont{S.}~\bibnamefont{Lloyd}},
  \bibinfo{journal}{{Rev. Mod. Phys.}} \textbf{\bibinfo{volume}{84}},
  \bibinfo{pages}{621} (\bibinfo{year}{2012}).

\bibitem[{\citenamefont{Grosshans and Grangier}(2002)}]{GG02}
\bibinfo{author}{\bibfnamefont{F.}~\bibnamefont{Grosshans}} \bibnamefont{and}
  \bibinfo{author}{\bibfnamefont{P.}~\bibnamefont{Grangier}},
  \bibinfo{journal}{{Phys. Rev. Lett.}} \textbf{\bibinfo{volume}{88}},
  \bibinfo{pages}{057902} (\bibinfo{year}{2002}).

\bibitem[{\citenamefont{Weedbrook et~al.}(2004)\citenamefont{Weedbrook, Lance,
  Bowen, Symul, Ralph, and Lam}}]{WLB04}
\bibinfo{author}{\bibfnamefont{C.}~\bibnamefont{Weedbrook}},
  \bibinfo{author}{\bibfnamefont{A.~M.} \bibnamefont{Lance}},
  \bibinfo{author}{\bibfnamefont{W.~P.} \bibnamefont{Bowen}},
  \bibinfo{author}{\bibfnamefont{T.}~\bibnamefont{Symul}},
  \bibinfo{author}{\bibfnamefont{T.~C.} \bibnamefont{Ralph}}, \bibnamefont{and}
  \bibinfo{author}{\bibfnamefont{P.~K.} \bibnamefont{Lam}},
  \bibinfo{journal}{{Phys. Rev. Lett.}} \textbf{\bibinfo{volume}{93}},
  \bibinfo{pages}{170504} (\bibinfo{year}{2004}).

\bibitem[{\citenamefont{Jouguet et~al.}(2013)\citenamefont{Jouguet,
  Kunz-Jacques, Leverrier, Grangier, and Diamanti}}]{JKL13}
\bibinfo{author}{\bibfnamefont{P.}~\bibnamefont{Jouguet}},
  \bibinfo{author}{\bibfnamefont{S.}~\bibnamefont{Kunz-Jacques}},
  \bibinfo{author}{\bibfnamefont{A.}~\bibnamefont{Leverrier}},
  \bibinfo{author}{\bibfnamefont{P.}~\bibnamefont{Grangier}}, \bibnamefont{and}
  \bibinfo{author}{\bibfnamefont{E.}~\bibnamefont{Diamanti}},
  \bibinfo{journal}{Nat. Photon.} \textbf{\bibinfo{volume}{7}},
  \bibinfo{pages}{378} (\bibinfo{year}{2013}).

\bibitem[{\citenamefont{Furrer et~al.}(2012)\citenamefont{Furrer, Franz, Berta,
  Leverrier, Scholz, Tomamichel, and Werner}}]{FBB12}
\bibinfo{author}{\bibfnamefont{F.}~\bibnamefont{Furrer}},
  \bibinfo{author}{\bibfnamefont{T.}~\bibnamefont{Franz}},
  \bibinfo{author}{\bibfnamefont{M.}~\bibnamefont{Berta}},
  \bibinfo{author}{\bibfnamefont{A.}~\bibnamefont{Leverrier}},
  \bibinfo{author}{\bibfnamefont{V.~B.} \bibnamefont{Scholz}},
  \bibinfo{author}{\bibfnamefont{M.}~\bibnamefont{Tomamichel}},
  \bibnamefont{and} \bibinfo{author}{\bibfnamefont{R.~F.}
  \bibnamefont{Werner}}, \bibinfo{journal}{{Phys. Rev. Lett.}}
  \textbf{\bibinfo{volume}{109}}, \bibinfo{pages}{100502}
  (\bibinfo{year}{2012}).

\bibitem[{\citenamefont{Furrer}(2014)}]{fur14}
\bibinfo{author}{\bibfnamefont{F.}~\bibnamefont{Furrer}},
  \bibinfo{journal}{arXiv preprint arXiv:1405.5965}  (\bibinfo{year}{2014}).

\bibitem[{\citenamefont{Gehring et~al.}(2014)\citenamefont{Gehring,
  H{\"a}ndchen, Duhme, Furrer, Franz, Pacher, Werner, and Schnabel}}]{GHD14}
\bibinfo{author}{\bibfnamefont{T.}~\bibnamefont{Gehring}},
  \bibinfo{author}{\bibfnamefont{V.}~\bibnamefont{H{\"a}ndchen}},
  \bibinfo{author}{\bibfnamefont{J.}~\bibnamefont{Duhme}},
  \bibinfo{author}{\bibfnamefont{F.}~\bibnamefont{Furrer}},
  \bibinfo{author}{\bibfnamefont{T.}~\bibnamefont{Franz}},
  \bibinfo{author}{\bibfnamefont{C.}~\bibnamefont{Pacher}},
  \bibinfo{author}{\bibfnamefont{R.~F.} \bibnamefont{Werner}},
  \bibnamefont{and} \bibinfo{author}{\bibfnamefont{R.}~\bibnamefont{Schnabel}},
  \bibinfo{journal}{arXiv preprint arXiv:1406.6174}  (\bibinfo{year}{2014}).

\bibitem[{\citenamefont{Berta et~al.}(2013)\citenamefont{Berta, Christandl,
  Furrer, Scholz, and Tomamichel}}]{BCF13}
\bibinfo{author}{\bibfnamefont{M.}~\bibnamefont{Berta}},
  \bibinfo{author}{\bibfnamefont{M.}~\bibnamefont{Christandl}},
  \bibinfo{author}{\bibfnamefont{F.}~\bibnamefont{Furrer}},
  \bibinfo{author}{\bibfnamefont{V.~B.} \bibnamefont{Scholz}},
  \bibnamefont{and}
  \bibinfo{author}{\bibfnamefont{M.}~\bibnamefont{Tomamichel}},
  \bibinfo{journal}{arXiv preprint arXiv:1308.4527}  (\bibinfo{year}{2013}).

\bibitem[{\citenamefont{Renner and Cirac}(2009)}]{RC09}
\bibinfo{author}{\bibfnamefont{R.}~\bibnamefont{Renner}} \bibnamefont{and}
  \bibinfo{author}{\bibfnamefont{J.~I.} \bibnamefont{Cirac}},
  \bibinfo{journal}{Phys. Rev. Lett.} \textbf{\bibinfo{volume}{102}},
  \bibinfo{pages}{110504} (\bibinfo{year}{2009}).

\bibitem[{\citenamefont{Leverrier et~al.}(2013)\citenamefont{Leverrier,
  Garc{\'i}a-Patr{\'o}n, Renner, and Cerf}}]{LGR13}
\bibinfo{author}{\bibfnamefont{A.}~\bibnamefont{Leverrier}},
  \bibinfo{author}{\bibfnamefont{R.}~\bibnamefont{Garc{\'i}a-Patr{\'o}n}},
  \bibinfo{author}{\bibfnamefont{R.}~\bibnamefont{Renner}}, \bibnamefont{and}
  \bibinfo{author}{\bibfnamefont{N.~J.} \bibnamefont{Cerf}},
  \bibinfo{journal}{Phys. Rev. Lett.} \textbf{\bibinfo{volume}{110}},
  \bibinfo{pages}{030502} (\bibinfo{year}{2013}).

\bibitem[{\citenamefont{Silberhorn et~al.}(2002)\citenamefont{Silberhorn,
  Ralph, L{\"u}tkenhaus, and Leuchs}}]{SRL02}
\bibinfo{author}{\bibfnamefont{C.}~\bibnamefont{Silberhorn}},
  \bibinfo{author}{\bibfnamefont{T.~C.} \bibnamefont{Ralph}},
  \bibinfo{author}{\bibfnamefont{N.}~\bibnamefont{L{\"u}tkenhaus}},
  \bibnamefont{and} \bibinfo{author}{\bibfnamefont{G.}~\bibnamefont{Leuchs}},
  \bibinfo{journal}{Phys. Rev. Lett.} \textbf{\bibinfo{volume}{89}},
  \bibinfo{pages}{167901} (\bibinfo{year}{2002}).

\bibitem[{\citenamefont{Wolf et~al.}(2006)\citenamefont{Wolf, Giedke, and
  Cirac}}]{WGC06}
\bibinfo{author}{\bibfnamefont{M.~M.} \bibnamefont{Wolf}},
  \bibinfo{author}{\bibfnamefont{G.}~\bibnamefont{Giedke}}, \bibnamefont{and}
  \bibinfo{author}{\bibfnamefont{J.~I.} \bibnamefont{Cirac}},
  \bibinfo{journal}{Phys. Rev. Lett.} \textbf{\bibinfo{volume}{96}},
  \bibinfo{pages}{080502} (\bibinfo{year}{2006}).

\bibitem[{\citenamefont{Garc{\'i}a-Patr{\'o}n and Cerf}(2006)}]{GC06}
\bibinfo{author}{\bibfnamefont{R.}~\bibnamefont{Garc{\'i}a-Patr{\'o}n}}
  \bibnamefont{and} \bibinfo{author}{\bibfnamefont{N.~J.} \bibnamefont{Cerf}},
  \bibinfo{journal}{Phys. Rev. Lett.} \textbf{\bibinfo{volume}{97}},
  \bibinfo{pages}{190503} (\bibinfo{year}{2006}).

\bibitem[{\citenamefont{Navascu\'{e}s et~al.}(2006)\citenamefont{Navascu\'{e}s,
  Grosshans, and Ac\'{\i}n}}]{NGA06}
\bibinfo{author}{\bibfnamefont{M.}~\bibnamefont{Navascu\'{e}s}},
  \bibinfo{author}{\bibfnamefont{F.}~\bibnamefont{Grosshans}},
  \bibnamefont{and}
  \bibinfo{author}{\bibfnamefont{A.}~\bibnamefont{Ac\'{\i}n}},
  \bibinfo{journal}{{Phys. Rev. Lett.}} \textbf{\bibinfo{volume}{97}},
  \bibinfo{pages}{190502} (\bibinfo{year}{2006}).

\bibitem[{\citenamefont{Pirandola et~al.}(2008)\citenamefont{Pirandola,
  Braunstein, and Lloyd}}]{PBL08}
\bibinfo{author}{\bibfnamefont{S.}~\bibnamefont{Pirandola}},
  \bibinfo{author}{\bibfnamefont{S.~L.} \bibnamefont{Braunstein}},
  \bibnamefont{and} \bibinfo{author}{\bibfnamefont{S.}~\bibnamefont{Lloyd}},
  \bibinfo{journal}{Phys. Rev. Lett.} \textbf{\bibinfo{volume}{101}},
  \bibinfo{pages}{200504} (\bibinfo{year}{2008}).

\bibitem[{\citenamefont{Leverrier et~al.}(2010)\citenamefont{Leverrier,
  Grosshans, and Grangier}}]{LGG10}
\bibinfo{author}{\bibfnamefont{A.}~\bibnamefont{Leverrier}},
  \bibinfo{author}{\bibfnamefont{F.}~\bibnamefont{Grosshans}},
  \bibnamefont{and} \bibinfo{author}{\bibfnamefont{P.}~\bibnamefont{Grangier}},
  \bibinfo{journal}{Phys. Rev. A} \textbf{\bibinfo{volume}{81}},
  \bibinfo{pages}{062343} (\bibinfo{year}{2010}).

\bibitem[{\citenamefont{Hayashi and Nakayama}(2014)}]{HN14}
\bibinfo{author}{\bibfnamefont{M.}~\bibnamefont{Hayashi}} \bibnamefont{and}
  \bibinfo{author}{\bibfnamefont{R.}~\bibnamefont{Nakayama}},
  \bibinfo{journal}{New J. Phys.} \textbf{\bibinfo{volume}{16}},
  \bibinfo{pages}{063009} (\bibinfo{year}{2014}).

\bibitem[{\citenamefont{Grosshans et~al.}(2003)\citenamefont{Grosshans,
  Van~Assche, Wenger, Brouri, Cerf, and Grangier}}]{GVW03}
\bibinfo{author}{\bibfnamefont{F.}~\bibnamefont{Grosshans}},
  \bibinfo{author}{\bibfnamefont{G.}~\bibnamefont{Van~Assche}},
  \bibinfo{author}{\bibfnamefont{J.}~\bibnamefont{Wenger}},
  \bibinfo{author}{\bibfnamefont{R.}~\bibnamefont{Brouri}},
  \bibinfo{author}{\bibfnamefont{N.}~\bibnamefont{Cerf}}, \bibnamefont{and}
  \bibinfo{author}{\bibfnamefont{P.}~\bibnamefont{Grangier}},
  \bibinfo{journal}{Nature} \textbf{\bibinfo{volume}{421}},
  \bibinfo{pages}{238} (\bibinfo{year}{2003}).

\bibitem[{\citenamefont{Fung et~al.}(2010)\citenamefont{Fung, Ma, and
  Chau}}]{FMC10}
\bibinfo{author}{\bibfnamefont{C.-H.} \bibnamefont{Fung}},
  \bibinfo{author}{\bibfnamefont{X.}~\bibnamefont{Ma}}, \bibnamefont{and}
  \bibinfo{author}{\bibfnamefont{H.}~\bibnamefont{Chau}},
  \bibinfo{journal}{Phys. Rev. A} \textbf{\bibinfo{volume}{81}},
  \bibinfo{pages}{012318} (\bibinfo{year}{2010}).

\bibitem[{\citenamefont{Jouguet et~al.}(2014)\citenamefont{Jouguet, Elkouss,
  and Kunz-Jacques}}]{JEK14}
\bibinfo{author}{\bibfnamefont{P.}~\bibnamefont{Jouguet}},
  \bibinfo{author}{\bibfnamefont{D.}~\bibnamefont{Elkouss}}, \bibnamefont{and}
  \bibinfo{author}{\bibfnamefont{S.}~\bibnamefont{Kunz-Jacques}},
  \bibinfo{journal}{Phys. Rev. A} \textbf{\bibinfo{volume}{90}},
  \bibinfo{pages}{042329} (\bibinfo{year}{2014}).

\bibitem[{\citenamefont{Christandl and Renner}(2012)}]{CR12}
\bibinfo{author}{\bibfnamefont{M.}~\bibnamefont{Christandl}} \bibnamefont{and}
  \bibinfo{author}{\bibfnamefont{R.}~\bibnamefont{Renner}},
  \bibinfo{journal}{{Phys. Rev. Lett.}} \textbf{\bibinfo{volume}{109}},
  \bibinfo{pages}{120403} (\bibinfo{year}{2012}).

\bibitem[{\citenamefont{Duan et~al.}(2000)\citenamefont{Duan, Giedke, Cirac,
  and Zoller}}]{DGC00}
\bibinfo{author}{\bibfnamefont{L.-M.} \bibnamefont{Duan}},
  \bibinfo{author}{\bibfnamefont{G.}~\bibnamefont{Giedke}},
  \bibinfo{author}{\bibfnamefont{J.~I.} \bibnamefont{Cirac}}, \bibnamefont{and}
  \bibinfo{author}{\bibfnamefont{P.}~\bibnamefont{Zoller}},
  \bibinfo{journal}{Phys. Rev. Lett.} \textbf{\bibinfo{volume}{84}},
  \bibinfo{pages}{2722} (\bibinfo{year}{2000}).

\bibitem[{\citenamefont{Lodewyck et~al.}(2007)\citenamefont{Lodewyck, Bloch,
  Garc{\'\i}a-Patr{\'o}n, Fossier, Karpov, Diamanti, Debuisschert, Cerf,
  Tualle-Brouri, McLaughlin et~al.}}]{LBG07}
\bibinfo{author}{\bibfnamefont{J.}~\bibnamefont{Lodewyck}},
  \bibinfo{author}{\bibfnamefont{M.}~\bibnamefont{Bloch}},
  \bibinfo{author}{\bibfnamefont{R.}~\bibnamefont{Garc{\'\i}a-Patr{\'o}n}},
  \bibinfo{author}{\bibfnamefont{S.}~\bibnamefont{Fossier}},
  \bibinfo{author}{\bibfnamefont{E.}~\bibnamefont{Karpov}},
  \bibinfo{author}{\bibfnamefont{E.}~\bibnamefont{Diamanti}},
  \bibinfo{author}{\bibfnamefont{T.}~\bibnamefont{Debuisschert}},
  \bibinfo{author}{\bibfnamefont{N.~J.} \bibnamefont{Cerf}},
  \bibinfo{author}{\bibfnamefont{R.}~\bibnamefont{Tualle-Brouri}},
  \bibinfo{author}{\bibfnamefont{S.~W.} \bibnamefont{McLaughlin}},
  \bibnamefont{et~al.}, \bibinfo{journal}{Phys. Rev. A}
  \textbf{\bibinfo{volume}{76}}, \bibinfo{pages}{042305}
  (\bibinfo{year}{2007}).

\bibitem[{\citenamefont{Leverrier et~al.}(2008)\citenamefont{Leverrier,
  All\'eaume, Boutros, Z\'emor, and Grangier}}]{LAB08}
\bibinfo{author}{\bibfnamefont{A.}~\bibnamefont{Leverrier}},
  \bibinfo{author}{\bibfnamefont{R.}~\bibnamefont{All\'eaume}},
  \bibinfo{author}{\bibfnamefont{J.}~\bibnamefont{Boutros}},
  \bibinfo{author}{\bibfnamefont{G.}~\bibnamefont{Z\'emor}}, \bibnamefont{and}
  \bibinfo{author}{\bibfnamefont{P.}~\bibnamefont{Grangier}},
  \bibinfo{journal}{Phys. Rev. A} \textbf{\bibinfo{volume}{77}},
  \bibinfo{pages}{042325} (\bibinfo{year}{2008}).

\bibitem[{\citenamefont{Van~Assche et~al.}(2004)\citenamefont{Van~Assche,
  Cardinal, and Cerf}}]{VCC04}
\bibinfo{author}{\bibfnamefont{G.}~\bibnamefont{Van~Assche}},
  \bibinfo{author}{\bibfnamefont{J.}~\bibnamefont{Cardinal}}, \bibnamefont{and}
  \bibinfo{author}{\bibfnamefont{N.~J.} \bibnamefont{Cerf}},
  \bibinfo{journal}{Information Theory, IEEE Transactions on}
  \textbf{\bibinfo{volume}{50}}, \bibinfo{pages}{394} (\bibinfo{year}{2004}).

\bibitem[{\citenamefont{Kremer et~al.}(1995)\citenamefont{Kremer, Nisan, and
  Ron}}]{KNR95}
\bibinfo{author}{\bibfnamefont{I.}~\bibnamefont{Kremer}},
  \bibinfo{author}{\bibfnamefont{N.}~\bibnamefont{Nisan}}, \bibnamefont{and}
  \bibinfo{author}{\bibfnamefont{D.}~\bibnamefont{Ron}}, in
  \emph{\bibinfo{booktitle}{Proceedings of the 27th annual ACM Symposium on
  Theory Of Computing}} (\bibinfo{year}{1995}), pp. \bibinfo{pages}{596--605}.

\bibitem[{\citenamefont{Bennett et~al.}(1995)\citenamefont{Bennett, Brassard,
  Cr{\'e}peau, and Maurer}}]{BBC95}
\bibinfo{author}{\bibfnamefont{C.~H.} \bibnamefont{Bennett}},
  \bibinfo{author}{\bibfnamefont{G.}~\bibnamefont{Brassard}},
  \bibinfo{author}{\bibfnamefont{C.}~\bibnamefont{Cr{\'e}peau}},
  \bibnamefont{and} \bibinfo{author}{\bibfnamefont{U.~M.}
  \bibnamefont{Maurer}}, \bibinfo{journal}{Information Theory, IEEE
  Transactions on} \textbf{\bibinfo{volume}{41}}, \bibinfo{pages}{1915}
  (\bibinfo{year}{1995}).

\bibitem[{\citenamefont{Renner and K{\"o}nig}(2005)}]{RK05}
\bibinfo{author}{\bibfnamefont{R.}~\bibnamefont{Renner}} \bibnamefont{and}
  \bibinfo{author}{\bibfnamefont{R.}~\bibnamefont{K{\"o}nig}}, in
  \emph{\bibinfo{booktitle}{Theory of Cryptography}}
  (\bibinfo{publisher}{Springer}, \bibinfo{year}{2005}), pp.
  \bibinfo{pages}{407--425}.

\bibitem[{\citenamefont{Renner}(2008)}]{ren05}
\bibinfo{author}{\bibfnamefont{R.}~\bibnamefont{Renner}},
  \bibinfo{journal}{International Journal of Quantum Information}
  \textbf{\bibinfo{volume}{6}}, \bibinfo{pages}{1} (\bibinfo{year}{2008}).

\bibitem[{\citenamefont{Tomamichel et~al.}(2011)\citenamefont{Tomamichel,
  Schaffner, Smith, and Renner}}]{TSS11}
\bibinfo{author}{\bibfnamefont{M.}~\bibnamefont{Tomamichel}},
  \bibinfo{author}{\bibfnamefont{C.}~\bibnamefont{Schaffner}},
  \bibinfo{author}{\bibfnamefont{A.}~\bibnamefont{Smith}}, \bibnamefont{and}
  \bibinfo{author}{\bibfnamefont{R.}~\bibnamefont{Renner}},
  \bibinfo{journal}{Information Theory, IEEE Transactions on}
  \textbf{\bibinfo{volume}{57}}, \bibinfo{pages}{5524} (\bibinfo{year}{2011}),
  ISSN \bibinfo{issn}{0018-9448}.

\bibitem[{\citenamefont{K{\"o}nig et~al.}(2009)\citenamefont{K{\"o}nig, Renner,
  and Schaffner}}]{KRS09}
\bibinfo{author}{\bibfnamefont{R.}~\bibnamefont{K{\"o}nig}},
  \bibinfo{author}{\bibfnamefont{R.}~\bibnamefont{Renner}}, \bibnamefont{and}
  \bibinfo{author}{\bibfnamefont{C.}~\bibnamefont{Schaffner}},
  \bibinfo{journal}{Information Theory, IEEE Transactions on}
  \textbf{\bibinfo{volume}{55}}, \bibinfo{pages}{4337} (\bibinfo{year}{2009}).

\bibitem[{\citenamefont{Tomamichel et~al.}(2009)\citenamefont{Tomamichel,
  Colbeck, and Renner}}]{TCR09}
\bibinfo{author}{\bibfnamefont{M.}~\bibnamefont{Tomamichel}},
  \bibinfo{author}{\bibfnamefont{R.}~\bibnamefont{Colbeck}}, \bibnamefont{and}
  \bibinfo{author}{\bibfnamefont{R.}~\bibnamefont{Renner}},
  \bibinfo{journal}{Information Theory, IEEE Transactions on}
  \textbf{\bibinfo{volume}{55}}, \bibinfo{pages}{5840} (\bibinfo{year}{2009}).

\bibitem[{\citenamefont{Berta et~al.}(2011)\citenamefont{Berta, Furrer, and
  Scholz}}]{BFS11}
\bibinfo{author}{\bibfnamefont{M.}~\bibnamefont{Berta}},
  \bibinfo{author}{\bibfnamefont{F.}~\bibnamefont{Furrer}}, \bibnamefont{and}
  \bibinfo{author}{\bibfnamefont{V.~B.} \bibnamefont{Scholz}},
  \bibinfo{journal}{arXiv preprint arXiv:1107.5460}  (\bibinfo{year}{2011}).

\bibitem[{\citenamefont{Tomamichel}(2012)}]{tom12}
\bibinfo{author}{\bibfnamefont{M.}~\bibnamefont{Tomamichel}}, Ph.D. thesis,
  \bibinfo{school}{PhD thesis, Department of Physics, ETH Zurich}
  (\bibinfo{year}{2012}).

\bibitem[{\citenamefont{Paninski}(2003)}]{pan03}
\bibinfo{author}{\bibfnamefont{L.}~\bibnamefont{Paninski}},
  \bibinfo{journal}{Neural Computation} \textbf{\bibinfo{volume}{15}},
  \bibinfo{pages}{1191} (\bibinfo{year}{2003}).

\bibitem[{\citenamefont{Antos and Kontoyiannis}(2001)}]{AK01}
\bibinfo{author}{\bibfnamefont{A.}~\bibnamefont{Antos}} \bibnamefont{and}
  \bibinfo{author}{\bibfnamefont{I.}~\bibnamefont{Kontoyiannis}},
  \bibinfo{journal}{Random Structures \& Algorithms}
  \textbf{\bibinfo{volume}{19}}, \bibinfo{pages}{163} (\bibinfo{year}{2001}).

\bibitem[{\citenamefont{Laurent and Massart}(2000)}]{LM00}
\bibinfo{author}{\bibfnamefont{B.}~\bibnamefont{Laurent}} \bibnamefont{and}
  \bibinfo{author}{\bibfnamefont{P.}~\bibnamefont{Massart}},
  \bibinfo{journal}{{The Annals of Statistics}} \textbf{\bibinfo{volume}{28}}
  (\bibinfo{year}{2000}).

\end{thebibliography}
\end{document}